\documentclass{lailapreprint}

\title{Equilibria and Learning in Competitive Market Making}

\author{%
  \Name{Matteo Onger} \Email{matteo.onger@unimi.it}\\
  \Name{Luigi Foscari} \Email{luigi.foscari@unimi.it}\\
  \addr Department of Computer Science, University of Milan
}

\ccsdesc[500]{Theory of computation~Algorithmic game theory}
\ccsdesc[500]{Theory of computation~Convergence and learning in games}
\ccsdesc[300]{Theory of computation~Solution concepts in game theory}
\ccsdesc[100]{Applied computing~Economics}

\begin{document}

\maketitle

\begin{abstract}
    We propose a market-making game played among \(N \ge 2\) strategic market makers and a single non-strategic representative trader. In a continuous quote space, standard finite-game arguments do not apply, while non-convex and discontinuous payoffs rule out classical continuous existence results and complicate learning. Nonetheless, we characterize pure-strategy Nash equilibria (PSNEs) and show that at least one PSNE always exists under minimal assumptions on the underlying asset value distribution, and that the equilibrium bid--ask spread widens strictly with asset value volatility. To obtain a tractable learning framework, we discretize prices and, under an additional regularity condition, show that the time-averaged play of no-internal-regret learners approaches the continuous Nash price envelope as the tick size vanishes. Finally, we empirically validate our theoretical results through numerical simulations.
\end{abstract}

\begin{keywords}
    Market making, Online learning, Equilibrium selection, Market microstructure
\end{keywords}



\section{Introduction} \label{sec:intro}

Classical models of price competition, such as the Bertrand oligopoly model~\citep{bertrand1883review}, elegantly explain why competition may drive prices toward marginal cost. However, applying this one-sided logic directly to financial market making overlooks a fundamental structural feature of liquidity provision: market makers operate simultaneously on both the buy and sell sides of the market, and their quotes actively shape the direction of incoming order flow.

This two-sided interaction is particularly prominent in financial markets characterized by \emph{inelastic trading demand}, such as commodity markets~\citep{bogmans2024power}, where overall trading volume does not respond to price changes and market participants simply choose whether to buy or sell. In conventional Bertrand competition, a firm's price reduction allows it to gain market share at the cost of a lower margin; in contrast, liquidity providers in inelastic markets post bid and ask quotes that establish a market \emph{mid-price}, which acts as a threshold determining whether buying or selling is advantageous relative to the asset's fundamental value, ultimately dictating trade direction and, indirectly, execution priority.

To study this mechanism, and how makers learn in it, we model liquidity provision as a static game played among \(N \ge 2\) strategic market makers and a single non-strategic representative trader. The former simultaneously post bid--ask quotes for a single asset with uncertain fundamental value, while the latter, upon observing the posted quotes, submits either a buy or a sell order according to a fixed decision rule that compares the realized fundamental value of the asset with the market mid-price induced by the quoted prices.

\paragraph{Contributions.}
By bridging Bertrand-style oligopoly~\citep{bertrand1883review} and classic market-making models \`{a} la~\citet{glosten1985bid}, our framework provides a tractable baseline for studying competition and learning dynamics in two-sided inelastic markets. To the best of our knowledge, this work provides the first theoretical and algorithmic analysis of market making in this setting. Our main contributions are summarized below:
\begin{itemize}
    \item \textbf{Nash Equilibrium Existence (\Cref{sec:equilibrium}):} We characterize pure-strategy Nash equilibria (PSNEs) in continuous quote spaces and prove that at least one PSNE always exists with minimal assumptions on the underlying asset value distribution. This equilibrium establishes a zero-profit competitive benchmark that guarantees perfect parity among all market makers, serving as an explicit reference model for competitive liquidity provision.
    
    \item \textbf{Impact of Volatility on Competitive Pricing (\Cref{subsec:asset_variance}):} We investigate how fundamental market uncertainty shapes equilibrium quotes. For location-scale families of asset distributions, we derive a structural scaling result: the equilibrium bid--ask spread scales linearly with asset value standard deviation. Crucially, this demonstrates that spread widenings can arise purely from competitive risk adjustments, rather than algorithmic collusion~\citep{calvano2020artificial, harrington2018developing} or a higher share of informed traders~\citep{glosten1985bid}.
    
    \item \textbf{Learning the Competitive Benchmark (\Cref{sec:learning}):} To obtain a tractable learning framework, we analyze learning over a discrete price grid with tick size \(\Delta\). Under an additional regularity condition, we prove that all grid correlated equilibria uniformly concentrate on the continuous Nash price envelope as \(\Delta \to 0\). Consequently, the time-averaged play of adaptive market makers using no-internal-regret learning algorithms is guaranteed to asymptotically approach the continuous competitive benchmark.
\end{itemize}
Finally, theoretical predictions and learning dynamics are empirically validated through numerical simulations in \Cref{sec:simulations}.


\subsection{Technical Challenges} \label{subsec:challenges}

Analyzing this market-making game presents distinct theoretical and algorithmic challenges:
\begin{itemize}
    \item \textbf{Two-Sided Coupling:} Updating a quote shifts the implied mid-price, directly coupling execution risk across both sides of the book. As a result, the game cannot be reduced to two independent Bertrand competitions. Indeed, since strategies consist of quote pairs, market makers can execute cross-side deviations (e.g., aggressively lowering the ask while retreating on the bid), thereby expanding the space of potentially profitable strategies in ways that break standard single-side equilibrium arguments.
    
    \item \textbf{Continuous vs.\ Discrete Tension:} Payoffs in this game are inherently non-convex and discontinuous. In continuous quote spaces, this rules out standard PSNE existence theorems and complicates learning dynamics. Discretizing prices resolves these hurdles, but it introduces grid-dependent equilibria that obscure the model's true structure and parameter dependencies.
\end{itemize}


\subsection{Related Work} \label{subsec:related_work}
 
This work connects to market microstructure, game theory, and learning in games; we review the most closely related literature.
 
\paragraph{Price Competition.}
Among classical oligopoly models~\citep{cournot1838recherches, bertrand1883review}, Bertrand competition is the closest benchmark to our setting: a marginal price improvement causes a discontinuous jump in demand, which complicates equilibrium existence. Pure-strategy equilibria can be recovered under additional assumptions, such as quasiconcave payoffs~\citep{dasgupta1986existence, reny1999existence} or suitable cost and demand conditions, possibly with multiple equilibria~\citep{dastidar1995existence}. Neither applies here: in particular, quasiconcavity fails for bid--ask pairs, since averaging two profitable one-sided deviations can leave a maker inactive on both sides. Our existence proof instead exploits the monotonicity of the midpoint operator via Tarski's theorem~\citep{tarski1955lattice}, as in games with strategic complementarities, where extremal equilibria likewise bracket the equilibrium set~\citep{milgrom1990rationalizability, vives1990nash}, although our game is not supermodular.
 
\paragraph{Market Microstructure.}
Seminal models study how informed order flow~\citep{glosten1985bid, kyle1985continuous, easley1987price} and inventory risk~\citep{ho1981optimal} shape spreads, typically assuming that dealers break even. Our zero-profit quotes mirror the upper-tail pricing of \citet{glosten1994electronic}, but arise as a Nash outcome among finitely many strategic makers, as in models of competing dealers~\citep{dennert1993price, biais2000competing, back2013strategic}. Unlike these works, which rely on noise traders to avoid no-trade outcomes~\citep{milgrom1982information}, we consider a fully informed trader with inelastic demand, so that quotes determine the direction of trade rather than its occurrence. Spreads are also well known to widen with volatility~\citep{copeland1983information, ho1981optimal}; we show that, within location-scale families, they scale exactly linearly. Finally, our grid game relates to the tick-size literature~\citep{anshuman1998market, cartea2022algorithms}, whose small-tick limit we analyze.
 
\paragraph{Learning in Games and Market Making.}
Our guarantees build on online learning~\citep{cesa2006prediction, orabona2023modern} and on the convergence of no-internal-regret dynamics to correlated equilibria~\citep{foster1997calibrated, hart2000simple, blum2007external}. Unlike \citet{stoltz2007learning}, who learn over compact strategy sets with continuous payoffs, we approximate our discontinuous game by grid games. In non-strategic environments, online learning has been used to learn market-making strategies~\citep{abernethy2013adaptive, cesa2025market, maran2026online}, via reductions to first-price auctions~\citep{achddou2021fast, cesa2024role, han2024optimal} and dynamic pricing~\citep{cesa2019dynamic, tullii2025improved}, while agent-based studies train makers via reinforcement learning~\citep{spooner2018market, ganesh2019reinforcement}; here, we characterize the equilibria that interacting learners approach.
 
\paragraph{Algorithmic Collusion.}
Learning agents can sustain supra-com\-petitive prices~\citep{calvano2020artificial, klein2021autonomous, banchio2022artificial, hansen2021algorithmic}, also in dealer markets~\citep{cont2024dynamics, colliard2026algorithmic}, raising regulatory concerns~\citep{harrington2018developing}; no-swap-regret learning rules out some of these outcomes~\citep{arunachaleswaran2025algorithmic}, motivating our choice of learners. \citet{hartline2024regulation} detect collusion from prices alone, a test later shown to be confoundable~\citep{hartline2025regulation}; in this vein, we identify one such confound: volatility-driven spread widening.


\section{Model} \label{sec:model}

In this section, we present the model of inelastic markets used throughout the paper. Let \(N \ge 2\) and \(\mathcal{N} \coloneq \{1, \dots, N\}\) denote the set of market makers. The fundamental value of the asset is modeled as a continuous, non-degenerate random variable \(V\), governed by an unknown cumulative distribution function \(F\) (with density \(f\) where it exists) and bounded support \([\underline{v}, \overline{v}] \subseteq [0,1]\). Crucially, market makers possess no prior knowledge of the distribution \(F\) and must learn its structure over repeated interactions. Conversely, the trader observes the realized value \(v\) of \(V\) prior to placing an order.

At the start of each round, each market maker \(i \in \mathcal{N}\) posts a pair \(s_i \coloneq (b_i, a_i)\), consisting of a bid quote \(b_i\) and an ask quote \(a_i\), with \(0 \le b_i \le a_i \le 1\). The strategy space for market maker \(i\) is thus
\[
    \mathcal{S}_i
    \coloneq
    \left\{(b,a) \in [0,1]^2 \;\middle|\; 0 \le b \le a \le 1 \right\}.
\]
Let \(\mathbf{s} \coloneq (s_1, \dots, s_N) \in \mathcal{S} \coloneq \prod_{i \in \mathcal{N}} \mathcal{S}_i\) denote the joint strategy profile, and let \(\mathbf{b} \coloneq (b_1, \dots, b_N)\) and \(\mathbf{a} \coloneq (a_1, \dots, a_N)\) denote the joint vectors of bid and ask quotes, respectively, so that \(\mathbf{s} = (\mathbf{b}, \mathbf{a})\). We write \(\mathbf{s}_{-i}\) for the profile of all makers other than \(i\). The prevailing market prices are determined by the best available quotes:
\[
    a^*(\mathbf{a}) \coloneq \min_{i \in \mathcal{N}} a_i
    \qquad \text{and} \qquad
    b^*(\mathbf{b}) \coloneq \max_{i \in \mathcal{N}} b_i,
\]
which together induce the market mid-price
\[
    m^*(\mathbf{s}) \coloneq \dfrac{a^*(\mathbf{a}) + b^*(\mathbf{b})}{2}.
\]

After observing the posted quotes, the trader chooses an action \(d \in \mathcal{D} \coloneq \{-1, +1\}\), where \(d = +1\) denotes a buy order and \(d = -1\) denotes a sell order. Specifically, upon observing the asset's realized fundamental value \(v\) and comparing it with the market mid-price \(m^*(\mathbf{s})\), the trader selects an action \(d \in \mathcal{D}\) according to the following decision rule:
\begin{equation} \label{eq:trader_rule}
    \mathbb{P}(d = +1 \mid v, \mathbf{s}) \coloneq
    \begin{cases}
        1 & \text{if } v > m^*(\mathbf{s}), \\
        0 & \text{if } v < m^*(\mathbf{s}), \\
        \frac{1}{2} & \text{if } v = m^*(\mathbf{s}),
    \end{cases}
\end{equation}
with \(\mathbb{P}(d = -1 \mid v, \mathbf{s}) = 1 - \mathbb{P}(d = +1 \mid v, \mathbf{s})\).

To complete the model specification, we define order allocation and execution payoffs. Let \(n_A(\mathbf{a}) \coloneq |\{i \in \mathcal{N} \mid a_i = a^*(\mathbf{a})\}|\) and \(n_B(\mathbf{b}) \coloneq |\{i \in \mathcal{N} \mid b_i = b^*(\mathbf{b})\}|\) denote the number of market makers quoting the best ask and bid prices, respectively. If multiple market makers quote the same prevailing price, the incoming order is allocated equally among them.\footnote{We adopt equal-split allocation as it is equivalent in expectation to selecting a tied market maker uniformly at random, while eliminating tie-breaking stochasticity, thereby reducing reward variance in simulation and learning.} Consequently, for a given strategy profile \(\mathbf{s}\), the expected payoff of market maker \(i \in \mathcal{N}\) is \(u_i(\mathbf{s}) \coloneq \mathbb{E}_{V, d \sim \mathbb{P}(\cdot \mid V, \mathbf{s})} [r_i(\mathbf{s}; V, d)]\), where the ex-post payoff \(r_i\) under realization \(v\) and trader order \(d\) is given by
\begin{equation} \label{eq:payoff_func}
    r_i(\mathbf{s}; v, d) \coloneq
    \begin{cases}
        \dfrac{a_i - v}{n_A(\mathbf{a})} & \text{if } d = +1 \text{ and } a_i = a^*(\mathbf{a}), \\[2ex]
        \dfrac{v - b_i}{n_B(\mathbf{b})} & \text{if } d = -1 \text{ and } b_i = b^*(\mathbf{b}), \\[2ex]
        0 & \text{otherwise.}
    \end{cases}
\end{equation}
For completeness, the ex-post realized payoff of the trader is equal to \((v - a^*(\mathbf{a})) \, \mathbb{I}_{\{d=+1\}} + (b^*(\mathbf{b}) - v) \, \mathbb{I}_{\{d=-1\}}\), implying that the overall trade interaction between the aggregate pool of market makers and the trader is zero-sum.

\paragraph{Strategic Scope and Game Reduction.}
Given the optimal decision rule in~\eqref{eq:trader_rule}, the trader's action is fully determined by the realized fundamental value and the posted quotes, leaving liquidity providers as the sole strategic players. Isolating their interactions yields a reduced general-sum game among the \(N\) market makers, where a buy order occurs exactly when \(V > m^*(\mathbf{s})\), and a sell order exactly when \(V < m^*(\mathbf{s})\).\footnote{Since \(V\) is a continuous random variable, the event \(V = m^*(\mathbf{s})\) occurs with probability zero, so this holds almost surely.} Each maker's expected payoff accordingly decomposes as \(u_i(\mathbf{s}) = u_i^{\mathrm{ask}}(\mathbf{s}) + u_i^{\mathrm{bid}}(\mathbf{s})\), where
\[
    u_i^{\mathrm{ask}}(\mathbf{s})
    \coloneq
    \dfrac{\mathbb{I}_{\{a_i = a^*(\mathbf{a})\}}}{n_A(\mathbf{a})} \,
    \mathbb{P}\left( V > m^*(\mathbf{s}) \right)
    \left( a^*(\mathbf{a}) - \mathbb{E}\left[ V \;\middle|\; V > m^*(\mathbf{s}) \right] \right)
\]
is maker \(i\)'s expected payoff from buy orders, and \(u_i^{\mathrm{bid}}(\mathbf{s})\) is defined symmetrically for sell orders. This reduced \(N\)-agent formulation fully captures the competitive dynamics of the market, providing the exact setting analyzed in the following sections.


\section{Equilibrium Concept} \label{sec:equilibrium}

In this section, we analyze the equilibrium properties of the proposed market-making game. Our main goal is to prove that at least one pure-strategy Nash equilibrium always exists. Establishing existence is nontrivial in our setting, as the strategy space is continuous and the payoff functions are non-convex and discontinuous due to Bertrand-style price competition and tie-breaking in order allocation, precluding a direct application of standard equilibrium existence results.

\begin{definition}[Pure-Strategy Nash Equilibrium] \label{def:psne}
    A strategy profile \(\mathbf{s}\) is a \emph{pure-strategy Nash equilibrium} (PSNE) if, for every market maker \(i \in \mathcal{N}\) and every feasible deviation \(\widetilde{s}_i \coloneq (\widetilde{b}_i, \widetilde{a}_i) \in \mathcal{S}_i\), we have:
    \[
        u_i(\widetilde{s}_i, \mathbf{s}_{-i}) \le u_i(s_i, \mathbf{s}_{-i}).
    \]
\end{definition}

In our two-sided market setting, a deviation \(\widetilde{s}_i\) allows market maker \(i\) to unilaterally alter its ask quote, its bid quote or both, simultaneously. As previously mentioned, such a deviation can affect not only the execution price and the maker's share of the existing order flow, but also the composition of that flow by shifting the induced mid-price and hence the trader's buy--sell decision. This coupling between the two sides of the market enlarges the set of potentially profitable deviations and makes their analysis more complex than in standard Bertrand competition. Establishing that a profile is a PSNE therefore also requires ruling out deviations that may exploit this interdependence.

To establish existence, we first derive the necessary conditions that a Nash equilibrium profile must satisfy.

\begin{observation} \label{obs:zero_profit}
    In any PSNE, the expected profit from each side of the book must be zero independently. To see this, first observe that a strategy yielding negative expected payoff cannot be optimal, since quoting the boundary prices, i.e., \(s_i = (0, 1)\), guarantees a non-negative payoff for every realization of \(V\). Hence, the expected profit must be non-negative. On the other hand, strictly positive expected profits cannot be sustained: if a side of the book generated strictly positive expected profit, a competing maker could undercut the corresponding quote by an arbitrarily small amount and capture all of the resulting order flow at a slightly less favorable price. Finally, because makers can adjust their quotes on the two sides independently, positive profits on one side cannot compensate for losses on the other: a maker could withdraw from the loss-making side while retaining the profitable side.
\end{observation}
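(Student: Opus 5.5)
The plan is to prove the claim in Observation~\ref{obs:zero_profit} in two steps: first that every side of the book yields zero expected profit in aggregate, and then that every maker's individual payoff is zero. Fix a PSNE \(\mathbf{s}\) and write \(m = m^*(\mathbf{s})\). Define the aggregate ask-side profit
\[
\Pi^{\mathrm{ask}} \coloneq \mathbb{P}(V>m)\bigl(a^* - \mathbb{E}[V\mid V>m]\bigr) = \sum_i u_i^{\mathrm{ask}}(\mathbf{s}),
\]
and define \(\Pi^{\mathrm{bid}}\) symmetrically. The goal is \(\Pi^{\mathrm{ask}} = \Pi^{\mathrm{bid}} = 0\), and hence \(u_i^{\mathrm{ask}} = u_i^{\mathrm{bid}} = 0\) for every \(i\).

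\textbf{Step 1: non-negativity.} Each \(u_i(\mathbf{s}) \ge 0\), since the deviation \((0,1)\) earns \(a_i - v \ge 0\) on buys and \(v - b_i \ge 0\) on sells pointwise whenever it executes. Next I would rule out \(u_i^{\mathrm{ask}} < 0\) while \(u_i^{\mathrm{bid}} \ge 0\) by a one-sided withdrawal. The naive withdrawal sets \(a_i = 1\), but this moves \(a^*\), and hence \(m\), if \(i\) was the unique best ask. So I would split into cases. If \(n_A \ge 2\), withdrawing does not move \(m\), so the bid payoff is unchanged and the deviation gains \(-u_i^{\mathrm{ask}} > 0\). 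If \(n_A = 1\), then \(i\) bears the entire ask loss, so \(\Pi^{\mathrm{ask}} < 0\). In that case I would compare directly with \((0,1)\), or with \((b_i, 1)\), using the fact that withdrawing the ask raises \(m\), which only shrinks the buy region. The latter requires checking that the bid profit does not fall below \(-u_i^{\mathrm{ask}}\). A cleaner route may be to argue that \(u_i \ge 0\) together with the undercut argument of Step 2 already forces each side to be zero.

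\textbf{Step 2: no positive side.} Suppose \(\Pi^{\mathrm{ask}} > 0\). Pick a maker \(j\) who receives share \(\le 1/2\) of the ask flow; one exists since \(N \ge 2\), taking a non-best maker if \(n_A = 1\). Consider \(j\) deviating to \(a_j' = a^* - \varepsilon\) while keeping \(b_j\), and setting \(b_j'\) low enough not to alter \(b^*\) when \(b_j < b^*\). The new mid-price is \(m - \varepsilon/2\). This is the main obstacle, since the coupling means the buy probability and conditional mean both change. However, \(\varepsilon \mapsto \mathbb{P}(V > m - \varepsilon/2)\bigl(a^* - \varepsilon - \mathbb{E}[V\mid V>m-\varepsilon/2]\bigr)\) is continuous in \(\varepsilon\) because \(F\) is continuous (no atoms). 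So the new ask payoff tends to \(\Pi^{\mathrm{ask}}\), which exceeds \(j\)'s current ask share. On the bid side, if \(j\) was a best bidder, its bid profit also varies continuously with \(m\), and its share is unchanged. If \(j\) was not a best bidder, its bid payoff stays zero. Hence for small \(\varepsilon\) the deviation strictly improves \(j\)'s payoff unless \(j\) already captured all of \(\Pi^{\mathrm{ask}}\), which is excluded by the choice of \(j\). A small subtlety arises when \(a^* = b^*\), where lowering the ask below the best bid is infeasible for \(j\) if \(b_j > a_j'\). I would handle this by lowering \(b_j\) as well, which is permitted for a non-best bidder. If \(j\) is a best bidder with \(n_B = 1\), I would choose the other maker instead, which is possible when \(N \ge 2\) by relabeling. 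The case \(\Pi^{\mathrm{bid}} > 0\) is symmetric.

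\textbf{Step 3: conclusion.} Steps 1 and 2 give \(\Pi^{\mathrm{ask}} \le 0\) and \(\Pi^{\mathrm{bid}} \le 0\), while \(\sum_i u_i = \Pi^{\mathrm{ask}} + \Pi^{\mathrm{bid}}\) with every \(u_i \ge 0\). Therefore \(\Pi^{\mathrm{ask}} + \Pi^{\mathrm{bid}} \ge 0\), which forces \(\Pi^{\mathrm{ask}} = \Pi^{\mathrm{bid}} = 0\). Individual per-side payoffs \(u_i^{\mathrm{ask}} = \Pi^{\mathrm{ask}}/n_A\) for best askers (zero for the others) then vanish as well. This last step also removes the need for the delicate withdrawal argument in Step 1, so the real work lies in the continuity-based undercut of Step 2.
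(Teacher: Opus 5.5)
Your proof is correct. Its first two ingredients are the paper's: non-negativity via the safe quote \((0,1)\), and a small undercut to rule out a strictly profitable side. You make the undercut rigorous where the paper's inline argument is only heuristic: continuity of \(F\) controls the shift of the mid-price to \(m-\varepsilon/2\) on both sides of the deviator's book, and you pick a deviator whose current ask share is at most \(1/2\).

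The genuine departure is the last step. The paper rules out cross-subsidization by letting a maker withdraw from its loss-making side. You instead close with the accounting identity \(\sum_i u_i = \Pi^{\mathrm{ask}} + \Pi^{\mathrm{bid}}\). Then \(u_i \ge 0\) for all \(i\), together with \(\Pi^{\mathrm{ask}}, \Pi^{\mathrm{bid}} \le 0\), forces both aggregates, and hence every individual side payoff, to vanish. Your route is more robust. The withdrawal argument, as stated, ignores that a unique best asker who withdraws raises \(a^*\) and hence the mid-price, which is exactly the coupling you flag in Step 1. It can be repaired: a profitable bid side forces \(m > b^*\), and \(\int_0^{m'}(v-b^*)\,dF(v)\) is nondecreasing in \(m' \ge b^*\), so raising the mid-price never hurts the bid side. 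Your version needs no such case analysis, while the paper's route is per-maker and more direct.

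Two minor points. First, the unfinished withdrawal discussion in Step 1 can simply be deleted, as you observe. Second, the feasibility subtlety in Step 2 never occurs. \(\Pi^{\mathrm{ask}} > 0\) forces \(a^* > \mathbb{E}[V \mid V > m] > m\), i.e., \(a^* > b^*\), so for \(\varepsilon < a^* - b^*\) the undercut \(a^* - \varepsilon\) automatically exceeds \(b^* \ge b_j\). This matters, because your fallback of switching to the other maker would fail for \(N = 2\) and \(n_A = 1\): there the other maker is the unique best asker and already holds the full ask share.
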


\begin{observation} \label{obs:best_price_competition}
    As in standard Bertrand competition, any PSNE must have at least two market makers quoting the best ask and bid, i.e., \(n_A(\mathbf{a}), n_B(\mathbf{b}) \ge 2\). The usual undercutting argument must account for the fact that quotes also affect the mid-price. If, for instance, \(n_A(\mathbf{a}) = 1\), the sole ask maker can execute a joint deviation by raising its ask by a small \(\varepsilon > 0\), while improving its bid by a much smaller amount. By shifting the mid-price upward, this deviation increases the margin on buy orders and converts marginal buy orders into sell orders captured by its new bid, yielding a strictly positive expected profit.
\end{observation}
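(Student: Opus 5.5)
We make the undercutting argument sketched in \Cref{obs:best_price_competition} rigorous for the ask side; the bid side follows by the symmetric argument, reflecting \(V \mapsto 1 - V\) and swapping the roles of bids and asks. Suppose \(\mathbf{s}\) is a PSNE with \(n_A(\mathbf{a}) = 1\), and let \(i\) be the unique maker with \(a_i = a^* \coloneq a^*(\mathbf{a})\). Write \(b^* \coloneq b^*(\mathbf{b})\), \(m \coloneq m^*(\mathbf{s})\), and let \(a^{(2)} > a^*\) be the second-best ask, with \(a^{(2)} \coloneq 1\) if no other maker quotes below \(1\).

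\textbf{Step 1: pin down the zero-profit conditions.} By \Cref{obs:zero_profit}, both sides of \(i\)'s book earn zero. On the ask side this gives \(\mathbb{P}(V > m)\,(a^* - \mathbb{E}[V \mid V > m]) = 0\). If \(\mathbb{P}(V>m) > 0\), then \(a^* = \mathbb{E}[V \mid V>m] > m\). If \(\mathbb{P}(V>m)=0\), buy orders never arrive and the ask side is trivially dead. On the bid side, either \(i\) does not quote the best bid or it earns zero there. The second option is treated the same way as the first below, with only a change in the tie shares.

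\textbf{Step 2: the joint deviation.} Let maker \(i\) play \(\widetilde{s}_i = (\widetilde b, a^* + \varepsilon)\) with \(\varepsilon < a^{(2)} - a^*\), so \(i\) remains the unique best ask. Set \(\widetilde b = b^* + \delta\) with \(0<\delta\ll\varepsilon\), so \(i\) becomes the unique best bid. This requires \(b^* < a^*\), which holds since otherwise \(m = a^*\) and zero ask profit would be impossible given \(V\) is non-degenerate; that point needs a short separate check. The new mid-price is \(\widetilde m = m + (\varepsilon+\delta)/2\).

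\textbf{Step 3: compare payoffs.} The ask payoff becomes
\[
\mathbb{P}(V>\widetilde m)\bigl(a^*+\varepsilon - \mathbb{E}[V\mid V>\widetilde m]\bigr) = \int_{\widetilde m}^{1} (a^*+\varepsilon - v)\,dF(v).
\]
Subtracting the zero baseline \(\int_m^1 (a^* - v)\,dF\) gives
\[
\varepsilon\,\mathbb{P}(V>\widetilde m) - \int_m^{\widetilde m}(a^*-v)\,dF(v).
\]
The bid payoff is \(\int_0^{\widetilde m}(v - \widetilde b)\,dF\). Relative to the old bid side, the deviation now captures the whole sell flow and gains the marginal band \([m,\widetilde m]\), on which \(v - \widetilde b \ge m - b^* - \delta > 0\). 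It loses at most \(\delta\,\mathbb{P}(V<\widetilde m)\). Adding the two sides, the band terms combine to \(\int_m^{\widetilde m}(2v - a^*-\widetilde b)\,dF\), which is \(O(\varepsilon\cdot \mathbb{P}(m<V<\widetilde m))\) and so \(o(\varepsilon)\) by continuity of \(F\). The total change is therefore \(\varepsilon\,\mathbb{P}(V>m) - \delta + o(\varepsilon)\), which is strictly positive for small \(\varepsilon\) and \(\delta = \varepsilon^2\), whenever \(\mathbb{P}(V>m)>0\).

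\textbf{Main obstacle.} The hard part is the degenerate case \(\mathbb{P}(V>m)=0\), where the ask side carries no flow and the first-order gain vanishes. There I would instead use the bid side. Since all flow is sells and bid profit must be zero, \(b^* = \mathbb{E}[V]\). Because \(V\) is non-degenerate, \(\mathbb{E}[V] < \overline v \le m\), so \(a^* > \overline{v}\). Maker \(i\) can then lower its ask to just above \(\overline v\), or to some \(a\) with \(\mathbb{E}[V\mid V>(a+b^*)/2] < a\), which opens a positive-margin buy band. Showing such an \(a\) exists, and that the combined payoff is strictly positive, is the delicate computation. Tied-bid cases, where \(n_B \ge 2\) before the deviation, only improve the gain because \(i\) moves from a fractional share to the full share of the sell flow.
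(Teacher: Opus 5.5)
Your main case \(\mathbb{P}(V>m)>0\) is correct. It is the paper's own mechanism (raise the ask by \(\varepsilon\), lift the bid by \(\delta\ll\varepsilon\)) made quantitative. Using both zero-profit conditions, the deviator's new payoff is exactly \(\varepsilon(1-F(\widetilde m))-\delta F(\widetilde m)+2\int_m^{\widetilde m}(v-m)\,dF(v)\ge\varepsilon(1-F(\widetilde m))-\delta\). This is positive once \(\widetilde m<\overline v\) and \(\delta\) is small. The check \(b^*<a^*\) that you defer is immediate here: \(a^*=\mathbb{E}[V\mid V>m]>m=(a^*+b^*)/2\).

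The genuine gap is the case \(\mathbb{P}(V>m)=0\), i.e.\ \(m\ge\overline v\). You leave it open, and your suggested repair fails as stated. Here \(b^*=\mathbb{E}[V]\), so \(\int_0^1(v-b^*)\,dF(v)=0\). Suppose maker \(i\) lowers its ask to some \(a\) while keeping its bid, and it is the unique best bidder. With \(y=(a+b^*)/2\), its total payoff is \(\int_y^1(a-v)\,dF(v)+\int_0^y(v-b^*)\,dF(v)=2\int_y^1(y-v)\,dF(v)\). This is never positive, and it is strictly negative as soon as \(y<\overline v\). The traders you convert from sellers into buyers are exactly those whose sales let the bid break even, so no choice of \(a\) helps.

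The missing idea is to withdraw the bid: play \((0,\overline v)\). The market bid drops to some \(b'\le b^*<\overline v\), and the mid-price becomes \(y'=(\overline v+b')/2<\overline v\). The ask side then earns \(\int_{y'}^{\overline v}(\overline v-v)\,dF(v)>0\), and the bid side earns at least zero, against a baseline of zero. This is the deviation in the paper's Out-of-Support Midpoints paragraph.

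You were right to single this case out. The paper's one-line mechanism also breaks there: all flow is sells at \(b^*+\delta\), so it yields \(-\delta\). The observation therefore really rests on that separate argument. The mirror case \(m\le\underline v\) for the bid claim closes the same way with \((\underline v,1)\).
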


Building on the observations above, consider a PSNE profile \(\mathbf{s}\), and let \(a^*\), \(b^*\) and \(m^* \coloneq (a^* + b^*)/2\) denote its best ask, best bid and induced mid-price, respectively. The \emph{zero-profit condition} implies
\[
    \mathbb{E}\left[a^* - V \;\middle|\; V > m^*\right] = 0,
    \qquad
    \mathbb{E}\left[V - b^* \;\middle|\; V < m^*\right] = 0,
\]
and hence
\[
    a^* = \mathbb{E}\left[V \;\middle|\; V > m^*\right],
    \qquad
    b^* = \mathbb{E}\left[V \;\middle|\; V < m^*\right].
\]
Thus, at any PSNE, the zero-profit condition uniquely determines the best quotes for its mid-price \(m^*\). More generally, for any candidate equilibrium mid-price \(\widehat{m}\), internal consistency requires that the midpoint induced by the corresponding \emph{zero-profit quotes} coincide with \(\widehat{m}\) itself, motivating the following definition.

\begin{definition}[Midpoint Operator] \label{def:midpoint_operator}
    For any \(m \in [\underline{v}, \overline{v}]\), define the zero-profit quotes
    \[
        a(m) \coloneq \mathbb{E}\left[V \;\middle|\; V > m\right],
        \qquad
        b(m) \coloneq \mathbb{E}\left[V \;\middle|\; V < m\right]
    \]
    whenever the conditioning events have positive probability, and set \(a(\overline{v}) \coloneq \overline{v}\) and \(b(\underline{v}) \coloneq \underline{v}\) otherwise, in accordance with their one-sided limits. We then define the \emph{midpoint operator} \(H : [\underline{v}, \overline{v}] \to [\underline{v}, \overline{v}]\) as
    \[
        H(m) \coloneq \dfrac{a(m) + b(m)}{2}.
    \]
\end{definition}

By construction, \(H(m)\) is the midpoint induced by the zero-profit quotes associated with \(m\). Consequently, the mid-price \(m^*\) of any PSNE must be self-consistent, satisfying
\[
    m^* = H(m^*),
\]
i.e., any PSNE mid-price is necessarily a \emph{fixed point} of \(H\).

We denote by \(\mathcal{F} \coloneq \{ m \in [\underline{v}, \overline{v}] \mid m = H(m) \}\) the set of fixed points of \(H\). Next, we show that \(\mathcal{F}\) is nonempty and lies strictly between the lower and upper bounds of the support of \(V\).

\begin{theorem}[Existence of a Zero-Profit Midpoint] \label{thm:existence_zero_profit_midpoint}
    The set of fixed points \(\mathcal{F}\) of \(H\) is nonempty and satisfies \(\mathcal{F} \subset (\underline{v},\overline{v})\).
\end{theorem}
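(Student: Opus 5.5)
The plan is to show that \(H\) is nondecreasing on \([\underline{v},\overline{v}]\), which is a complete lattice, and then apply Tarski's fixed point theorem, as anticipated in the related-work discussion. We use monotonicity rather than the intermediate value theorem because \(F\) may have gaps in its support. If \(F\) is constant on an interval, \(H\) is constant there. Jumps are not the issue: \(V\) is continuous, so \(F\) has no atoms. Still, continuity of \(H\) would need its own argument, whereas monotonicity is elementary.

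\textbf{Step 1 (monotonicity).} For \(m < m'\), we show \(a(m) \le a(m')\). Since \(a(m) = \mathbb{E}[V \mid V > m]\) is a conditional mean over a shrinking upper tail, write \(\mathbb{E}[V\mid V>m]\) as a convex combination of \(\mathbb{E}[V\mid m<V\le m']\le m'\) and \(\mathbb{E}[V\mid V>m']\ge m'\). The first term is at most the second, so the combination is at most \(\mathbb{E}[V\mid V>m']\). The case where \(\mathbb{P}(m<V\le m')=0\) is trivial. The boundary conventions \(a(\overline{v})=\overline{v}\) and \(b(\underline{v})=\underline{v}\) keep the inequality valid at the endpoints, since \(a(m)\le\overline{v}\) always. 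The argument for \(b\) is symmetric, so \(H\) is nondecreasing. We also check \(H([\underline{v},\overline{v}])\subseteq[\underline{v},\overline{v}]\), which is immediate from \(a,b\in[\underline{v},\overline{v}]\). Tarski's theorem then gives \(\mathcal{F}\neq\emptyset\), and even least and greatest fixed points.

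\textbf{Step 2 (interior).} We must exclude \(\underline{v}\) and \(\overline{v}\) from \(\mathcal{F}\). At \(m=\underline{v}\), \(b(\underline{v})=\underline{v}\) and \(a(\underline{v})=\mathbb{E}[V]\). Because \(V\) is non-degenerate and atomless, \(\mathbb{E}[V]>\underline{v}\), so \(H(\underline{v})=(\underline{v}+\mathbb{E}V)/2>\underline{v}\). Symmetrically, \(H(\overline{v})<\overline{v}\). Hence every fixed point lies in \((\underline{v},\overline{v})\).

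\textbf{Main obstacle.} The only delicate point is the boundary and conditioning bookkeeping in Step 1. We need \(\mathbb{P}(V>m)>0\) for every \(m<\overline{v}\), which holds because \(\overline{v}\) is the supremum of the support. We also need the convex-combination inequality to hold when \(m'=\overline{v}\), where the convention is used. I would handle these cases explicitly before invoking Tarski's theorem.
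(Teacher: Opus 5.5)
Your proposal is correct and follows the paper's proof essentially step for step. Monotonicity of \(a(\cdot)\) and \(b(\cdot)\) makes \(H\) an order-preserving self-map of \([\underline{v},\overline{v}]\), Knaster--Tarski gives \(\mathcal{F}\neq\varnothing\), and the endpoint evaluations \(H(\underline{v})>\underline{v}\) and \(H(\overline{v})<\overline{v}\), via \(\mathbb{E}[V]\in(\underline{v},\overline{v})\), rule out boundary fixed points; your convex-combination argument and explicit handling of the endpoint conventions just make rigorous the paper's one-line ``shrinking conditioning event'' justification of monotonicity.
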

\begin{proof}
    We first establish that \(H\) is an order-preserving self-map on the complete lattice \([\underline{v},\overline{v}]\). For every \(m \in [\underline{v},\overline{v}]\), \(a(m)\) and \(b(m)\) are well defined. Moreover, both functions are nondecreasing in \(m\): increasing \(m\) removes lower values from the conditioning event defining \(a(m)\), while adding higher values to the conditioning event defining \(b(m)\). Consequently, \(H(m) = (a(m)+b(m))/2\) is also nondecreasing, establishing that \(H\) is indeed an order-preserving self-map on \([\underline{v},\overline{v}]\), as claimed.
    
    By the Knaster--Tarski fixed-point theorem~\citep{tarski1955lattice}, \(H\) admits at least one fixed point \(\widehat{m} \in [\underline{v},\overline{v}]\) satisfying \(\widehat{m} = H(\widehat{m})\), so \(\mathcal{F} \neq \varnothing\).

    It remains to rule out boundary fixed points. At the upper boundary, we have
    \[
        H(\overline{v}) = \frac{\overline{v} + b(\overline{v})}{2} < \overline{v},
    \]
    since \(b(\overline{v}) = \mathbb{E}[V] < \overline{v}\). Symmetrically, at the lower boundary,
    \[
        H(\underline{v}) = \frac{a(\underline{v}) + \underline{v}}{2} > \underline{v},
    \]
    since \(a(\underline{v}) = \mathbb{E}[V] > \underline{v}\). Consequently, neither boundary point can be a fixed point of \(H\), so \(\mathcal{F} \subset (\underline{v},\overline{v})\).
\end{proof}

Although the zero-profit condition is necessary for a PSNE, it is not sufficient. For any \(\widehat{m} \in \mathcal{F}\), we denote by \(s(\widehat{m}) \coloneq (b(\widehat{m}), a(\widehat{m}))\) the \emph{zero-profit strategy} at \(\widehat{m}\), and by \(\mathbf{s}(\widehat{m}) \coloneq (s(\widehat{m}), \dots, s(\widehat{m}))\) the symmetric \emph{zero-profit profile} where all market makers play \(s(\widehat{m})\). Even when all competitors stick to \(\mathbf{s}_{-i}(\widehat{m})\), a maker \(i\) may still find a profitable unilateral deviation \(\widetilde{s}_i \coloneq (\widetilde{b}_i, \widetilde{a}_i)\) that exploits the two-sided coupling of the market.

\begin{figure}[t]
    \centering
    \begin{tikzpicture}[
    x=0.75\linewidth, y=1cm,
    font=\scriptsize\sffamily,
    >={Stealth[length=5pt,width=4pt]}
]

    \def\vmin{0}
    \def\vmax{1}
    \def\bstar{0.25}
    \def\mstar{0.50}
    \def\astar{0.75}
    \def\bprime{0.12} 
    \def\aprime{0.60} 
    \def\y{0.425}     
    
    \fill[orange!20] (\y, -0.05) rectangle (\mstar, 0.05);
    
    \draw[thick, black] (\vmin-0.05, 0) -- (\vmax+0.05, 0) node[right] {$V$};
    
    \draw[thick] (\vmin, 0.1) -- (\vmin, -0.1) node[below] {$\underline{v}$};
    \draw[thick] (\vmax, 0.1) -- (\vmax, -0.1) node[below] {$\overline{v}$};
    
    \node[eq, anchor=south west, font=\bfseries\footnotesize] at (\vmin-0.05, 2.3) {Zero-Profit Profile};
    
    \draw[eq, dashed] (\bstar, 0) -- (\bstar, 0) node[above] {$b(\widehat{m})$};
    \draw[eq, dashed] (\mstar, 0) -- (\mstar, 1.8) node[above] {$\widehat{m}$};
    \draw[eq, dashed] (\astar, 0) -- (\astar, 0) node[above] {$a(\widehat{m})$};
    
    \fill[eq] (\bstar, 0) circle (1.5pt);
    \fill[eq] (\mstar, 0) circle (1.5pt);
    \fill[eq] (\astar, 0) circle (1.5pt);
    
    \draw[decorate, decoration={brace, amplitude=5pt, mirror}, eq, thick] 
        (\mstar-0.01, 1.4) -- (\vmin+0.01, 1.4) 
        node[midway, above=2pt] {$u^\mathrm{bid}_i(\mathbf{s}(\widehat{m})) = 0$};
        
    \draw[decorate, decoration={brace, amplitude=5pt, mirror}, eq, thick] 
        (\vmax-0.01, 1.4) -- (\mstar+0.01, 1.4) 
        node[midway, above=2pt] {$u^\mathrm{ask}_i(\mathbf{s}(\widehat{m})) = 0$};
    
    \draw[<-, eq, thick] (\vmin+0.05, 0.8) -- (\mstar-0.05, 0.8) node[midway, fill=white, inner sep=2pt, font=\scriptsize] {Sell Flow};
    \draw[->, eq, thick] (\mstar+0.05, 0.8) -- (\vmax-0.05, 0.8) node[midway, fill=white, inner sep=2pt, font=\scriptsize] {Buy Flow};
    
    \node[dev, anchor=north west, font=\bfseries\footnotesize] at (\vmin-0.05, -2.5) {Unilateral Deviation ($\widetilde{s}_i=(\widetilde{b}_i, \widetilde{a}_i)$)};
    
    \draw[dev, dashed] (\y, 0) -- (\y, -2) node[below] {$y$};
    \draw[dev, dashed] (\aprime, 0) -- (\aprime, 0) node[below] {$\widetilde{a}_i$};
    \draw[black!50, dashed] (\bprime, 0) -- (\bprime, 0) node[below, text=black!50] {$\widetilde{b}_i$};
    
    \fill[dev] (\y, 0) circle (1.5pt);
    \fill[dev] (\aprime, 0) circle (1.5pt);
    \fill[black!50] (\bprime, 0) circle (1.5pt);
    
    \draw[->, thick, dev, bend left=20] (\astar, -0.1) to node[below, font=\bfseries\tiny] {Margin Loss} (\aprime+0.02, -0.1);
    
    \draw[->, thick, orange!90!black, bend left=20] (\mstar, -0.1) to node[below, font=\bfseries\tiny] {Volume Gain} (\y, -0.1);
    
    \draw[decorate, decoration={brace, amplitude=5pt}, dev, thick] 
        (\vmax-0.01, -1.6) -- (\y+0.01, -1.6) 
        node[midway, below=2pt, align=center] {$u^\mathrm{ask}_i(\widetilde{s}_i, \mathbf{s}_{-i}(\widehat{m})) \stackrel{?}{>} 0$};
    
    \draw[decorate, decoration={brace, amplitude=5pt}, black!50, thick] 
        (\y-0.01, -1.6) -- (\vmin+0.01, -1.6) 
        node[midway, below=2pt, align=center, text=black!50] {$u^\mathrm{bid}_i(\widetilde{s}_i, \mathbf{s}_{-i}(\widehat{m})) = 0$};
    
    \draw[->, dev, thick] (\y+0.05, -1) -- (\vmax-0.05, -1) node[midway, fill=white, inner sep=2pt, text=dev, font=\scriptsize] {Captured Buy Flow};
    \draw[<-, black!50, thick] (\vmin+0.05, -1) -- (\y-0.05, -1) node[midway, fill=white, inner sep=2pt, text=black!50, font=\scriptsize] {Sell Flow (Competitors)};
\end{tikzpicture}
    \Description{
        Comparison between the baseline zero-profit state and a unilateral ask undercut deviation with bid withdrawal. The top panel shows the baseline profile with zero expected profits on both sides and induced midpoint \(\widehat{m}\). The bottom panel illustrates an undercut to \(\widetilde{a}_i\) paired with bid withdrawal \(\widetilde{b}_i\). This shifts the induced midpoint to \(y < \widehat{m}\), expanding the captured buy flow to \((y, \overline{v}]\) at the cost of a reduced per-trade margin.
    }
    \caption{
        Mechanism of a unilateral ask undercut paired with bid withdrawal. Lowering the ask to \(\widetilde{a}_i\) narrows the per-trade execution margin but shifts the induced midpoint to \(y < \widehat{m}\), thereby expanding the captured buy order flow across \((y, \overline{v}]\).
    }
    \label{fig:undercut_deviation}
\end{figure}

By improving a quote on one side, e.g., undercutting the ask to \(\widetilde{a}_i < a(\widehat{m})\) paired with bid withdrawal (\(\widetilde{b}_i < b(\widehat{m})\)), as illustrated in \Cref{fig:undercut_deviation}, maker \(i\) becomes the unique recipient of buy orders. This deviation shifts the induced midpoint to \(y \coloneq (\widetilde{a}_i + b(\widehat{m}))/2 < \widehat{m}\), yielding an expected ask payoff of
\[
    u_i^{\mathrm{ask}}(\widetilde{s}_i, \mathbf{s}_{-i}(\widehat{m}))
    = \mathbb{P}(V > y) \left( \widetilde{a}_i - a(y) \right).
\]
This reveals the trade-off of two-sided competition: undercutting the ask narrows the per-trade margin, but the leftward shift of the midpoint (from \(\widehat{m}\) to \(y\)) extends execution to buyers with lower values, raising the execution probability \(\mathbb{P}(V > y)\). A deviation is strictly profitable whenever this volume gain outweighs the lower margin. Crucially, this shift in the mid-price also impacts the opposite side of the book. Under the newly induced midpoint \(y < \widehat{m}\), maintaining the original bid quote \(b(\widehat{m})\), i.e., playing \((b(\widehat{m}), \widetilde{a}_i)\), would subject maker \(i\) to an expected bid payoff of
\[
    u_i^{\mathrm{bid}}((b(\widehat{m}), \widetilde{a}_i), \mathbf{s}_{-i}(\widehat{m}))
    = \frac{1}{N} \, \mathbb{P}(V < y) \left( b(y) - b(\widehat{m}) \right) \le 0,
\]
since \(b(y) \le b(\widehat{m})\). This exposes the maker to adverse selection on the bid side, making it strictly loss-making whenever \(b(y) < b(\widehat{m})\). To prevent these potential losses from reducing its ask profits, the deviating maker plays \(\widetilde{s}_i\) instead, actively withdrawing from the bid side, which ensures
\[
    u_i^{\mathrm{bid}}(\widetilde{s}_i, \mathbf{s}_{-i}(\widehat{m})) = 0.
\]

The next step is therefore to establish whether at least one point of \(\mathcal{F}\) is robust against such deviations. The following lemmas establish the properties needed to answer this question (their proofs are provided in \Cref{app:proofs_equilibrium}); combining them, we then identify such a point, which constitutes a PSNE mid-price.

\begin{restatable}{lemma}{singleSidedDeviations} \label{lem:single_sided_deviations}
    Let \(\widehat{m} \in \mathcal{F}\). If the associated zero-profit profile \(\mathbf{s}(\widehat{m})\) admits a profitable unilateral deviation, then it admits either a profitable ask undercut with bid withdrawal or a profitable bid raise with ask withdrawal.
\end{restatable}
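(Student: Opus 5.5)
The plan is to fix \(\widehat{m} \in \mathcal{F}\) and write \(\widehat{a} \coloneq a(\widehat{m})\) and \(\widehat{b} \coloneq b(\widehat{m})\). Consider an arbitrary deviation \(\widetilde{s}_i = (\widetilde{b}_i, \widetilde{a}_i)\) against \(\mathbf{s}_{-i}(\widehat{m})\), in which at least one competitor quotes \((\widehat{b}, \widehat{a})\). I will show that whenever \(u_i(\widetilde{s}_i, \mathbf{s}_{-i}(\widehat{m})) > 0\), one of the two withdrawal deviations earns at least as much.

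Three facts will be used throughout. First, \(a(\cdot)\) and \(b(\cdot)\) are nondecreasing; this is shown in the proof of \Cref{thm:existence_zero_profit_midpoint}. Second, \(u_i(\mathbf{s}(\widehat{m})) = 0\). Third, withdrawal is always feasible. Since \(\widehat{m} \in (\underline{v}, \overline{v})\) by \Cref{thm:existence_zero_profit_midpoint} and \(V\) is continuous and non-degenerate, we have \(\underline{v} < \widehat{b} < \widehat{m} < \widehat{a} < \overline{v}\). Hence a bid of \(0 < \widehat{b}\) or an ask of \(1 > \widehat{a}\) is available without violating \(b \le a\).

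I will split into cases according to the signs of \(\widetilde{a}_i - \widehat{a}\) and \(\widetilde{b}_i - \widehat{b}\).

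\emph{No strict improvement} (\(\widetilde{a}_i \ge \widehat{a}\) and \(\widetilde{b}_i \le \widehat{b}\)). The best quotes and the mid-price stay at \(\widehat{a}\), \(\widehat{b}\), \(\widehat{m}\). Maker \(i\) receives at most a share of flows whose margins are exactly zero by the zero-profit condition. So the payoff is \(0\) and the deviation is not profitable.

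\emph{Ask improves, bid does not} (\(\widetilde{a}_i < \widehat{a}\) and \(\widetilde{b}_i \le \widehat{b}\)). The mid-price becomes \(y = (\widetilde{a}_i + \widehat{b})/2 < \widehat{m}\). The ask payoff is \(\mathbb{P}(V>y)(\widetilde{a}_i - a(y))\), and it does not depend on \(\widetilde{b}_i\). The bid payoff is \(0\) if \(\widetilde{b}_i < \widehat{b}\). If \(\widetilde{b}_i = \widehat{b}\), the bid payoff is \(\tfrac1N \mathbb{P}(V<y)(b(y) - \widehat{b}) \le 0\), because \(b(y) \le b(\widehat{m})\). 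Therefore replacing \(\widetilde{b}_i\) by \(0\), i.e., an ask undercut with bid withdrawal, weakly increases the payoff.

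\emph{Bid improves, ask does not}. This case is symmetric and yields a bid raise with ask withdrawal. When \(\widetilde{a}_i = \widehat{a}\), the retained ask share has margin \(\widehat{a} - a(y) \le 0\) because \(y > \widehat{m}\).

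\emph{Both sides improve} (\(\widetilde{a}_i < \widehat{a}\) and \(\widetilde{b}_i > \widehat{b}\)). Maker \(i\) is the sole best quote on both sides, at mid-price \(y = (\widetilde{a}_i + \widetilde{b}_i)/2\). Suppose \(y \le \widehat{m}\). Then the bid part is \(\mathbb{P}(V<y)(b(y) - \widetilde{b}_i) \le 0\), since \(b(y) \le \widehat{b} < \widetilde{b}_i\). So profitability forces the ask part \(\mathbb{P}(V>y)(\widetilde{a}_i - a(y))\) to be positive, and in particular \(\widetilde{a}_i > a(y)\).

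Now compare with the pure undercut \((0, \widetilde{a}_i)\), which induces the mid-price \(y' = (\widetilde{a}_i + \widehat{b})/2 < y\). Its payoff is \(\mathbb{P}(V>y')(\widetilde{a}_i - a(y'))\). By monotonicity, \(a(y') \le a(y)\), so its margin is at least \(\widetilde{a}_i - a(y) > 0\). Also \(\mathbb{P}(V>y') \ge \mathbb{P}(V>y)\). Hence the pure undercut dominates the ask part, which in turn dominates the total payoff. If instead \(y \ge \widehat{m}\), the mirror argument produces a bid raise with ask withdrawal.

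I expect this last case to be the main obstacle. Cross-side deviations shift the mid-price in a direction that depends on both quotes, so the key point is choosing the side according to whether \(y \le \widehat{m}\) or \(y \ge \widehat{m}\). That choice guarantees the other side is adverse-selected. The remaining care goes into the tie cases (\(\widetilde{a}_i = \widehat{a}\) or \(\widetilde{b}_i = \widehat{b}\), with shared allocation \(1/N\)) and into checking that the sign comparisons survive the multiplication by probabilities. The monotonicity of \(a(\cdot)\) and \(b(\cdot)\) handles both.
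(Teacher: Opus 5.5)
Your proof is correct and uses the same mechanism as the paper's. Withdrawing the other side weakly moves the mid-price in the kept side's favor, so by monotonicity of \(a(\cdot)\) (resp.\ \(b(\cdot)\)) both the margin and the execution probability improve. The organization differs, though.

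The paper does not split on which quotes improve or on where the mid-price lands. Since the total payoff is positive, some side is strictly profitable, say the ask. A tie at \(a(\widehat{m})\) is ruled out: the market bid can only rise, so \(z \ge \widehat{m}\) and the shared margin is \(\le 0\). Hence \(\widetilde{a}_i < a(\widehat{m})\). Withdrawing the bid then moves the mid-price from \(z = (\widetilde{a}_i + \max(\widetilde{b}_i, b(\widehat{m})))/2\) down to some \(y \le z\). The paper compares the withdrawal payoff only against the original ask-side payoff, whatever the sign of the original bid-side payoff. This makes your ``main obstacle'' unnecessary, namely the \(y \le \widehat{m}\) versus \(y \ge \widehat{m}\) split in the two-sided case. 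One never has to show that the discarded side was loss-making.

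Your route, in exchange, yields an extra structural fact: in a two-sided improvement, the side opposite to the mid-price shift is necessarily adverse-selected. It also explicitly checks that the withdrawal quotes satisfy \(0 < b(\widehat{m})\) and \(1 > a(\widehat{m})\), so withdrawal really earns exactly zero. The paper's proof takes this detail for granted.
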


The preceding lemma shows that, without loss of generality, it suffices to analyze these single-sided deviations. Hereafter, an ask undercut paired with bid withdrawal is referred to simply as an ask undercut (and symmetrically for bid raises).

\begin{restatable}{lemma}{implicationsDeviations} \label{lem:implications_deviations}
    Let \(\widehat{m} \in \mathcal{F}\). Then:
    \begin{enumerate}[label=(\roman*), beginpenalty=10000]
        \item For any \(x > \widehat{m}\), the unique bid raise inducing mid-price \(x\) is profitable if and only if \(2x - b(x) < a(\widehat{m})\), which implies \(H(x) > x\).
        
        \item For any \(y < \widehat{m}\), the unique ask undercut inducing mid-price \(y\) is profitable if and only if \(2y - a(y) > b(\widehat{m})\), which implies \(H(y) < y\).
    \end{enumerate}
\end{restatable}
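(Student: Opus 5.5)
We prove part (ii); part (i) then follows by the symmetric argument under the reflection \(V \mapsto -V\), which swaps bids with asks and reverses all inequalities. Fix \(y < \widehat{m}\). An ask undercut with bid withdrawal keeps the competitors' best bid at \(b(\widehat{m})\). It inducing mid-price \(y\) forces \((\widetilde{a}_i + b(\widehat{m}))/2 = y\), i.e. \(\widetilde{a}_i = 2y - b(\widehat{m})\), which is unique. We should check that this is a genuine undercut, \(\widetilde{a}_i < a(\widehat{m})\). This holds because \(y < \widehat{m} = H(\widehat{m})\) gives \(2y - b(\widehat{m}) < 2\widehat{m} - b(\widehat{m}) = a(\widehat{m})\). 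Since the withdrawn bid lies strictly below \(b(\widehat{m})\), maker \(i\) is the unique best ask and earns nothing on the bid side.

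\textbf{Profitability criterion.} As computed before the lemma, the deviation payoff is
\[
    u_i^{\mathrm{ask}} = \mathbb{P}(V>y)\bigl(\widetilde{a}_i - a(y)\bigr).
\]
The zero-profit profile yields payoff \(0\). For \(y > \underline{v}\), the probability \(\mathbb{P}(V>y)\) is positive, since \(y < \widehat{m} < \overline{v}\) by Theorem~\ref{thm:existence_zero_profit_midpoint}. Hence the deviation is profitable iff \(\widetilde{a}_i > a(y)\), i.e. iff \(2y - b(\widehat{m}) > a(y)\), which rearranges to \(2y - a(y) > b(\widehat{m})\). The boundary case \(y \le \underline{v}\) needs separate handling. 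There \(\mathbb{P}(V>y)=1\) and \(a(y)=\mathbb{E}[V]\), so the same formula still applies with \(a(\underline{v})\). Alternatively, the criterion fails automatically because \(2y - a(y) \le 2\underline{v} - \mathbb{E}[V] < \underline{v} \le b(\widehat{m})\), consistent with no profit. Either way the equivalence holds.

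\textbf{Implication \(H(y)<y\).} Assume \(2y - a(y) > b(\widehat{m})\). Since \(y < \widehat{m}\), monotonicity of \(b\) (established in the proof of Theorem~\ref{thm:existence_zero_profit_midpoint}) gives \(b(y) \le b(\widehat{m})\). Therefore \(2y - a(y) > b(y)\), i.e. \(a(y) + b(y) < 2y\), which is \(H(y) < y\).

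The only delicate step is the boundary bookkeeping: positivity of \(\mathbb{P}(V>y)\), and the conventions for \(a,b\) at the endpoints of the support. That is where I would be careful; the rest is algebra using \(\widehat{m} = H(\widehat{m})\) and monotonicity of \(b\).
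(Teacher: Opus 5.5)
Your proof is correct and follows the paper's argument. You solve for the unique quote \(\widetilde{a}_i = 2y - b(\widehat{m})\), reduce profitability to the sign of the margin \(\widetilde{a}_i - a(y)\), and get \(H(y) < y\) from \(b(y) \le b(\widehat{m})\), as the paper does (it writes the last step via \(a(y) = 2H(y) - b(y)\)), with (i) by symmetry; your checks that \(\widetilde{a}_i < a(\widehat{m})\) and \(\mathbb{P}(V > y) > 0\) are details the paper leaves implicit.
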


\begin{restatable}{lemma}{extremalImmunity} \label{lem:extremal_immunity}
    Let \(m_{\min} \coloneq \min \mathcal{F}\) and \(m_{\max} \coloneq \max \mathcal{F}\). Then (i) no profitable ask undercut exists at \(m_{\min}\), and (ii) no profitable bid raise exists at \(m_{\max}\).
\end{restatable}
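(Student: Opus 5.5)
We argue by contradiction, combining \Cref{lem:implications_deviations} with the fixed-point structure from \Cref{thm:existence_zero_profit_midpoint}. For part (i), suppose a profitable ask undercut exists at \(m_{\min}\), inducing some mid-price \(y < m_{\min}\). By \Cref{lem:implications_deviations}(ii), this forces \(H(y) < y\). We will show that \(H\) must then have a fixed point in \([\underline{v}, y]\). Such a point is strictly smaller than \(m_{\min}\), contradicting minimality.

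\textbf{Tarski step.} Restrict \(H\) to the interval \([\underline{v}, y]\), which is a complete lattice. Because \(H\) is nondecreasing (shown in the proof of \Cref{thm:existence_zero_profit_midpoint}), for every \(m \in [\underline{v}, y]\) we have \(H(m) \le H(y) < y\). Moreover, \(H(m) \ge \underline{v}\), since \(H\) maps into \([\underline{v}, \overline{v}]\). So \(H\) is an order-preserving self-map of \([\underline{v}, y]\). The Knaster--Tarski theorem then gives a fixed point \(m' \in [\underline{v}, y]\). Hence \(m' \in \mathcal{F}\) and \(m' \le y < m_{\min}\), which is the contradiction we wanted.

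Part (ii) is symmetric. A profitable bid raise at \(m_{\max}\) inducing \(x > m_{\max}\) gives \(H(x) > x\) by \Cref{lem:implications_deviations}(i). Then \(H\) maps \([x, \overline{v}]\) into itself, because \(H(m) \ge H(x) > x\) there. Tarski yields a fixed point \(m' \ge x > m_{\max}\), contradicting maximality.

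\textbf{Well-definedness of \(\min\mathcal{F}\) and \(\max\mathcal{F}\).} The main obstacle is that the statement presupposes these extrema exist, and \(H\) need not be continuous, so \(\mathcal{F}\) need not be closed. We avoid this through the lattice structure. Tarski's theorem says the fixed-point set of an order-preserving map on a complete lattice is itself a complete lattice. Its least element is \(\inf\{m : H(m) \le m\}\) and its greatest element is \(\sup\{m : H(m) \ge m\}\), and both belong to \(\mathcal{F}\). So \(m_{\min}\) and \(m_{\max}\) are attained. A direct check is also possible: \(a\) and \(b\) are continuous when \(F\) is continuous, because the conditional expectations vary continuously as long as the conditioning events keep positive probability. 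Near the boundary, the one-sided limit conventions in \Cref{def:midpoint_operator} handle the degenerate events. This continuity gives closedness of \(\mathcal{F}\) outright, but the lattice argument does not need it, so we will rely on Tarski alone.
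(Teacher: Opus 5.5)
Your proof is correct and follows essentially the same route as the paper: you get \(m_{\min}\) and \(m_{\max}\) from the Knaster--Tarski fact that \(\mathcal{F}\) is a complete lattice, use \Cref{lem:implications_deviations} to obtain \(H(y)<y\) (resp.\ \(H(x)>x\)), and apply Knaster--Tarski on \([\underline{v},y]\) (resp.\ \([x,\overline{v}]\)) to produce a fixed point beyond the extremal one. The only difference is cosmetic: you bound \(H\) from below via the self-map property \(H(m)\ge\underline{v}\), whereas the paper invokes \(H(\underline{v})>\underline{v}\), and either suffices.
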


\begin{restatable}{lemma}{deviationsContagion} \label{lem:deviations_contagion}
    Let \(m_1, m_2 \in \mathcal{F}\) with \(m_1 < m_2\). Then:
    \begin{enumerate}[label=(\roman*), beginpenalty=10000]
        \item If a profitable bid raise at \(m_1\) induces a mid-price \(x > m_2\), then the bid raise inducing the same midpoint \(x\) at \(m_2\) is also profitable.
        
        \item If a profitable ask undercut at \(m_2\) induces a mid-price \(y < m_1\), then the ask undercut inducing the same midpoint \(y\) at \(m_1\) is also profitable.
    \end{enumerate}
\end{restatable}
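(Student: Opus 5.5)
The plan is to reduce both parts to the explicit profitability criteria of \Cref{lem:implications_deviations}, which make the condition depend on the base fixed point only through a single zero-profit quote, and then use the monotonicity of \(a(\cdot)\) and \(b(\cdot)\) established in the proof of \Cref{thm:existence_zero_profit_midpoint}. I will prove (i) in detail and obtain (ii) by the mirror argument.

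\textbf{Part (i).} Since the bid raise at \(m_1\) inducing mid-price \(x\) is profitable, \Cref{lem:implications_deviations}(i) applied at \(m_1\) gives \(2x - b(x) < a(m_1)\). Since \(a\) is nondecreasing and \(m_1 < m_2\), we have \(a(m_1) \le a(m_2)\). Hence \(2x - b(x) < a(m_2)\). Because \(x > m_2\) by hypothesis, the ``if'' direction of \Cref{lem:implications_deviations}(i) applied at \(m_2 \in \mathcal{F}\) shows that the bid raise inducing \(x\) at \(m_2\) is profitable.

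\textbf{Well-definedness at \(m_2\).} Before invoking the criterion, I will check that the bid raise inducing \(x\) at \(m_2\) is a legitimate deviation. With ask withdrawn, it posts the bid \(\widetilde{b} = 2x - a(m_2)\) against the competitors' best ask \(a(m_2)\).
\begin{itemize}
\item It is a strict raise: since \(m_2 = H(m_2)\), we have \(b(m_2) = 2m_2 - a(m_2)\), so \(x > m_2\) gives \(\widetilde{b} > b(m_2) \ge 0\).
\item It stays below the prevailing ask: since \(b(x) \le x\), we get \(x \le 2x - b(x) < a(m_2)\), hence \(\widetilde{b} < a(m_2) \le 1\).
\end{itemize}
So the deviation lies in \(\mathcal{S}_i\) and the characterization applies.

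\textbf{Part (ii).} By \Cref{lem:implications_deviations}(ii) at \(m_2\), profitability gives \(2y - a(y) > b(m_2)\). Monotonicity of \(b\) gives \(b(m_2) \ge b(m_1)\), so \(2y - a(y) > b(m_1)\). Since \(y < m_1\), the converse direction of \Cref{lem:implications_deviations}(ii) at \(m_1\) yields profitability.

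For feasibility, the undercut ask is \(\widetilde{a} = 2y - b(m_1)\). It satisfies \(\widetilde{a} < 2m_1 - b(m_1) = a(m_1)\). It also satisfies \(\widetilde{a} > b(m_1)\), because \(a(y) \ge y\) gives \(y \ge 2y - a(y) > b(m_1)\). With the bid withdrawn to \(0\), the quote pair is feasible.

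The only potential obstacle is this feasibility bookkeeping, i.e., confirming that ``the unique bid raise (ask undercut) inducing mid-price \(x\) (\(y\))'' at the other fixed point is an admissible strategy. As shown, it follows from the fixed-point identity together with \(b(x) \le x \le a(x)\).
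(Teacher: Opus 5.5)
Your proof is correct and is essentially the paper's argument: both parts reduce to the profitability criteria of \Cref{lem:implications_deviations} and transfer the inequality using the monotonicity \(a(m_1) \le a(m_2)\) and \(b(m_1) \le b(m_2)\). Your feasibility check, that the induced quote at the other fixed point is a genuine strict raise or undercut within \([0,1]\), is a harmless addition that the paper leaves implicit.
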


As anticipated above, we now combine these properties to identify a fixed point that admits no profitable unilateral deviation and hence constitutes a PSNE mid-price.

\begin{figure}[t]
    \centering
    \begin{tikzpicture}[thick, >=stealth, x=1.4cm]
    \tikzset{
        case_text/.style={
            align=center, 
            font=\small, 
            text=black!90, 
            anchor=north,
            inner sep=4pt
        },
        fixed_point/.style={
            circle, 
            fill=black, 
            inner sep=1.8pt
        },
        brace_style/.style={
            decorate, 
            decoration={brace, amplitude=5pt, mirror}, 
            line width=0.8pt
        }
    }
    
    \draw[->, line width=1pt, black!70] (0,0) -- (9.5,0);
    
    \node[fixed_point, fill=blue!80!black] (mk) at (1.5,0) {};
    \node[below=3pt, blue!80!black, font=\small] at (mk) {\(m_k\)};
    \draw[dashed, blue!40, line width=0.6pt] (1.5,0) -- (1.5,0.7);
    \node[above, align=center, blue!80!black, font=\footnotesize] at (1.5,0.7)
        {Max fixed point \(< y\)};
    
    \draw[cyan!70!black] plot[mark=x, mark size=4] coordinates {(4.5,0)};
    \node[] (y) at (4.5,0) {};
    \node[below=3pt, cyan!70!black, font=\small] at (y) {\(y\)};
    \draw[dashed, cyan!70!black, line width=0.6pt] (4.5,0) -- (4.5,0.7);
    \node[above, align=center, cyan!70!black, font=\footnotesize] at (4.5,0.7)
        {Hyp. undercut midpoint\\ \scriptsize{(\(H(y) < y\))}};
    
    \node[fixed_point, fill=green!60!black] (mstar) at (7.5,0) {};
    \node[below=3pt, green!60!black, font=\small, xshift=-4pt] at (mstar) {\(m^-\)};
    \draw[dashed, green!40!black!40, line width=0.6pt] (7.5,0) -- (7.5,0.7);
    \node[above, align=center, green!60!black, font=\footnotesize] at (7.5,0.7)
        {PSNE-candidate fixed point\\ \scriptsize{(Bid-raise-immune)}};
    
    \draw[orange!90!black] plot[mark=x, mark size=3pt] coordinates {(2.5,0)}
        node[below=3pt, orange!90!black, font=\small] {\(x\)};
    
    \draw[->, orange!90!black, line width=0.7pt, bend left=60]
        (2.6, 0.1) to
        node[below=-1pt, font=\scriptsize] {\(H(x)\)}
        (3.5, 0.1);
    
    \draw[->, orange!90!black, line width=0.7pt, bend left=60]
        (4.5, -0.1) to
        node[above=-1pt, font=\scriptsize] {\(H(y)\)}
        (3.5, -0.1);
    
    \fill[orange!90!black] (3.5, 0) circle (1.5pt)
        node[above=0.5pt, orange!90!black, font=\small, xshift=7pt] {\(m'\)};
    
    \draw[purple!90!black] plot[mark=x, mark size=3pt] coordinates {(6.0,0)}
        node[below=3pt, purple!90!black, font=\small] {\(x\)};
    
    \draw[teal!80!black] plot[mark=x, mark size=3pt] coordinates {(8.6,0)}
        node[below=3pt, teal!90!black, font=\small] {\(x\)};
    
    \draw[->, teal!80!black, dashed, line width=0.7pt, bend right=30]
        (8.5, 0.15) to
        node[above=-1pt, font=\scriptsize, align=center, xshift=18pt]
        {Bid-raise contagion}
        (7.6, 0.15);
    
    \draw[brace_style, orange!90!black] (1.6,-0.6) -- (4.4,-0.6) 
        node[midway, below=4pt, case_text, text width=2.4cm] {
            \textbf{Case 1}: \(x < y\)\\[0.5ex]
            \(\Downarrow\)\\
            \color{red}{\textbf{Contradicts}}\\
            \color{red}{\textbf{maximality of \(m_k\)}}
        };
        
    \draw[brace_style, purple!90!black] (4.6,-0.6) -- (7.3,-0.6) 
        node[midway, below=4pt, case_text, text width=2.4cm] {
            \textbf{Case 2}: \(y < x < m^-\)\\[0.5ex]
            \(\Downarrow\)\\
            \color{red}{\textbf{Contradicts}}\\
            \color{red}{\textbf{\(x > y\)}}
        };
    
    \draw[brace_style, teal!90!black] (7.6,-0.6) -- (9.4,-0.6) 
        node[midway, below=4pt, case_text, text width=2.1cm] {
            \textbf{Case 3}: \(x > m^-\)\\[0.5ex]
            \(\Downarrow\)\\
            \color{red}{\textbf{Contradicts}}\\
            \color{red}{\textbf{\(m^- \in \mathcal{F}_B\)}}
        };
\end{tikzpicture}
    \caption{
        Visual breakdown of the three case contradictions in the proof of \Cref{thm:nash_existence}.
    }
    \label{fig:nash_existence}
\end{figure}

\begin{restatable}[Existence of a Pure-Strategy Nash Equilibrium]{theorem}{nashExistence} \label{thm:nash_existence}
    Define
    \[
        \mathcal{F}_B \coloneq \{ m\in\mathcal{F} \mid \text{no profitable bid raise exists at } m \}
    \]
    and \(m^- \coloneq \inf \mathcal{F}_B\). Then \(\mathcal{F}_B\neq\varnothing\), and \(\mathbf{s}(m^-)\) is a pure-strategy Nash equilibrium.
\end{restatable}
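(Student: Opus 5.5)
We will use the three lemmas stated before the theorem, together with the case analysis sketched in \Cref{fig:nash_existence}. First, \(\mathcal{F}_B \neq \varnothing\) follows at once from \Cref{lem:extremal_immunity}(ii), since \(m_{\max} \in \mathcal{F}_B\). Here we must check that \(\mathcal{F}\) is closed, so that \(\min\) and \(\max\) exist. We argue this from the continuity of \(H\) on \((\underline{v},\overline{v})\) (the map \(a(\cdot)\), \(b(\cdot)\) is continuous because \(V\) is continuous) together with \Cref{thm:existence_zero_profit_midpoint}.

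Next we show that \(m^- \in \mathcal{F}_B\), i.e.\ that the infimum is attained. Take \(m_n \in \mathcal{F}_B\) with \(m_n \downarrow m^-\); closedness of \(\mathcal{F}\) gives \(m^- \in \mathcal{F}\). Suppose a bid raise at \(m^-\) inducing mid-price \(x > m^-\) were profitable. Then \(x > m_n\) for large \(n\). Applying \Cref{lem:deviations_contagion}(i) with \(m_1 = m^-\) and \(m_2 = m_n\) shows that the bid raise inducing \(x\) at \(m_n\) is also profitable, contradicting \(m_n \in \mathcal{F}_B\). If \(m^-\) is isolated, it already belongs to \(\mathcal{F}_B\), so this case needs no argument. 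Hence \(m^-\) admits no profitable bid raise.

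By \Cref{lem:single_sided_deviations}, it remains to rule out a profitable ask undercut at \(m^-\). Suppose one exists, inducing mid-price \(y < m^-\). \Cref{lem:implications_deviations}(ii) gives \(H(y) < y\). We now apply Tarski on the lattice \([\underline{v}, y]\). Since \(H(\underline{v}) > \underline{v}\) and \(H(y) < y\), \(H\) maps \([\underline{v}, y]\) into itself by monotonicity, so \(H\) has fixed points below \(y\). Let \(m_k\) be the largest of them, which exists by closedness. Because \(m_k < y < m^-\), minimality of \(m^-\) forces \(m_k \notin \mathcal{F}_B\). So there is a profitable bid raise at \(m_k\) inducing some \(x > m_k\), and \Cref{lem:implications_deviations}(i) gives \(H(x) > x\). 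We split into three cases.
\begin{enumerate}
    \item \(x < y\). Monotonicity maps \([x,y]\) into itself, so Tarski yields a fixed point \(m' \in [x,y]\), and indeed \(m' \in (x,y)\). Since \(m' > m_k\), this contradicts the maximality of \(m_k\).
    \item \(y \le x < m^-\). Here we exploit the two profitability inequalities directly. The undercut condition is \(2y - a(y) > b(m^-)\), and the bid-raise condition is \(2x - b(x) < a(m_k)\). We aim to combine them using \(a(m_k) \le a(y)\), \(b(x) \le b(m^-)\) and \(m_k < y \le x\) to reach the contradiction \(x < y\). Alternatively, we observe that \(H > \mathrm{id}\) at \(x\) while \(H < \mathrm{id}\) at \(y\) with \(y \le x\), and locate a fixed point in \((m_k, y]\) via the \(m_k\)-maximality structure. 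We expect this case to be the main obstacle, since it requires combining the inequalities carefully, and possibly a second application of Tarski on \([y,x]\).
    \item \(x > m^-\). \Cref{lem:deviations_contagion}(i) with \(m_1 = m_k\) and \(m_2 = m^-\) shows that the bid raise inducing \(x\) at \(m^-\) is profitable, contradicting \(m^- \in \mathcal{F}_B\).
\end{enumerate}
The boundary case \(x = m^-\) is excluded because \(H(x) > x\) fails at a fixed point.

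All cases lead to contradictions, so \(\mathbf{s}(m^-)\) admits neither profitable bid raises nor profitable ask undercuts. By \Cref{lem:single_sided_deviations}, it admits no profitable unilateral deviation, and is therefore a PSNE.
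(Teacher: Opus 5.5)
Your proposal reproduces the paper's proof almost step for step. It uses the same attainment argument for \(m^-\in\mathcal{F}_B\) via \Cref{lem:deviations_contagion}, the same auxiliary point \(m_k\) (the largest fixed point below \(y\)), and the same three-way split on the position of \(x\). Your way of producing a fixed point below \(y\) is valid: you apply Knaster--Tarski directly on \([\underline{v},y]\). It is a slightly shorter substitute for the paper's detour, which shows \(m_{\min}<y<m^-\) via \Cref{lem:extremal_immunity} and contagion.

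The one step you do not carry out is Case 2, which you call the main obstacle; it is not one. The inequalities you list already close it in one line:
\[
    2x-b(x) < a(m_k) \le a(y) < 2y-b(m^-) \le 2y-b(x).
\]
The links are, in order:
\begin{itemize}
    \item the bid-raise condition at \(m_k\);
    \item monotonicity of \(a\) with \(m_k<y\);
    \item the undercut condition at \(m^-\);
    \item monotonicity of \(b\) with \(x<m^-\).
\end{itemize}
Adding \(b(x)\) to both ends gives \(x<y\), contradicting \(y\le x\); this covers \(x=y\) as well.

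The fallback routes you mention would not help:
\begin{itemize}
    \item Since \(H(y)<y\), \(H\) does not map \([y,x]\) into itself, so Knaster--Tarski says nothing there.
    \item A fixed point in \((y,x)\), which continuity would supply, lies in \((y,m^-)\). It contradicts neither the maximality of \(m_k\), which concerns only fixed points below \(y\), nor the definition of \(m^-\).
    \item Nothing forces a fixed point in \((m_k,y]\).
\end{itemize}
With the displayed chain replacing your hedged Case 2, the proof is complete and coincides with the paper's.
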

\begin{proofsketch}
    Since \(H\) is continuous, \(\mathcal{F}\) is compact: it has a minimum and maximum fixed point, \(m_{\min}\) and \(m_{\max}\), and contains the infimum and supremum of any of its nonempty subsets. \Cref{lem:extremal_immunity} makes \(m_{\max}\) immune to bid raises, so \(\mathcal{F}_B \neq \varnothing\) and \(m^- = \inf \mathcal{F}_B \in \mathcal{F}\). A short contagion argument (\Cref{lem:deviations_contagion}) shows \(m^- \in \mathcal{F}_B\), so \(m^- = \min \mathcal{F}_B\) is itself immune to bid raises.

    It remains to show that \(m^-\) is also immune to undercuts. Suppose, towards a contradiction, that a profitable undercut at \(m^-\) induces \(y<m^-\). \Cref{lem:extremal_immunity,lem:deviations_contagion} place \(y\) strictly between \(m_{\min}\) and \(m^-\), and compactness gives a largest fixed point \(m_k<y\). Since \(m_k < m^- = \min \mathcal{F}_B\), \(m_k\) admits a profitable bid raise to some \(x>m_k\); \Cref{lem:implications_deviations} gives \(H(x) > x\), which rules out \(x = y\) and \(x = m^-\), leaving three cases for \(x\):
    \begin{enumerate}
        \item Case \(x < y\): \(H\) then maps \([x,y]\) into itself, so Knaster--Tarski produces a fixed point strictly between \(x\) and \(y\), contradicting the maximality of \(m_k\).
        
        \item Case \(y < x < m^-\): Inequalities from \Cref{lem:implications_deviations} combine, via monotonicity of \(a(\cdot), b(\cdot)\), into \(x<y\), contradicting \(y<x\).
            
        \item Case \(x > m^-\): \Cref{lem:deviations_contagion} carries the bid raise at \(m_k\) forward to \(m^-\), contradicting \(m^- \in \mathcal{F}_B\).
    \end{enumerate}
    No case survives, so \(m^-\) is also undercut-immune, and \Cref{lem:single_sided_deviations} concludes. The full proof is given in \Cref{app:proofs_equilibrium}.
\end{proofsketch}

While the existence result above focuses on the minimal fixed point immune to bid raises, a natural dual argument applies symmetrically to undercuts.

\begin{restatable}[Dual Existence of a Pure-Strategy Nash Equilibrium]{theorem}{nashExistenceDuality} \label{thm:nash_existence_duality}
    Define
    \[
        \mathcal{F}_A\coloneq \{m\in\mathcal{F} \mid \text{no profitable aks undercut exists at } m\}
    \]
    and \(m^+ \coloneq \sup \mathcal{F}_A\). Then \(\mathcal{F}_A\neq\varnothing\), and \(\mathbf{s}(m^+)\) is a pure-strategy Nash equilibrium.
\end{restatable}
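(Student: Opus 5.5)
The plan mirrors the proof of \Cref{thm:nash_existence} with the roles of bid raises and ask undercuts exchanged and the order on \([\underline{v},\overline{v}]\) reversed.

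Since \(H\) is continuous, \(\mathcal{F}\) is compact and contains the supremum of any nonempty subset. By \Cref{lem:extremal_immunity}(i), \(m_{\min}\in\mathcal{F}_A\), so \(\mathcal{F}_A\neq\varnothing\) and \(m^+ = \sup\mathcal{F}_A\in\mathcal{F}\). To show \(m^+\in\mathcal{F}_A\), suppose a profitable undercut at \(m^+\) induces some \(y<m^+\). Take \(m\in\mathcal{F}_A\) with \(y<m\le m^+\); such an \(m\) exists because \(m^+\) is a supremum, and if \(m^+\) is itself attained we are already done. Then \Cref{lem:deviations_contagion}(ii), applied to the pair \(m<m^+\), transfers the undercut to \(m\). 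This contradicts \(m\in\mathcal{F}_A\), so \(m^+=\max\mathcal{F}_A\) is undercut-immune.

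Next we show \(m^+\) is immune to bid raises. Suppose a profitable bid raise at \(m^+\) induces some \(x>m^+\). By \Cref{lem:extremal_immunity}(ii) we have \(m^+<m_{\max}\). If \(x>m_{\max}\), then \Cref{lem:deviations_contagion}(i) would make the raise to \(x\) profitable at \(m_{\max}\), which is impossible; hence \(m^+<x<m_{\max}\). Since \(H(x)>x\) by \Cref{lem:implications_deviations}(i), \(x\notin\mathcal{F}\), and by compactness there is a smallest fixed point \(m_k>x\). Because \(m_k>m^+=\max\mathcal{F}_A\), the point \(m_k\) admits a profitable undercut to some \(y<m_k\), and \(H(y)<y\) excludes \(y\in\{x,m^+\}\). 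We then split into three cases.
\begin{enumerate}
\item \textbf{Case \(y>x\).} Here \(H(x)>x\) and \(H(y)<y\), so by monotonicity \(H\) maps \([x,y]\) into itself. Knaster--Tarski then yields a fixed point in \((x,y)\subset(x,m_k)\), contradicting the minimality of \(m_k\).
\item \textbf{Case \(m^+<y<x\).} The profitability conditions of \Cref{lem:implications_deviations} read \(2x-b(x)<a(m^+)\) and \(2y-a(y)>b(m_k)\). Using \(m^+<y\) gives \(a(m^+)\le a(y)\), and using \(x<m_k\) gives \(b(x)\le b(m_k)\). Adding the two inequalities then yields \(2x-b(x)<a(y)<2y-b(m_k)\le 2y-b(x)\), hence \(x<y\), a contradiction.
\item \textbf{Case \(y<m^+\).} \Cref{lem:deviations_contagion}(ii), applied to the pair \(m^+<m_k\), makes the undercut to \(y\) profitable at \(m^+\), contradicting \(m^+\in\mathcal{F}_A\).
\end{enumerate}

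Since \(m^+\) admits neither a profitable undercut nor a profitable bid raise, \Cref{lem:single_sided_deviations} shows that \(\mathbf{s}(m^+)\) admits no profitable unilateral deviation, so it is a PSNE.

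The main obstacle is Case 2. We must check that the monotonicity directions combine correctly in the reflected setting, in particular that \(a(\cdot)\) and \(b(\cdot)\) are evaluated at points ordered so that the inequalities chain as written. A secondary subtlety arises if \(m^+\) is not attained by \(\mathcal{F}_A\) and \(m^+\) is an accumulation point. In that situation we choose \(m\in\mathcal{F}_A\) close enough that \(m>y\), which is possible exactly because \(y<m^+\).
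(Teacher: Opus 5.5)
Your proposal is correct and follows the paper's own proof essentially step for step. You get nonemptiness from \Cref{lem:extremal_immunity}, show \(m^+ = \max \mathcal{F}_A\) via \Cref{lem:deviations_contagion}, and exclude a bid raise through the smallest fixed point \(m_k > x\) with the same three-way split on \(y\); your Case 2 simply writes out the chain \(2x - b(x) < a(m^+) \le a(y) < 2y - b(m_k) \le 2y - b(x)\) that the paper compresses into one sentence.
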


\Cref{thm:nash_existence,thm:nash_existence_duality} thus yield two, possibly coinciding, equilibrium mid-prices, \(m^-\) and \(m^+\), obtained from dual constructions. We now show that these are not just any two equilibria: they bracket every other one.

\begin{restatable}[Structure of Equilibria]{theorem}{nashStructure} \label{thm:nash_structure}
    The equilibrium midpoints \(m^-\) and \(m^+\) satisfy \(m^- \le m^+\), and are respectively the smallest and largest equilibrium midpoints. Hence, every equilibrium mid-price belongs to the interval \([m^-,m^+]\).
\end{restatable}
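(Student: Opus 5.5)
We first identify the set of equilibrium midpoints. Any PSNE has mid-price \(\widehat m \in \mathcal{F}\) (by the zero-profit observations), and its best quotes are \(a(\widehat m)\), \(b(\widehat m)\). Here a point \(\widehat m \in \mathcal{F}\) counts as an \emph{equilibrium midpoint} if the zero-profit profile \(\mathbf{s}(\widehat m)\) is a PSNE. We will argue that this happens exactly when \(\widehat m \in \mathcal{F}_A \cap \mathcal{F}_B\). By \Cref{lem:single_sided_deviations}, \(\mathbf{s}(\widehat m)\) is a PSNE if and only if it admits neither a profitable ask undercut nor a profitable bid raise.

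There is one gap. If the statement refers to arbitrary PSNE profiles rather than symmetric zero-profit ones, we need that their midpoints also lie in \(\mathcal{F}_A\cap\mathcal{F}_B\). For this we would note that a profitable ask undercut or bid raise against \(\mathbf{s}(\widehat m)\) remains profitable against any profile with the same best quotes and \(n_A, n_B\ge 2\). The reason is that a deviating maker with strictly better quotes faces only the best quotes of the others, and these are unchanged since at least two makers sit at the best price. So the equilibrium midpoint set is \(\mathcal{E}\coloneq\mathcal{F}_A\cap\mathcal{F}_B\). By \Cref{thm:nash_existence,thm:nash_existence_duality}, \(m^-\in\mathcal{E}\) and \(m^+\in\mathcal{E}\).

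The claim then reduces to two order statements. The first is \(m^-\le m\) for every \(m\in\mathcal{E}\). This is immediate: \(m\in\mathcal{F}_B\) and \(m^-=\inf\mathcal{F}_B\), which the proof of \Cref{thm:nash_existence} shows is attained. Symmetrically, \(m\le \sup\mathcal{F}_A=m^+\) for every \(m\in\mathcal{E}\). Applying both bounds to \(m=m^-\in\mathcal{E}\) (or \(m=m^+\)) gives \(m^-\le m^+\). Hence \(m^-\) and \(m^+\) belong to \(\mathcal{E}\) and are respectively its minimum and maximum, so every equilibrium mid-price lies in \([m^-,m^+]\).

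The main obstacle is the characterization step when the statement is read over general (possibly asymmetric) PSNE profiles. We must check carefully that the deviation profits in \Cref{lem:implications_deviations} depend only on \(\widehat m\) and on the competitors' best quotes. For a strict undercut \(\widetilde a_i<a(\widehat m)\) paired with bid withdrawal, the induced mid-price is \((\widetilde a_i+b(\widehat m))/2\) regardless of how many others quote \(b(\widehat m)\), and the payoff is \(\mathbb{P}(V>y)(\widetilde a_i-a(y))\), which is independent of the profile. The only subtlety is a maker who is itself one of the best quoters and is withdrawing. This is handled by \(n_B\ge2\) from \Cref{obs:best_price_competition}, which ensures the remaining best bid is still \(b(\widehat m)\). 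If that reading is not intended, the statement concerns only zero-profit profiles and the argument reduces to the two inequalities above.
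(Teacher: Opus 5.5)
Your proposal is correct and follows the paper's proof: every PSNE mid-price \(m^*\) admits neither a profitable ask undercut nor a profitable bid raise, so \(m^* \in \mathcal{F}_A \cap \mathcal{F}_B\) and \(m^- = \inf \mathcal{F}_B \le m^* \le \sup \mathcal{F}_A = m^+\), and taking \(m^* = m^-\) gives \(m^- \le m^+\). Your extra step, which transfers undercuts and bid raises from \(\mathbf{s}(\widehat m)\) to an arbitrary PSNE profile with the same best quotes (using zero baseline profit and \(n_A, n_B \ge 2\)), is valid and makes explicit a point the paper leaves implicit.
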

\begin{proof}
    Any PSNE mid-price \(m^*\) admits neither a profitable ask undercut nor a profitable bid raise, so \(m^*\in\mathcal{F}_A\cap\mathcal{F}_B\) and hence \(m^-=\inf\mathcal{F}_B\le m^*\le\sup\mathcal{F}_A=m^+\). Since \(m^-\) and \(m^+\) are themselves PSNE mid-prices (\Cref{thm:nash_existence,thm:nash_existence_duality}), taking \(m^*=m^-\) yields \(m^-\le m^+\), and the bounds show that they are the smallest and largest PSNE mid-prices, respectively.
\end{proof}

Concretely, combining all preceding results, for any equilibrium mid-price \(m^* \in [m^-, m^+]\), let \(a(m^*)\) and \(b(m^*)\) be the corresponding zero-profit quotes. Then, any profile \(\mathbf{s} = (\mathbf{b}, \mathbf{a})\) in which
\begin{enumerate}
    \item at least two market makers quote the best ask \(a(m^*)\) and at least two market makers quote the best bid \(b(m^*)\) (ensuring \(n_A(\mathbf{a}) \ge 2\) and \(n_B(\mathbf{b}) \ge 2\)), and
    
    \item all remaining market makers \(k\) submit non-improving quotes, i.e., \(a_k \ge a(m^*)\) and \(b_k \le b(m^*)\),
\end{enumerate}
constitutes a pure-strategy Nash equilibrium. Thus, equilibrium is robust to any passive configuration of non-competitively quoting market makers.

\begin{figure}[t]
    \centering
    \begin{tikzpicture}[
    x=0.75\linewidth, y=1cm,
    font=\scriptsize\sffamily,
    >={Stealth[length=5pt,width=4pt]}
]

    \colorlet{eq}{blue!75!black}
    \colorlet{dev}{red!70!black}

    \def\vmin{0.25}   
    \def\vmax{1.00}   
    \def\mstar{0.12}  
    \def\astar{0.75}  
    \def\bprime{0.25} 
    \def\aprime{0.85} 
    \def\y{0.50}      

    \fill[orange!20] (\vmin, -0.05) rectangle (\y, 0.05);

    \draw[thick, black] (\mstar-0.08, 0) -- (\vmax+0.05, 0) node[right, font=\bfseries\scriptsize] {$V$};

    \draw[thick] (\vmin, -0.1) -- (\vmin, 0.1) node[above] {$\underline{v}$};
    \draw[thick] (\vmax, -0.1) -- (\vmax, 0.1) node[above] {$\overline{v}$};

    \node[eq, anchor=south west, font=\bfseries\footnotesize] at (\mstar-0.08, 2.3) {Out-of-Support Zero-Profit Profile ($\widehat{m} \le \underline{v}$)};

    \draw[eq, dashed] (\mstar, 0) -- (\mstar, 1.8) node[above] {$\widehat{m}$};
    \draw[eq, dashed] (\astar, 0) -- (\astar, 0) node[above] {$a(\widehat{m})$};

    \fill[eq] (\mstar, 0) circle (1.5pt);
    \fill[eq] (\astar, 0) circle (1.5pt);

    \node[black!60, font=\tiny, align=center, rotate=90] at (\mstar-0.04, 1) {Inactive Bid Side};

    \draw[->, eq, thick] (\vmin+0.02, 0.8) -- (\vmax-0.02, 0.8) node[midway, fill=white, inner sep=2pt, font=\scriptsize] {100\% Buy Orders};

    \draw[decorate, decoration={brace, amplitude=5pt, mirror}, eq, thick] 
        (\vmax-0.01, 1.4) -- (\vmin+0.01, 1.4) 
        node[midway, above=2pt] {$u^\mathrm{ask}_i(\mathbf{s}(\widehat{m})) = 0$};

    \node[dev, anchor=north west, font=\bfseries\footnotesize] at (\mstar-0.08, -2.3) {Unilateral Deviation ($\widetilde{s}_i = (\widetilde{b}_i, \widetilde{a}_i)$)};

    \draw[dev, dashed] (\bprime, 0) -- (\bprime, 0) node[below] {$\widetilde{b}_i$};
    \draw[black!60, dashed] (\aprime, 0) -- (\aprime, 0) node[below] {$\widetilde{a}_i$};
    \draw[dev, dashed] (\y, 0) -- (\y, -1.8) node[below] {$y$};

    \fill[dev] (\bprime, 0) circle (1.5pt);
    \fill[black!60] (\aprime, 0) circle (1.5pt);
    \fill[dev] (\y, 0) circle (1.5pt);

    \draw[->, thick, orange!90!black, bend right=20] (\mstar, -0.15) to node[below, font=\bfseries\tiny, inner sep=1.5pt] {Midpoint Shift} (\y, -0.15);

    \draw[<-, dev, thick] (\vmin+0.02, -1) -- (\y-0.02, -1) node[midway, fill=white, inner sep=1.5pt, text=dev, font=\tiny] {Restored Sell Flow};
    \draw[->, black!60, thick] (\y+0.02, -1) -- (\vmax-0.02, -1) node[midway, fill=white, inner sep=1.5pt, text=black!60, font=\tiny] {Remaining Buy Flow};

    \draw[decorate, decoration={brace, amplitude=5pt}, dev, thick] 
        (\y-0.01, -1.4) -- (\vmin+0.01, -1.4) 
        node[midway, below=2pt, align=center] {$u^\mathrm{bid}_i(\widetilde{s}_i, \mathbf{s}_{-i}(\widehat{m})) > 0$};

    \draw[decorate, decoration={brace, amplitude=5pt}, black!60, thick] 
        (\vmax-0.01, -1.4) -- (\y+0.01, -1.4) 
        node[midway, below=2pt, align=center, text=black!60] {$u^\mathrm{ask}_i(\widetilde{s}_i, \mathbf{s}_{-i}(\widehat{m})) = 0$};
\end{tikzpicture}
    \caption{
        Profitable boundary deviation at an out-of-support candidate midpoint \(\widehat{m} \le \underline{v}\). Posting \(\widetilde{b}_i = \underline{v}\) while withdrawing the ask shifts the induced midpoint to \(y > \underline{v}\), reactivating sell order flow on \([\underline{v}, y)\) with a strictly positive expected margin.
    }
    \label{fig:out_of_support_deviation}
\end{figure}

\paragraph{Out-of-Support Midpoints.}
Throughout our analysis, we restricted attention to interior midpoints \(m \in (\underline{v}, \overline{v})\), where both sides of the market are active with positive probability. Although candidate midpoints outside the support (\(\widehat{m} \le \underline{v}\) or \(\widehat{m} \ge \overline{v}\)) can exist, they are economically less interesting and can never sustain a PSNE. Indeed, as illustrated in \Cref{fig:out_of_support_deviation}, if \(\widehat{m} \le \underline{v}\), the bid side is completely inactive; a market maker can unilaterally quote \(\widetilde{b}_i = \underline{v}\) while withdrawing its ask, shifting the induced midpoint into the support and capturing sell orders with strictly positive expected profit, since \(V > \underline{v}\) almost surely. The case \(\widehat{m} \ge \overline{v}\) is symmetric, with \(\widetilde{a}_i = \overline{v}\). Therefore, we can restrict the search for PSNE mid-prices to \((\underline{v}, \overline{v})\).

\paragraph{Beyond Pure-Strategy Nash Equilibria.}
Extending our analysis to richer solution concepts, such as mixed-strategy Nash equilibria (MSNEs) and correlated equilibria (CEs), is a natural and interesting next step. Randomization and correlation can readily be layered onto our results without generating new economic outcomes:
\begin{itemize}
    \item \textbf{Mixed Strategies:} Passive makers, those quoting strictly non-competitively (\(a_i > a(m^*)\) and \(b_i < b(m^*)\)), can randomize freely, because they remain non-executing, so the prevailing best quotes remain bound to the zero-profit competitive threshold.
    \item \textbf{Correlated Strategies:} A public correlation device can synchronize agents on one of the (possibly multiple) PSNEs of \Cref{thm:nash_structure}, again leaving the zero-profit structure intact.
\end{itemize}
What we cannot show in general is the converse: that these are the \emph{only} routes to an MSNE or CE. The obstacle lies specifically in randomization: against a genuinely mixing opponent, a deviation's profitability becomes an expectation over its randomized quotes, whose sign is hard to control without further structure on \(F\). Indeed, under the regularity of \Cref{sec:learning} and uniqueness of the PSNE mid-price, we establish an asymptotic version of this characterization, but achieving it without such assumptions remains a nontrivial open problem.


\subsection{Impact of Asset Variance on Equilibrium Spreads} \label{subsec:asset_variance}

Having established the existence and structure of pure-strategy Nash equilibria, we now examine how the equilibrium bid--ask spread responds to changes in the uncertainty surrounding the underlying asset value \(V\). Intuitively, higher volatility expands the conditional expectations of extreme value realizations, increasing adverse selection risk and forcing makers to widen their quotes.

To formalize this intuition, consider a location-scale family of asset distributions generated by a standardized continuous random variable \(X\) supported on a bounded interval \([\underline{x}, \overline{x}]\), with zero mean \(\mathbb{E}[X]=0\) and unit variance \(\mathrm{Var}(X)=1\). For any location parameter \(\mu \in \mathbb{R}\) and scale parameter \(\sigma > 0\), define the fundamental value as
\[
    V_\sigma \coloneq \mu + \sigma X.
\]
Thus, \(V_\sigma\) has support \([\underline{v}_\sigma, \overline{v}_\sigma] = [\mu + \sigma \underline{x}, \mu + \sigma \overline{x}]\), which we require to lie in \([0, 1]\), with \(\mathbb{E}[V_\sigma] = \mu\) and \(\mathrm{Var}(V_\sigma) = \sigma^2\).
For any mid-price \(m \in (\underline{v}_\sigma, \overline{v}_\sigma)\), an affine change of variable gives the zero-profit ask and bid functions for \(V_\sigma\) as
\begin{align*}
    a_\sigma(m) 
    &\coloneq \mathbb{E}\left[V_\sigma \;\middle|\; V_\sigma > m\right] 
    = \mu + \sigma \,\mathbb{E}\left[X \;\middle|\; X > \tfrac{m-\mu}{\sigma}\right], \\
    b_\sigma(m) 
    &\coloneq \mathbb{E}\left[V_\sigma \;\middle|\; V_\sigma < m\right] 
    = \mu + \sigma \,\mathbb{E}\left[X \;\middle|\; X < \tfrac{m-\mu}{\sigma}\right].
\end{align*}
Hence the midpoint operator \(H_\sigma(m) \coloneq (a_\sigma(m)+b_\sigma(m))/2\) satisfies
\[
    H_\sigma(m) = \mu + \sigma H_1\left( \frac{m-\mu}{\sigma} \right),
\]
where \(H_1(z) \coloneq (\mathbb{E}[X \mid X>z]+\mathbb{E}[X\mid X<z] )/2\) is the normalized midpoint operator for \(X\), extended to \([\underline{x},\overline{x}]\) as in \Cref{def:midpoint_operator} so that \(H_1\) (hence \(H_\sigma\)) is a well-defined self-map of the closed interval.

\begin{proposition}[Scaling of the Zero-Profit Spread] \label{prop:spread_volatility}
    Let \(\mathcal{F}_1 \subset (\underline{x}, \overline{x})\) denote the set of fixed points of \(H_1\), and define the spread functions
    \begin{align*}
        w_1(z)
        &\coloneq \mathbb{E}\left[ X \;\middle|\; X > z \right] - \mathbb{E}\left[ X \;\middle|\; X < z \right],\\
        w_\sigma(m)
        &\coloneq a_\sigma(m) - b_\sigma(m).
    \end{align*}
    For any \(\sigma > 0\):
    \begin{enumerate}[label=(\roman*)]
        \item The set of fixed points of \(H_\sigma\) satisfies \(\mathcal{F}_\sigma = \mu + \sigma \mathcal{F}_1\).
        \item For any \(\widehat{m}_\sigma \in \mathcal{F}_\sigma\), writing \(\widehat{m}_\sigma = \mu + \sigma z\) with \(z \in \mathcal{F}_1\), the spread satisfies
        \[
            w_\sigma(\widehat{m}_\sigma) = \sigma \cdot w_1(z).
        \]
    \end{enumerate}
\end{proposition}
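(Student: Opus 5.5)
The plan is to reduce everything to the affine conjugacy \(H_\sigma = \phi_\sigma \circ H_1 \circ \phi_\sigma^{-1}\), where \(\phi_\sigma(z) \coloneq \mu + \sigma z\) is an increasing bijection from \([\underline{x},\overline{x}]\) onto \([\underline{v}_\sigma,\overline{v}_\sigma]\). The first step is to verify this identity on the whole closed interval, and not only at interior points where the formula was derived. For interior \(m\), the identity follows from the change of variables already shown in the text. Indeed, since \(\sigma>0\), the event \(\{V_\sigma > m\}\) coincides with \(\{X > (m-\mu)/\sigma\}\), and linearity of conditional expectation gives \(a_\sigma(m) = \mu + \sigma\,\mathbb{E}[X \mid X > z]\) with \(z = \phi_\sigma^{-1}(m)\); the same argument applies to \(b_\sigma\). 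At the endpoints, the boundary conventions of \Cref{def:midpoint_operator} must be checked to be compatible. We have \(a_\sigma(\overline{v}_\sigma) = \overline{v}_\sigma = \mu + \sigma \overline{x}\), which matches \(\mu + \sigma\) times the extended value of \(\mathbb{E}[X\mid X>\overline{x}] \coloneq \overline{x}\). Likewise, \(b_\sigma(\overline{v}_\sigma) = \mathbb{E}[V_\sigma] = \mu = \mu + \sigma\,\mathbb{E}[X]\), and the lower endpoint is handled symmetrically. Thus \(a_\sigma = \phi_\sigma \circ a_1 \circ \phi_\sigma^{-1}\) and \(b_\sigma = \phi_\sigma \circ b_1 \circ \phi_\sigma^{-1}\), and averaging these two identities gives the conjugacy for \(H_\sigma\).

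For part (i), I will show both inclusions directly. If \(z \in \mathcal{F}_1\), then \(H_\sigma(\phi_\sigma(z)) = \phi_\sigma(H_1(z)) = \phi_\sigma(z)\), so \(\phi_\sigma(z) \in \mathcal{F}_\sigma\). Conversely, if \(m \in \mathcal{F}_\sigma\), set \(z = \phi_\sigma^{-1}(m) \in [\underline{x},\overline{x}]\). Then \(\phi_\sigma(H_1(z)) = H_\sigma(m) = m = \phi_\sigma(z)\), and injectivity of \(\phi_\sigma\) gives \(H_1(z) = z\). Hence \(\mathcal{F}_\sigma = \phi_\sigma(\mathcal{F}_1) = \mu + \sigma\mathcal{F}_1\). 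By \Cref{thm:existence_zero_profit_midpoint} applied to \(X\) (which is continuous and non-degenerate), \(\mathcal{F}_1\) is nonempty and contained in \((\underline{x},\overline{x})\). Consequently, all the points involved are interior, and the spread formula in part (ii) is evaluated only where the plain conditional-expectation formulas hold.

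For part (ii), take \(\widehat{m}_\sigma = \mu + \sigma z\) with \(z \in \mathcal{F}_1\), and subtract the two affine identities:
\[
    w_\sigma(\widehat{m}_\sigma) = \bigl(\mu + \sigma\,\mathbb{E}[X\mid X>z]\bigr) - \bigl(\mu + \sigma\,\mathbb{E}[X\mid X<z]\bigr) = \sigma\, w_1(z).
\]
The location parameter \(\mu\) cancels.

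There is no genuine obstacle here. The only point requiring care is the endpoint bookkeeping in the conjugacy, needed so that \(\phi_\sigma^{-1}(m)\) in the converse inclusion of part (i) cannot fall at a boundary point where the extended definitions differ. That possibility is ruled out anyway by \(\mathcal{F}_\sigma \subset (\underline{v}_\sigma,\overline{v}_\sigma)\), which is again \Cref{thm:existence_zero_profit_midpoint} applied to \(V_\sigma\). I will also remark that, because \(\mathcal{F}_1\) does not depend on \(\sigma\), the map \(\sigma \mapsto w_\sigma(\mu+\sigma z)\) is strictly increasing. The reason is that \(w_1(z) > 0\) for interior \(z\), since \(\mathbb{E}[X\mid X>z] > z > \mathbb{E}[X\mid X<z]\). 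This gives the claimed linear widening of the spread with volatility.
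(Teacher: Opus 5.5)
Your proposal is correct and follows the paper's proof: the substitution \(z = (m-\mu)/\sigma\) turns fixed points of \(H_\sigma\) into fixed points of \(H_1\), and subtracting the two affine expressions for \(a_\sigma\) and \(b_\sigma\) cancels \(\mu\) and gives \(\sigma\, w_1(z)\). Your check that the boundary conventions are compatible with the conjugacy is a harmless refinement that the paper leaves implicit.
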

\begin{proof}
    A point \(\widehat{m}_\sigma \in [\underline{v}_\sigma, \overline{v}_\sigma]\) is a fixed point of \(H_\sigma\) if and only if \(\widehat{m}_\sigma = H_\sigma(\widehat{m}_\sigma)\). Substituting the scaled expression for \(H_\sigma\) and setting \(z \coloneq (\widehat{m}_\sigma - \mu) / \sigma\), this is equivalent to
    \[
        \mu + \sigma z = \mu + \sigma H_1(z)
        \iff
        z = H_1(z).
    \]
    Thus, \(\widehat{m}_\sigma \in \mathcal{F}_\sigma\) if and only if \(z \in \mathcal{F}_1\), establishing part~(i).

    For part~(ii), evaluate the spread at \(\widehat{m}_\sigma = \mu + \sigma z\):
    \begin{align*}
        w_\sigma(\widehat{m}_\sigma)
        &= a_\sigma(\widehat{m}_\sigma) - b_\sigma(\widehat{m}_\sigma) \\
        &= \left( \mu + \sigma \, \mathbb{E}\left[ X \;\middle|\; X > z \right] \right) - \left( \mu + \sigma \, \mathbb{E}\left[ X \;\middle|\; X < z \right] \right) \\
        &= \sigma \left( \mathbb{E}\left[ X \;\middle|\; X > z \right] - \mathbb{E}\left[ X \;\middle|\; X < z \right] \right)
        = \sigma \cdot w_1(z),
    \end{align*}
    which concludes the proof.
\end{proof}

Since every PSNE mid-price is, in particular, a fixed point of the corresponding midpoint operator (\Cref{sec:equilibrium}), \Cref{prop:spread_volatility} applies to it directly: the equilibrium bid--ask spread at any PSNE \(\mu + \sigma z\) of the \(V_\sigma\)-market equals \(\sigma w_1(z)\), growing linearly in \(\sigma\).

To put this into perspective, while in information-based models, like \citet{glosten1985bid}, the bid--ask spread widens primarily as a defense mechanism against a higher proportion of informed traders, here the spread widens strictly due to the amplification of fundamental risk \(\sigma\), even in a framework where the underlying information structure remains fixed. Since every competing maker responds to the same volatility shock in the same way, this widening is synchronized across the market, producing exactly the kind of correlated price movement that recent tests for algorithmic collusion are designed to catch. \citet{hartline2024regulation} propose flagging collusion directly from the sequence of played prices, while \citet{hartline2025regulation} show that such price-based tests can be confounded whenever non-collusive forces already drive agents toward the same correlated play. Our result supplies exactly one such force: volatility-driven spread widening is a benign, competitive source of precisely the price patterns these tests are built to flag.


\section{Learning} \label{sec:learning}

We ask whether adaptive makers can recover the competitive price structure without knowing their payoff functions. Continuous-action bandit guarantees typically require smooth payoffs~\citep{cesa2023repeated, cesa2025market}, whereas best-quote allocation makes ours discontinuous. We therefore restrict quotes to a price grid with tick size \(\Delta\), giving rise to the \emph{grid game} \(\mathcal{G}_\Delta\), and show that the market prices of its correlated equilibria, and hence the time-averaged market prices of no-internal-regret learners, approach the Nash price envelope of the continuous game as \(\Delta \to 0\).

For simplicity, we focus on two makers and normalize the support of \(V\) to \([0, 1]\), but neither restriction is essential.\footnote{The argument extends from two makers to any fixed number by replacing the opponent's quote with the best rival quote and tracking ties, and from \([0, 1]\) to any bounded support by an affine rescaling.} As anticipated in~\Cref{sec:equilibrium}, this analysis requires additional regularity.

\begin{assumption}[Regularity] \label{ass:learning}
    The random variable \(V\) has a continuous density \(f > 0\) on \((0, 1)\), and the zero-profit quotes of \Cref{def:midpoint_operator} satisfy \(a'(m) + b'(m) \le 2\) for every \(m \in (0, 1)\).
\end{assumption}

This assumption holds throughout the rest of this section. The first condition makes the zero-profit quotes continuously differentiable, with derivatives
\[
    b'(m) = \dfrac{f(m)}{F(m)} (m - b(m)),
    \qquad
    a'(m) = \dfrac{f(m)}{1 - F(m)} (a(m) - m),
\]
and it also ensures \(b'(m), a'(m) > 0\), making the zero-profit quotes strictly increasing. The second condition bounds how fast \(H\) can move: \(H'(m) = (a'(m) + b'(m))/2 \le 1\), so \(G(m) \coloneq m - H(m)\) has \(G'(m) \ge 0\) and is nondecreasing.

\begin{restatable}[Structure of Equilibria under Regularity]{proposition}{nashPriceInterval} \label{prop:nash_structure_regularity}
    (i) The fixed-point set \(\mathcal{F}\) is a nonempty closed interval \([m^-, m^+]\); and (ii) every \(m^* \in \mathcal{F}\) is a PSNE mid-price, with market prices \((b(m^*), a(m^*))\).
\end{restatable}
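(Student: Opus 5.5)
Part (i) reduces to the monotone function \(G(m) = m - H(m)\). Under \Cref{ass:learning}, \(G\) is continuous and nondecreasing on \([0,1]\), since \(a\) and \(b\) are \(C^1\) on the interior and continuous up to the boundary by the one-sided limits in \Cref{def:midpoint_operator}. The fixed-point set is \(\mathcal{F} = G^{-1}(0)\). Because \(G\) is nondecreasing, the set \(G^{-1}(0)\) is convex; because \(G\) is continuous, it is closed. \Cref{thm:existence_zero_profit_midpoint} shows it is nonempty and contained in \((0,1)\), so it is a closed interval \([\ell, r]\). We then identify \(\ell, r\) with \(m^-, m^+\). Part (ii) shows every point of \(\mathcal{F}\) is a PSNE mid-price. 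By \Cref{thm:nash_structure}, \(m^- \le m^+\) are the extremal equilibrium mid-prices, so every PSNE mid-price lies in \([\ell,r]\). If (ii) holds, then \(\ell\) and \(r\) are themselves PSNE mid-prices, which forces \(\ell = m^-\) and \(r = m^+\).

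For (ii), fix \(\widehat{m} \in \mathcal{F}\). By \Cref{lem:single_sided_deviations} it suffices to rule out profitable ask undercuts and bid raises. Take a bid raise inducing \(x > \widehat{m}\). By \Cref{lem:implications_deviations}(i), profitability requires \(2x - b(x) < a(\widehat{m})\). Write \(\phi(x) \coloneq 2x - b(x) - a(x)\), which equals \(2G(x)\) and is therefore \(\ge 0\) for \(x \ge \widehat{m}\). Then
\[
    2x - b(x) - a(\widehat{m}) = \phi(x) + \bigl(a(x) - a(\widehat{m})\bigr).
\]
The first term is nonnegative. The second term is strictly positive, because \(a\) is strictly increasing under the density assumption and \(x > \widehat{m}\). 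Hence \(2x - b(x) > a(\widehat{m})\), and no bid raise is profitable. The ask undercut case is symmetric: for \(y < \widehat{m}\) we have \(G(y) \le 0\), so \(2y - a(y) \le b(y) < b(\widehat{m})\), which contradicts \Cref{lem:implications_deviations}(ii).

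Once no unilateral deviation is profitable, \(\mathbf{s}(\widehat{m})\) is a PSNE with market prices \((b(\widehat{m}), a(\widehat{m}))\).

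The main obstacle is the monotonicity step, specifically that \(G\) is nondecreasing on all of \([0,1]\). The derivative bound holds only on the open interval, so I would extend it to the closed interval by continuity of \(G\) at the endpoints. The argument also relies on strict monotonicity of \(a\) and \(b\), which comes from \(f>0\) via the derivative formulas. An alternative route for (ii) would be to use that \(H(x) > x\) is impossible for \(x > \widehat{m}\) since \(G(x) \ge 0\), but the weak inequality alone does not yield a contradiction. That is why I use the refined inequality above, where the strict increase of \(a\) supplies the strict gap.
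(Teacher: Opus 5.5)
Your proof is correct, but for (ii) it takes a different route from the paper's. The paper works directly with an arbitrary deviation \((\widetilde{b}, \widetilde{a})\) from the symmetric two-maker profile. It splits into four cases: no improvement, ask-only improvement, bid-only improvement, and improvement on both sides. In each case it checks the sign of every executed margin by hand. In the two-sided case this requires strict monotonicity of \(a\) together with the mean-value bound \(b(z) - b(m^*) \le 2(z - m^*)\), which comes from \(b' < 2\).

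You instead hand the two-sided coupling to \Cref{lem:single_sided_deviations} and the margin computation to \Cref{lem:implications_deviations}, so (ii) collapses to a sign check on \(G\). This buys three things. Your argument is shorter, it holds verbatim for any \(N\), and it shows that the only property really used is that \(G\) is nondecreasing. The paper's version buys self-containment: it stays within the two-maker setting of \Cref{sec:learning} and does not depend on the Section~3 lemmas.

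You also explicitly identify the endpoints \(\ell, r\) with the \(m^\pm\) of \Cref{thm:nash_existence,thm:nash_existence_duality}, a point the paper leaves implicit. A more direct way to see it: by (ii), \(\mathcal{F}_A = \mathcal{F}_B = \mathcal{F}\), so \(m^- = \min \mathcal{F}\) and \(m^+ = \max \mathcal{F}\).

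One remark in your last paragraph is mistaken: the alternative route you dismiss does work. \Cref{lem:implications_deviations}(i) concludes the strict inequality \(H(x) > x\), i.e.\ \(G(x) < 0\). This contradicts \(G(x) \ge G(\widehat{m}) = 0\) directly, and the undercut case is identical. For the same reason, the strict increase of \(a\) and \(b\) in your main argument is unnecessary. You only need to refute the strict inequalities \(2x - b(x) < a(\widehat{m})\) and \(2y - a(y) > b(\widehat{m})\), and weak monotonicity suffices for that.

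What you flag as the main obstacle is routine. \(G\) is continuous on \([0,1]\) with \(G' \ge 0\) on \((0,1)\), so the mean value theorem makes it nondecreasing on \([0,1]\). In any case, every midpoint a deviation induces lies in \((0,1)\): \(x \le (a(\widehat{m}) + 1)/2 < 1\) and \(y \ge b(\widehat{m})/2 > 0\).
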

\begin{proofsketch}
    For (i), \(\mathcal{F}\) is the zero set of \(G\), which is continuous and nondecreasing; \(\mathcal{F}\) is nonempty by \Cref{thm:existence_zero_profit_midpoint}, hence it is a closed interval. For (ii), at any \(m^* \in \mathcal{F}\), the same monotonicity, driven by \(a'(m) + b'(m) \le 2\), rules out both one-sided and two-sided deviations from \((b(m^*), a(m^*))\), so \(m^*\) is a PSNE mid-price. The full proof is given in \Cref{app:proofs_learning}.
\end{proofsketch}

Define \(b^\pm \coloneq b(m^\pm)\) and \(a^\pm \coloneq a(m^\pm)\). As \(m^*\) ranges over \(\mathcal{F} = [m^-, m^+]\), the market prices \((b(m^*), a(m^*))\) trace out the Nash price pairs, a monotone curve from \((b^-, a^-)\) to \((b^+, a^+)\), whose axis-aligned envelope we call the \emph{Nash price envelope}:
\[
    \mathcal{R}^{\mathrm{NE}} \coloneq [b^-, b^+] \times [a^-, a^+].
\]


\subsection{Zero-Profit Responses}

While \(a(\cdot)\) and \(b(\cdot)\) tie both zero-profit quotes to a common mid-price, we now fix the market quote on one side and ask which quote on the other side breaks even.

\begin{definition}[Zero-Profit Responses] \label{def:zero_profit_responses}
    For any market bid \(B \in [0, 1)\), \(\alpha(B)\) is the unique solution \(x \in (B, 1)\) of \(x = a((x + B)/2)\). Symmetrically, for any market ask \(A \in (0, 1]\), \(\beta(A)\) is the unique solution \(x \in (0, A)\) of \(x = b((A + x)/2)\). We extend them by \(\alpha(1) \coloneq 1\) and \(\beta(0) \coloneq 0\), consistently with their limits.
\end{definition}

In words, \(\alpha(B)\) is the only market ask that earns zero expected profit against the market bid \(B\), given the mid-price it induces; \(\beta(A)\) is its bid-side counterpart. Both are well defined, since \(x \mapsto x - a((x + B)/2)\) is strictly increasing, negative at \(x = B\) and positive at \(x = 1\), and symmetrically for \(\beta(A)\).\footnote{Its derivative \(1 - a'((x + B)/2)/2\) is positive, as \(a'(m) \le 2 - b'(m) < 2\); the signs follow from \(m < a(m) < 1\) for \(m \in [0, 1)\).} By the implicit function theorem, they are \(C^1\) on \([0, 1)\) and \((0, 1]\), respectively, with
\[
    \alpha'(B) = \dfrac{a'(m)}{2 - a'(m)}
    \qquad \text{and} \qquad
    \beta'(A) = \dfrac{b'(m)}{2 - b'(m)}
\]
at the induced mid-prices \(m = (B + \alpha(B))/2\) and \(m = (A + \beta(A))/2\). Hence, both are strictly increasing and Lipschitz on every compact subinterval of these intervals.

At any \(m^* \in \mathcal{F}\), since \(m^* = H(m^*)\), the responses recover the equilibrium quotes: \(\alpha(b(m^*)) = a(m^*)\) and \(\beta(a(m^*)) = b(m^*)\), and in particular \(\alpha(b^\pm) = a^\pm\) and \(\beta(a^\pm) = b^\pm\). Composing them yields the following property, proved in \Cref{app:proofs_learning}.

\begin{restatable}{lemma}{noCycle} \label{lem:no_cycle}
    For every market ask \(A \in (0, 1]\),
    \[
        \alpha(\beta(A))
        \begin{cases}
            > A & \text{if } A < a^-, \\
            = A & \text{if } A \in [a^-, a^+], \\
            < A & \text{if } A > a^+.
        \end{cases}
    \]
\end{restatable}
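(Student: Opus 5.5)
The plan is to relate the sign of \(\alpha(\beta(A)) - A\) to the sign of \(G(m) = m - H(m)\) evaluated at a suitable mid-price, and then use that \(\mathcal{F} = [m^-, m^+]\) is the zero set of the continuous nondecreasing function \(G\) (\Cref{prop:nash_structure_regularity}). Fix \(A \in (0,1]\), set \(B \coloneq \beta(A)\) and \(m \coloneq (A + B)/2\), so that \(B = b(m)\) by \Cref{def:zero_profit_responses}. For \(A = 1\) we have \(\alpha(\cdot) \le 1\), and the case \(A>a^+\) is immediate if \(\alpha(\beta(1))<1\), which holds since \(\alpha(B)<1\) for \(B<1\); I treat \(A \in (0,1)\) in the main argument.

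The key comparison: \(\alpha(B)\) is the unique zero of the strictly increasing map \(\phi_B(x) \coloneq x - a((x+B)/2)\) on \((B,1)\). Hence \(\alpha(B) > A\) iff \(\phi_B(A) < 0\), i.e.\ iff \(A < a(m)\), and similarly for equality and for \(>\). Now \(A - a(m) = 2m - B - a(m) = 2m - b(m) - a(m) = 2G(m)\). Thus \(\operatorname{sign}(\alpha(\beta(A)) - A) = -\operatorname{sign}\, G(m)\), with \(m = (A + \beta(A))/2\). (One must check \(A\in(B,1)\) so \(\phi_B(A)\) is within the domain where the sign characterization applies; this holds since \(\beta(A)<A\).)

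It remains to translate the condition on \(m\) into a condition on \(A\). The map \(A \mapsto m(A) = (A + \beta(A))/2\) is continuous and strictly increasing since \(\beta\) is. At \(A = a^\pm\), \(\beta(a^\pm) = b^\pm\), so \(m(a^\pm) = (a^\pm + b^\pm)/2 = H(m^\pm) = m^\pm\). By monotonicity, \(A < a^-\) gives \(m < m^-\), \(A \in [a^-, a^+]\) gives \(m \in [m^-, m^+]\), and \(A > a^+\) gives \(m > m^+\). Since \(G\) is nondecreasing with zero set exactly \([m^-, m^+]\), we get \(G(m) < 0\), \(=0\), \(>0\) respectively, which yields the three cases of the claim.

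The main obstacle I expect is purely bookkeeping at the boundaries: ensuring that \(m\) stays in \((0,1)\) (so \(a(m), b(m)\) are given by the conditional expectation formulas and \(G\) is well behaved there), handling the extended values \(\alpha(1)=1\), \(\beta(0)=0\), and verifying strictness of the inequalities \(G(m)<0\) for \(m<m^-\) (which follows because \(m^-\) is the minimum of the zero set and \(G\) is nondecreasing, so \(G\le 0\) left of \(m^-\) and nonzero there).
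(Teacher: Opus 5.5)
Your proposal is correct and follows the paper's proof essentially step for step. Like the paper, you rewrite \(A - a(m(A)) = 2G(m(A))\) to tie the sign of \(\alpha(\beta(A)) - A\) to that of \(-G(m(A))\), then use that \(m(A) = (A+\beta(A))/2\) is strictly increasing with \(m(a^\pm) = m^\pm\), and that \(G\) is nondecreasing with zero set exactly \([m^-, m^+]\). Your separate treatment of \(A = 1\) is harmless but unnecessary, since \(m(1) = (1+\beta(1))/2 < 1\) and the general argument already covers it.
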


That is, if the bid side responds to an ask \(A\) with \(\beta(A)\), and the ask side responds in turn with \(\alpha(\beta(A))\), the ask stays at \(A\) when \(A \in [a^-, a^+]\) and otherwise moves strictly toward this interval; the same holds for bids, with \(\beta \circ \alpha\) and \([b^-, b^+]\).


\subsection{The Grid Game}

Fix a tick size \(\Delta > 0\) with \(1/\Delta \in \mathbb{N}\), and let \(\Gamma_\Delta \coloneq \{0, \Delta, 2\Delta, \dots, 1\}\). The grid game \(\mathcal{G}_\Delta\) is the market-making game of \Cref{sec:model} with each maker's strategy space restricted to
\[
    \mathcal{S}_i^\Delta
    \coloneq \left\{ (b, a) \in \Gamma_\Delta^2 \;\middle|\; b \le a \right\} \subset \mathcal{S}_i
    \qquad \text{and} \qquad
    \mathcal{S}^\Delta \coloneq \prod\nolimits_{i \in \mathcal{N}} \mathcal{S}_i^\Delta,
\]
so that best quotes, mid-prices and payoffs are exactly as defined there. The next two lemmas identify profitable deviations on the grid: a best ask more than one tick below \(\alpha(B)\) is profitably raised by one tick, while a rival ask sufficiently many ticks above \(\alpha(B)\) is profitably undercut by one tick; symmetrically for bids.

\begin{restatable}{lemma}{aggressiveTick} \label{lem:aggressive_tick}
    Let \(\mathbf{s}=(\mathbf{b}, \mathbf{a})  \in \mathcal{S}^\Delta\), \(A \coloneq a^*(\mathbf{a})\), and \(B \coloneq b^*(\mathbf{b})\). If \(a_i = A\) and \(a_i + \Delta < \alpha(B)\), then raising \(a_i\) to \(a_i + \Delta\) strictly increases \(u_i\). Symmetrically, if \(b_i = B\) and \(b_i - \Delta > \beta(A)\), then lowering \(b_i\) to \(b_i - \Delta\) strictly increases \(u_i\).
\end{restatable}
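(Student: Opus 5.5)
The plan is to compare maker \(i\)'s payoff before and after the one-tick raise, side by side, using the decomposition \(u_i = u_i^{\mathrm{ask}} + u_i^{\mathrm{bid}}\) from \Cref{sec:model}. The bid statement follows by the mirror argument with \(\beta\) and \(b(\cdot)\).

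\emph{Sign of the ask margin.} For a market ask \(x\) facing market bid \(B\), let \(m(x) \coloneq (x+B)/2\). Write the aggregate ask-side profit as
\[
    g_B(x) \coloneq \mathbb{P}(V > m(x))\,\bigl(x - a(m(x))\bigr).
\]
The map \(x \mapsto x - a(m(x))\) is strictly increasing and vanishes at \(x = \alpha(B)\) (\Cref{def:zero_profit_responses}). So every \(x < \alpha(B)\) satisfies \(x < a(m(x))\). Both \(A\) and \(A+\Delta\) lie below \(\alpha(B)\), so both are strictly loss-making market asks. Since \(m(x) < 1\) and \(f > 0\), we have \(\mathbb{P}(V > m(x)) > 0\), and therefore \(g_B(x) < 0\) strictly.

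\emph{Case 1: \(i\) shares the best ask} (\(n_A = 2\)). After the raise, the rival alone quotes \(A\). The market ask, and hence the mid-price, is unchanged, so \(u_i^{\mathrm{bid}}\) is unchanged. Meanwhile \(u_i^{\mathrm{ask}}\) rises from \(g_B(A)/2 < 0\) to \(0\), so \(u_i\) strictly increases.

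\emph{Case 2: \(i\) is the sole best ask.} The market ask moves to \(A + \Delta\) and the mid-price moves up from \(m(A)\) to \(m(A+\Delta)\).

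For the ask side, differentiate \(g_B\):
\[
    g_B'(x) = -\tfrac{1}{2} f(m)\bigl(x - a(m)\bigr) + \mathbb{P}(V > m)\bigl(1 - \tfrac{1}{2}a'(m)\bigr), \qquad m = m(x).
\]
The first term is positive because \(x < a(m)\) on \([A, A+\Delta]\). The second term is positive because \(a'(m) \le 2 - b'(m) < 2\) by \Cref{ass:learning}. Hence \(g_B\) is strictly increasing on \([A, A+\Delta]\), and \(u_i^{\mathrm{ask}}\) strictly increases.

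For the bid side, if \(b_i < B\) then \(i\) earns nothing on the bid side both before and after. If \(b_i = B\), then
\[
    u_i^{\mathrm{bid}} = \frac{1}{n_B}\int_0^{m} (v - B) f(v)\,dv,
\]
whose derivative in \(m\) is \((m - B) f(m)/n_B \ge 0\), since \(m \ge B\). So the upward shift of the mid-price weakly increases \(u_i^{\mathrm{bid}}\). Summing the two sides, \(u_i\) strictly increases.

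\emph{Bid statement.} The mirror argument uses \(\beta(A)\), \(b(\cdot)\) and \(b' < 2\). Lowering \(b_i\) either removes a tied losing share, or strictly improves the sole bidder's profit \(\mathbb{P}(V < m)(b(m) - x)\). In the latter case the mid-price moves down, and any ask-side profit \(\frac{1}{n_A}\int_m^1 (A - v) f(v)\,dv\) of \(i\) has derivative \(-(A - m) f(m)/n_A \le 0\) in \(m\), so it weakly increases.

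The main obstacle is Case 2. Raising the sole ask sacrifices buyers with values in \((m(A), m(A+\Delta))\), who were individually profitable trades since \(x - v > 0\) there. The argument therefore needs the derivative computation to show that the margin gain dominates. The regularity condition \(a' < 2\) is exactly what guarantees this.
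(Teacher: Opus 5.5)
Your proof is correct and follows essentially the paper's route. You get the sign of the margin from monotonicity of \(x \mapsto x - a((x+B)/2)\), then a derivative argument on \([A, A+\Delta]\) using \(a' < 2\), then a split on ties. The only difference in the derivative step is bookkeeping: you show the ask part strictly increases and the bid part weakly increases (the latter via \(B = b_i \le a_i = A\), hence \(m \ge B\), a step worth stating explicitly), whereas the paper merges them into \(1 - F(y) - \frac{1-\lambda}{4} f(y)(\widetilde{a} - B)\) and treats \(\widetilde{a} \le B\) separately.

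You do skip one sub-case of your Case 2: the rival may quote exactly \(A+\Delta\). After the raise \(i\) then ties and earns only \(g_B(A+\Delta)/2\) on the ask side, so \(g_B\) increasing is not by itself enough. It closes immediately with your own sign result: \(g_B(A+\Delta) < 0\) gives \(g_B(A+\Delta)/2 > g_B(A+\Delta) > g_B(A)\). This is the paper's separate case \(a_{-i} = a_i + \Delta\), and the same sub-case arises in the bid statement.
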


\begin{restatable}{lemma}{passiveTick} \label{lem:passive_tick}
    Fix \(\overline{B} < 1\) and \(\underline{A} > 0\). There exists \(C_0 > 1\), depending only on \(F\), \(\overline{B}\) and \(\underline{A}\), such that the following holds. Let \(\mathbf{s}=(\mathbf{b}, \mathbf{a}) \in \mathcal{S}^\Delta\), \(A \coloneq a^*(\mathbf{a})\), \(B \coloneq b^*(\mathbf{b})\), and let \(i\) be a maker. If \(B \le \overline{B}\) and \(\widetilde{a} \in \Gamma_\Delta\) satisfies \(a_{-i} \le \widetilde{a} \le a_i\) and \(\widetilde{a} \ge \alpha(B) + C_0 \Delta\), then lowering \(a_i\) to \(\widetilde{a} - \Delta\) weakly increases \(u_i\), strictly if \(a_{-i} \ge \widetilde{a} - \Delta\). Symmetrically, if \(A \ge \underline{A}\) and \(\widetilde{b} \in \Gamma_\Delta\) satisfies \(b_i \le \widetilde{b} \le b_{-i}\) and \(\widetilde{b} \le \beta(A) - C_0 \Delta\), then raising \(b_i\) to \(\widetilde{b} + \Delta\) weakly increases \(u_i\), strictly if \(b_{-i} \le \widetilde{b} + \Delta\).
\end{restatable}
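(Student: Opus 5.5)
The plan is to prove the ask statement directly; the bid statement then follows by the mirror argument, exchanging \(a,\alpha,\overline{B}\) with \(b,\beta,\underline{A}\). As in the rest of this section there are two makers, so \(a_{-i}\) is the rival ask. I would first fix a compact window of mid-prices in which every relevant quantity is uniformly controlled, and then split into cases according to where \(a_{-i}\) lies relative to \(\widetilde{a}-\Delta\).

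\textbf{Setup and uniform constants.} With a market bid \(B\) and best ask \(x\), the induced mid-price is \(m=(x+B)/2\). The total ask-side profit at that ask is
\[
\pi_B(x) \coloneq \mathbb{P}\bigl(V>m\bigr)\bigl(x-a(m)\bigr).
\]
The relevant asks satisfy \(x \ge \alpha(B)\ge \alpha(0)>0\) and \(x\le 1\). Combined with \(B\le\overline{B}<1\), every relevant mid-price then lies in a compact window \(K\coloneq[\alpha(0)/4,(1+\overline{B})/2]\subset(0,1)\). On \(K\) I will use four uniform bounds, all coming from \Cref{ass:learning} and compactness:
\begin{enumerate}
\item \(\mathbb{P}(V>m)\ge p_0>0\);
\item \(f\le \overline{f}\);
\item \(a'\le 2-b'\le 2-2c\) for some \(c>0\), since \(b'>0\) is continuous;
\item \(\pi_B\) is \(L\)-Lipschitz in \(x\), with \(L\) independent of \(B\le \overline{B}\).
\end{enumerate}
The map \(x\mapsto x-a((x+B)/2)\) has derivative \(1-a'/2\ge c\) and vanishes at \(\alpha(B)\). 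Hence \(x-a(m)\ge c\,(x-\alpha(B))\), which gives
\[
\pi_B(x)\ge c\,p_0\,(x-\alpha(B)).
\]

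\textbf{Case analysis.} Only two mid-prices occur: \(m=(\widetilde{a}+B)/2\) before the move and \(m'=(\widetilde{a}-\Delta+B)/2\) after it.
\begin{enumerate}
\item If \(a_{-i}<\widetilde{a}-\Delta\), then maker \(i\) is not best on the ask either before or after the move, and the mid-price is unchanged. Hence \(u_i\) is unchanged and the claim holds weakly.
\item If \(a_{-i}=\widetilde{a}-\Delta\), then before the move \(a_i\ge\widetilde{a}>a_{-i}\), so \(i\) earns nothing on the ask. After the move \(i\) ties at \(\widetilde{a}-\Delta\), and the mid-price is unchanged. The gain is therefore \(\tfrac12\pi_B(\widetilde{a}-\Delta)\ge \tfrac12 c\,p_0(C_0-1)\Delta>0\), which is strict.
\item If \(a_{-i}=\widetilde{a}\), the rival is best at \(\widetilde{a}\), so the market ask before the move is \(\widetilde{a}\) and \(m\) is as above. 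Before, \(i\) earns at most \(\tfrac12\pi_B(\widetilde{a})\) on the ask (exactly this if \(a_i=\widetilde{a}\), and \(0\) if \(a_i>\widetilde{a}\)). After, \(i\) is the unique best ask and earns \(\pi_B(\widetilde{a}-\Delta)\), while the mid-price falls from \(m\) to \(m'\).
\end{enumerate}

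\textbf{Bid-side loss from the mid-price shift (main obstacle).} In the last case the drop in the mid-price also changes \(i\)'s bid payoff, and controlling this is the part needing the most care. That payoff equals \(\tfrac{1}{n_B}\bigl(\int_0^{m} v\,dF(v)-B\,F(m)\bigr)\) if \(b_i=B\), and \(0\) otherwise. Its derivative in \(m\) is \(\tfrac{1}{n_B}f(m)(m-B)\), which is nonnegative and at most \(\overline{f}\) on \(K\). Since \(m-m'=\Delta/2\), the bid-side loss is therefore at most \(\overline{f}\Delta/2\).

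The ask-side net gain is
\[
\pi_B(\widetilde{a}-\Delta)-\tfrac12\pi_B(\widetilde{a})\ \ge\ \tfrac12\pi_B(\widetilde{a}-\Delta)-\tfrac12 L\Delta\ \ge\ \tfrac12\bigl(c\,p_0(C_0-1)-L\bigr)\Delta,
\]
where the first inequality uses the \(L\)-Lipschitz bound in the form \(\pi_B(\widetilde{a})\le\pi_B(\widetilde{a}-\Delta)+L\Delta\). Now choose
\[
C_0>1+\frac{L+\overline{f}}{c\,p_0},
\]
which depends only on \(F\) and \(\overline{B}\). Then the ask-side gain strictly exceeds the bid-side loss, so the move is strictly profitable in this case as well. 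This completes the ask statement.

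The delicate point I expect is verifying that every mid-price involved really stays inside \(K\), so that \(\overline{f}\), \(p_0\), \(c\) and \(L\) are uniform. The upper end is fine because \(\widetilde{a}\le1\) and \(B\le\overline{B}\). The lower end uses \(\widetilde{a}-\Delta\ge\alpha(B)+(C_0-1)\Delta\ge\alpha(0)\), together with the monotonicity of \(\alpha\) from the implicit-function formula for \(\alpha'\).
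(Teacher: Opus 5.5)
Your proposal is correct and follows essentially the same route as the paper. Both arguments use a compact window of mid-prices in \((0,1)\) on which \(1-F\), \(f\) and \(1 - a'/2\) are uniformly controlled, and the linear lower bound on the ask-side profit above \(\alpha(B)\). They use the same three-way split \(a_{-i} \le \widetilde{a} - 2\Delta\), \(a_{-i} = \widetilde{a} - \Delta\), \(a_{-i} = \widetilde{a}\), and choose \(C_0\) so that this linear ask-side gain beats the \(O(\Delta)\) Lipschitz loss and the \(O(\Delta)\) bid-side loss from the mid-price shift. The steps you leave implicit are routine: feasibility of the deviation (\(b_i \le B < \alpha(B) < \widetilde{a} - \Delta\)) and taking \(C_0\) as the maximum of the ask-side and bid-side constants.
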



\subsection{Correlated Equilibria}

Since the time-averaged play of no-internal-regret learners approaches the set of correlated equilibria~\citep{foster1997calibrated}, we study the latter.

\begin{definition}[Correlated Equilibrium] \label{def:ce}
    For \(\delta \ge 0\), a distribution \(\pi\) over \(\mathcal{S}^\Delta\) is a \emph{\(\delta\)-correlated equilibrium} (\(\delta\)-CE) of \(\mathcal{G}_\Delta\) if, for every maker \(i \in \mathcal{N}\) and every pair \(s_i, \widetilde{s}_i \in \mathcal{S}_i^\Delta\),
    \[
        \sum\nolimits_{\mathbf{s}_{-i}} \pi(s_i, \mathbf{s}_{-i}) \left( u_i(\widetilde{s}_i, \mathbf{s}_{-i}) - u_i(s_i, \mathbf{s}_{-i}) \right) \le \delta.
    \]
    A \(0\)-CE is simply called a \emph{correlated equilibrium} (CE).
\end{definition}

For a CE \(\pi\), let
\[
    a^{\min}(\pi) \coloneq \min_{\mathclap{\mathbf{s} \in \supp(\pi)}} \, a^*(\mathbf{a})
    \qquad \text{and} \qquad
    a^{\max}(\pi) \coloneq \max_{\mathclap{\mathbf{s} \in \supp(\pi)}} \, a^*(\mathbf{a})
\]
denote the extremal market asks on its support, and define \(b^{\min}(\pi)\) and \(b^{\max}(\pi)\) similarly.

\begin{restatable}{lemma}{cePriceBounds} \label{lem:ce_price_bounds}
    There exist \(C_0 > 1\) and \(\Delta_0 > 0\), depending only on \(F\), such that, for every \(\Delta \le \Delta_0\), every CE \(\pi\) of \(\mathcal{G}_\Delta\) satisfies
    \begin{align*}
        a^{\min}(\pi) &\ge \alpha(b^{\min}(\pi)) - \Delta, &
        b^{\max}(\pi) &\le \beta(a^{\max}(\pi)) + \Delta, \\
        a^{\max}(\pi) &\le \alpha(b^{\max}(\pi)) + C_0 \Delta, &
        b^{\min}(\pi) &\ge \beta(a^{\min}(\pi)) - C_0 \Delta.
    \end{align*}
\end{restatable}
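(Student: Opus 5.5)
The plan is to prove each of the four inequalities by contradiction. In each case we exhibit a maker \(i\), a recommended strategy \(s_i\) in the support of \(\pi\), and a single fixed deviation \(\widetilde{s}_i\) that weakly improves \(u_i\) against \emph{every} \(\mathbf{s}_{-i}\) with \(\pi(s_i,\mathbf{s}_{-i})>0\), and strictly against at least one of them. Summing over \(\mathbf{s}_{-i}\) in \Cref{def:ce} with \(\delta=0\) then gives a strictly positive gain, which is a contradiction. Two facts will be used throughout: \(\alpha,\beta\) are strictly increasing, and every profile in the support has market ask in \([a^{\min},a^{\max}]\) and market bid in \([b^{\min},b^{\max}]\). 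By symmetry between the two sides, we only treat the two ask inequalities; the bid inequalities follow by the mirror argument.

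\emph{First inequality.} Suppose \(a^{\min}+\Delta<\alpha(b^{\min})\). Take a support profile whose market ask equals \(a^{\min}\), let \(i\) be a maker quoting it, and condition on \(s_i\), so that \(a_i=a^{\min}\). In every profile \((s_i,\mathbf{s}_{-i})\) in the support, the market ask is at most \(a_i\) and at least \(a^{\min}\), hence equals \(a_i\), so \(i\) sits at the best ask. The market bid \(B\) there satisfies \(B\ge b^{\min}\), so \(a_i+\Delta<\alpha(b^{\min})\le\alpha(B)\). \Cref{lem:aggressive_tick} then shows that raising \(a_i\) by one tick strictly increases \(u_i\) in every such profile. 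This deviation is feasible, since \(b_i\le a_i<a_i+\Delta\le 1\), and it contradicts the CE property.

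\emph{Third inequality.} First fix \(\overline{B}\) and \(\underline{A}\) uniformly, so that the constant \(C_0\) of \Cref{lem:passive_tick} depends only on \(F\). The already-proved second inequality gives \(b^{\max}\le\beta(a^{\max})+\Delta\le\beta(1)+\Delta\), with \(\beta(1)<1\). So for \(\Delta\le\Delta_0\) we may take \(\overline{B}\coloneq(1+\beta(1))/2\), and symmetrically \(\underline{A}\coloneq\alpha(0)/2\).

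Now suppose \(a^{\max}>\alpha(b^{\max})+C_0\Delta\). Take a support profile with market ask \(a^{\max}\), and let \(i\) be the maker with the \emph{larger} ask in it. Then \(a_i\ge a^{\max}\), and the rival ask in that profile equals \(a^{\max}\). Set \(\widetilde{a}\coloneq a^{\max}\) and consider the deviation that lowers \(a_i\) to \(\widetilde{a}-\Delta\); here \(\widetilde{a}\ge\alpha(b^{\max})+C_0\Delta\ge\alpha(B)+C_0\Delta\) for every market bid \(B\) in the support.

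There are two cases for the recommended \(a_i\).
\begin{itemize}
\item If \(a_i>a^{\max}\), then every conditioned profile has \(a_{-i}\le a^{\max}=\widetilde{a}\le a_i\). \Cref{lem:passive_tick} gives a weak improvement in each such profile, and a strict one in the generating profile, where \(a_{-i}=\widetilde{a}\ge\widetilde{a}-\Delta\).
\item If \(a_i=a^{\max}\), some conditioned profiles may have \(a_{-i}>a_i\), which violates the hypothesis \(a_{-i}\le\widetilde{a}\) of \Cref{lem:passive_tick}.
\end{itemize}

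\emph{Main obstacle.} This second case is where I expect the difficulty. There \(i\) is the sole best ask at a level at least \(\alpha(B)+C_0\Delta\), and lowering it by one tick changes only the margin and the induced mid-price, not the allocation share. I would handle it by rerunning the one-sided estimate from the proof of \Cref{lem:passive_tick}. That estimate compares the margin lost, of order \(\Delta\,\mathbb{P}(V>m)\), with the gain from the ask moving toward its zero-profit level. When the ask exceeds \(\alpha(B)\) by \(C_0\Delta\), the ask payoff is locally decreasing in the ask, with slope bounded away from zero by the Lipschitz and \(f>0\) bounds on the compact range \(B\le\overline{B}\). This yields a weak improvement, enlarging \(C_0\) if needed. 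The bid payoff of \(i\) is unchanged or only improved by the lower mid-price. Summing the weak gains together with the strict gain at the generating profile again contradicts \Cref{def:ce}.
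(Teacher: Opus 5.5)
\textbf{There is a genuine gap, in the case \(a_i = a^{\max}\) of the third inequality.} Your first inequality, your choice of \(\overline{B}\) and \(\underline{A}\), and your case \(a_i > a^{\max}\) match the paper's argument. In the remaining case your patch has the wrong sign.

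For a maker alone at the best ask \(\widetilde{a}\) against market bid \(B\), the ask-side payoff \((1-F(y))(\widetilde{a}-a(y))\), with \(y=(\widetilde{a}+B)/2\), has derivative \(1-F(y)-\tfrac14 f(y)(\widetilde{a}-B)\). At \(\widetilde{a}=\alpha(B)\) this equals \((1-F(y))(1-a'(y)/2)>0\); this is the computation behind \Cref{lem:aggressive_tick}. So just above \(\alpha(B)\) the sole-best-ask payoff is increasing in the ask, not decreasing. Moving a sole ask toward its zero-profit level moves its profit toward zero. Because \(C_0\) is fixed and \(C_0\Delta\to 0\), enlarging \(C_0\) does not help.

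Your bid-side claim is also reversed. The term \(\lambda F(y)(b(y)-B)\) has derivative \(\tfrac{\lambda}{4}f(y)(\widetilde{a}-B)\ge 0\) in \(\widetilde{a}\), so lowering the mid-price weakly hurts the deviator's bid side.

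Consequently, at conditioned profiles where \(i\) is alone at \(a^{\max}\), the one-tick undercut is a strict loss of order \(\Delta\) whenever \(a^{\max}\) is within a few ticks of \(\alpha(B)\). Your hypothesis does not exclude this. Since \(\pi\) may weight those profiles arbitrarily more than the tied ones, summing over \(\mathbf{s}_{-i}\) gives no contradiction. The gain in \Cref{lem:passive_tick} comes entirely from capturing allocation share, which is exactly why it requires \(a_{-i}\le\widetilde{a}\).

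\textbf{The missing idea is combinatorial, not analytic: choose the profile and the maker so that this configuration cannot occur.}
\begin{itemize}
\item If some supported profile with market ask \(a^{\max}\) has a maker quoting strictly above \(a^{\max}\), use that profile and that maker; this is your first case.
\item Otherwise both makers quote \(a^{\max}\) at every supported profile with market ask \(a^{\max}\). Take any such profile and either maker \(i\). A supported profile in which \(i\) plays \(s_i\) and \(a_{-i}>a^{\max}\) would have market ask \(a^{\max}\) quoted by \(i\) alone, contradicting the case assumption. So \(a_{-i}\le a^{\max}=\widetilde{a}\) at every conditioned profile, and \Cref{lem:passive_tick} applies directly, with strict improvement at the generating profile.
\end{itemize}
Equivalently, in your second case, any conditioned profile with \(a_{-i}>a_i\) is itself a supported profile with market ask \(a^{\max}\) and a maker strictly above it. You can restart your first case from that profile with the makers' roles swapped. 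The mirrored argument for \(b^{\min}\ge\beta(a^{\min})-C_0\Delta\) needs the same repair.
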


In words, up to one tick below and \(C_0\) ticks above, the market asks of any CE are bracketed by the zero-profit responses to its extremal bids; symmetrically for bids.


\subsection{Grid Correlated Equilibria Approach Nash}

For a profile \(\mathbf{s}\), let \(D_{\mathrm{env}}(\mathbf{s}) \coloneq \min_{(x, y) \in \mathcal{R}^{\mathrm{NE}}} (|b^*(\mathbf{b}) - x| + |a^*(\mathbf{a}) - y|)\) be the \(\ell_1\)-distance between its market prices and the Nash price envelope, and define the grid approximation error
\[
    \varepsilon_\Delta \coloneq \sup_{\pi} \max_{\mathbf{s} \in \supp(\pi)} D_{\mathrm{env}}(\mathbf{s}),
\]
where \(\pi\) ranges over all CEs of \(\mathcal{G}_\Delta\).

\begin{restatable}[Convergence of Grid Correlated Equilibria]{theorem}{gridCE} \label{thm:grid_ce}
    \(\varepsilon_\Delta \to 0\) as \(\Delta \to 0\).
\end{restatable}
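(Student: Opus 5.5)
The proof will turn the four inequalities of \Cref{lem:ce_price_bounds} into a perturbed fixed-point system and then invoke \Cref{lem:no_cycle} together with compactness. Fix any CE \(\pi\) of \(\mathcal{G}_\Delta\) with \(\Delta \le \Delta_0\), and write \(a^{\min}, a^{\max}, b^{\min}, b^{\max}\) for its extremal market prices. Every \(\mathbf{s} \in \supp(\pi)\) has market prices in \([b^{\min}, b^{\max}] \times [a^{\min}, a^{\max}]\). It therefore suffices to show that \(a^{\min}\) and \(a^{\max}\) lie within \(o(1)\) of \([a^-, a^+]\), and that \(b^{\min}\) and \(b^{\max}\) lie within \(o(1)\) of \([b^-, b^+]\), uniformly in \(\pi\). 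The \(\ell_1\) distance to \(\mathcal{R}^{\mathrm{NE}}\) is then at most the sum of the two one-dimensional distances.

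\textbf{Step 1: keep the extremal prices away from the boundary.} The responses \(\alpha\) and \(\beta\) are Lipschitz only on compact subintervals, so the constant \(C_0\) in \Cref{lem:passive_tick} depends on \(\overline{B}\) and \(\underline{A}\). I first want \(b^{\max} \le \overline{B} < 1\) and \(a^{\min} \ge \underline{A} > 0\) for constants depending only on \(F\). I expect \Cref{lem:ce_price_bounds} already encodes this, since its \(C_0\) depends only on \(F\). If a direct argument is needed, I would use \Cref{lem:aggressive_tick}. A best bid near \(1\) gives \(\beta(A) < A \le 1\), and the gap \(A - \beta(A)\) is bounded below uniformly for \(A\) near \(1\) because \(\beta(1) < 1\). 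Hence such a bid satisfies \(b - \Delta > \beta(A)\) and can be profitably lowered. Extremal profiles in the support must survive such deviations in aggregate, and this is where extremality of \(b^{\max}\) is used.

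\textbf{Step 2: chain the inequalities.} From the second and fourth inequalities of \Cref{lem:ce_price_bounds}, together with monotonicity of \(\alpha\) and its Lipschitz constant \(L\) on the relevant compact set, I get
\[
    a^{\max} \le \alpha(b^{\max}) + C_0\Delta \le \alpha(\beta(a^{\max}) + \Delta) + C_0\Delta \le \alpha(\beta(a^{\max})) + (L + C_0)\Delta .
\]
This gives \(a^{\max} - \alpha(\beta(a^{\max})) \le K\Delta\). Symmetrically, the first and fourth inequalities give
\[
    a^{\min} \ge \alpha(b^{\min}) - \Delta \ge \alpha(\beta(a^{\min}) - C_0\Delta) - \Delta \ge \alpha(\beta(a^{\min})) - K\Delta ,
\]
so \(\alpha(\beta(a^{\min})) - a^{\min} \le K\Delta\). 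The analogous bounds hold for bids, using \(\beta\circ\alpha\).

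\textbf{Step 3: compactness.} Define \(\phi(A) \coloneq A - \alpha(\beta(A))\). It is continuous on the compact set \([\underline{A}, 1]\), and by \Cref{lem:no_cycle} it is strictly negative for \(A < a^-\), zero on \([a^-, a^+]\), and strictly positive for \(A > a^+\). For any \(\eta > 0\), let
\[
    c_\eta \coloneq \min\bigl\{|\phi(A)| : A \in [\underline{A}, 1],\ \operatorname{dist}(A, [a^-, a^+]) \ge \eta\bigr\} > 0 .
\]
Suppose \(a^{\max} \ge a^+ + \eta\). Then \(\phi(a^{\max}) \ge c_\eta\), which contradicts Step 2 once \(K\Delta < c_\eta\). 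Suppose instead \(a^{\max} \le a^- - \eta\). Then \(a^{\min} \le a^{\max} \le a^- - \eta\), so \(-\phi(a^{\min}) \ge c_\eta\), again a contradiction. The symmetric argument handles \(a^{\min}\) and both bid extremes. Hence every extreme is within \(\eta\) of its target interval once \(\Delta < \min(\Delta_0, c_\eta/K)\), and so \(\varepsilon_\Delta \le 2\eta\).

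\textbf{Main obstacle.} I expect the hardest part to be the uniform boundary control of Step 1. Lipschitz constants of \(\alpha\) and \(\beta\) blow up near the endpoints, because \(a' \to 2\) may occur near \(0\) or \(1\). The argument breaks if extremal prices can approach \(0\) or \(1\), so I would establish Step 1 first before running Steps 2 and 3.
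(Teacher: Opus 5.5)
Your proof is correct and follows the paper's route: you chain the four bounds of \Cref{lem:ce_price_bounds} into approximate fixed-point inequalities for \(\alpha \circ \beta\) at the extremal asks and conclude with \Cref{lem:no_cycle}. The paper differs only in presentation: it uses a subsequence limit instead of your uniform margin \(c_\eta\), and it gets the bid bounds from the ask bounds via \(b^{\min} \ge \beta(a^{\min}) - C_0\Delta\), \(b^{\max} \le \beta(a^{\max}) + \Delta\) and continuity of \(\beta\), which spares you the \(\beta \circ \alpha\) analogue of \Cref{lem:no_cycle}. The boundary control you call the main obstacle follows directly from that lemma: \(a^{\min} \ge \alpha(b^{\min}) - \Delta \ge \alpha(0) - \Delta\) and \(b^{\max} \le \beta(a^{\max}) + \Delta \le \beta(1) + \Delta\), with \(\alpha(0) > 0\) and \(\beta(1) < 1\). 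One small fix: in your \(a^{\min}\) chain, truncate \(\beta(a^{\min}) - C_0\Delta\) at \(0\) before applying \(\alpha\), or shift upward as the paper does.
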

\begin{proofsketch}
    Combining the bounds of \Cref{lem:ce_price_bounds}, and using that \(\alpha\) is Lipschitz, the extremal asks of any CE satisfy the two-cycle condition \(\alpha(\beta(A)) = A\) up to an error of order \(\Delta\). By \Cref{lem:no_cycle}, this forces their limit points into \([a^-, a^+]\) as \(\Delta \to 0\), and the continuity of \(\beta\) carries the conclusion over to the extremal bids and \([b^-, b^+]\). Since every supported market price lies between the extremal ones, the convergence is uniform over CEs. The full proof is given in \Cref{app:proofs_learning}.
\end{proofsketch}

In particular, if \(\mathcal{F} = \{m^*\}\), then \(\mathcal{R}^{\mathrm{NE}}\) collapses to the PSNE prices \((b(m^*), a(m^*))\), and the market prices of all grid CEs converge uniformly to them.


\subsection{No-Internal-Regret Learning}

\Cref{thm:grid_ce} extends to approximate CEs, and hence to learning dynamics.

\begin{restatable}[Approximate Correlated Equilibria]{lemma}{approxCE} \label{lem:approx_ce}
    Fix \(\Delta > 0\) with \(1/\Delta \in \mathbb{N}\). If \(\pi_n\) is a \(\delta_n\)-CE of \(\mathcal{G}_\Delta\) for every \(n\), with \(\delta_n \to 0\), then
    \[
        \limsup_{n} \mathbb{E}_{\mathbf{s} \sim \pi_n}[D_{\mathrm{env}}(\mathbf{s})] \le \varepsilon_\Delta~.
    \]
\end{restatable}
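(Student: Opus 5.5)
The argument is a compactness argument on the finite simplex of distributions over \(\mathcal{S}^\Delta\). Since \(\Delta\) is fixed, \(\mathcal{S}^\Delta\) is finite. The set \(\Pi\) of probability distributions over it is a compact subset of a finite-dimensional simplex, and the payoffs \(u_i\) are bounded real numbers on \(\mathcal{S}^\Delta\) (indeed \(|u_i| \le 1\), since quotes and values lie in \([0,1]\)). I will pass to a subsequence realizing the \(\limsup\) and extract a limit distribution, which I will show is an exact CE. The bound then follows from the definition of \(\varepsilon_\Delta\).

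\textbf{Step 1: the limit is an exact CE.} Pick a subsequence \((n_k)\) such that \(\mathbb{E}_{\pi_{n_k}}[D_{\mathrm{env}}]\) converges to the \(\limsup\). By compactness of \(\Pi\), pass to a further subsequence with \(\pi_{n_k} \to \pi\) pointwise on \(\mathcal{S}^\Delta\). For each maker \(i\) and each pair \(s_i, \widetilde{s}_i\), the CE constraint
\[
    \sum\nolimits_{\mathbf{s}_{-i}} \pi(s_i, \mathbf{s}_{-i}) \left( u_i(\widetilde{s}_i, \mathbf{s}_{-i}) - u_i(s_i, \mathbf{s}_{-i}) \right) \le \delta
\]
is linear in the distribution and is a finite sum. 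Letting \(k \to \infty\) with \(\delta_{n_k} \to 0\) therefore gives the same inequality for \(\pi\) with \(\delta = 0\). Hence \(\pi\) is a CE of \(\mathcal{G}_\Delta\).

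\textbf{Step 2: convergence of the expectations.} By the definition of \(\varepsilon_\Delta\), \(D_{\mathrm{env}}(\mathbf{s}) \le \varepsilon_\Delta\) for every \(\mathbf{s} \in \supp(\pi)\), so \(\mathbb{E}_\pi[D_{\mathrm{env}}] \le \varepsilon_\Delta\). The function \(D_{\mathrm{env}}\) is a fixed, bounded function on the finite set \(\mathcal{S}^\Delta\); it is at most \(2\), being an \(\ell_1\) distance between points of \([0,1]^2\). Hence \(\mathbf{s} \mapsto D_{\mathrm{env}}(\mathbf{s})\) integrates continuously under pointwise convergence:
\[
    \mathbb{E}_{\pi_{n_k}}[D_{\mathrm{env}}] = \sum\nolimits_{\mathbf{s}} \pi_{n_k}(\mathbf{s}) D_{\mathrm{env}}(\mathbf{s}) \to \mathbb{E}_\pi[D_{\mathrm{env}}] \le \varepsilon_\Delta .
\]
The left-hand side converges to the \(\limsup\) by the choice of subsequence, which concludes the argument.

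\textbf{Main obstacle.} The only delicate point is a support subtlety. Support points of \(\pi_n\) that carry vanishing mass may lie far from the envelope and are not controlled individually. Working with expectations rather than with supports handles this, because such points contribute at most \(2\pi_n(\mathbf{s}) \to 0\) each. I also need to be sure that \(\varepsilon_\Delta\) is a supremum over all exact CEs of \(\mathcal{G}_\Delta\). It is, by definition, and \(\varepsilon_\Delta < \infty\) trivially. No use of \Cref{thm:grid_ce} is needed for this fixed-\(\Delta\) statement.
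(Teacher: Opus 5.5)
Your proposal is correct and follows essentially the same route as the paper's proof. You pass to a subsequence realizing the $\limsup$, extract a convergent subsequence by compactness of the finite simplex, use linearity of the CE constraints and of $\mathbb{E}[D_{\mathrm{env}}]$ to obtain an exact CE limit, and bound its expectation by $\max_{\mathbf{s} \in \supp(\pi)} D_{\mathrm{env}}(\mathbf{s}) \le \varepsilon_\Delta$.
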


In the repeated grid game, at each round \(t\), the makers' learning algorithms select a profile \(\mathbf{S}_t = (S_{t,1}, S_{t,2}) \in \mathcal{S}^\Delta\), the value \(V_t \sim F\) is drawn independently of \(\mathbf{S}_t\) and of the past, and each maker observes only its realized payoff \(r_i(\mathbf{S}_t; V_t, d_t)\), where \(d_t\) is the trader's order. We use uppercase letters because both the profile and the value are random: the former through the learners' randomization and its dependence on past feedback. Maker \(i\) has \emph{vanishing internal regret} if, for every pair \(s_i, \widetilde{s}_i \in \mathcal{S}_i^\Delta\),
\[
    R_{T,i}(s_i, \widetilde{s}_i) \coloneq \frac{1}{T} \sum_{t=1}^{T} \mathbb{E}\left[ \left( u_i(\widetilde{s}_i, \mathbf{S}_{t,-i}) - u_i(\mathbf{S}_t) \right) \mathbb{I}_{\{S_{t,i} = s_i\}} \right]
\]
satisfies \(\limsup_{T} R_{T,i}(s_i, \widetilde{s}_i) \le 0\), as guaranteed, e.g., by regret-matching procedures~\citep{hart2000simple} and swap-regret minimizers~\citep{blum2007external}.

\begin{corollary}[No-Internal-Regret Learning] \label{cor:no_internal_regret_learning}
    Fix \(\Delta > 0\) with \(1/\Delta \in \mathbb{N}\). If both makers have vanishing internal regret, then
    \[
        \limsup_{T \to \infty} \frac{1}{T} \sum_{t=1}^{T} \mathbb{E}\left[ D_{\mathrm{env}}(\mathbf{S}_t) \right] \le \varepsilon_\Delta,
    \]
    and hence, by \Cref{thm:grid_ce}, the left-hand side vanishes as \(\Delta \to 0\).
\end{corollary}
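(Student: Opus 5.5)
The plan is to reduce the corollary to \Cref{lem:approx_ce} by working with the empirical (expected) distribution of play. For each \(T\), define the distribution \(\pi_T\) over \(\mathcal{S}^\Delta\) by
\[
    \pi_T(\mathbf{s}) \coloneq \frac{1}{T} \sum_{t=1}^{T} \mathbb{P}(\mathbf{S}_t = \mathbf{s}).
\]
This is a genuine probability distribution on the finite set \(\mathcal{S}^\Delta\). Because the left-hand side of the corollary is linear in the law of \(\mathbf{S}_t\), it equals \(\mathbb{E}_{\mathbf{s}\sim\pi_T}[D_{\mathrm{env}}(\mathbf{s})]\). It therefore suffices to show that \(\pi_T\) is a \(\delta_T\)-CE of \(\mathcal{G}_\Delta\) with \(\delta_T \to 0\).

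The key step is to rewrite the internal regret in terms of \(\pi_T\). Expanding the expectation over the finitely many outcomes of \(\mathbf{S}_t\), for every \(i\) and every pair \(s_i,\widetilde{s}_i\) we get
\[
    R_{T,i}(s_i,\widetilde{s}_i) = \sum\nolimits_{\mathbf{s}_{-i}} \pi_T(s_i,\mathbf{s}_{-i})\bigl(u_i(\widetilde{s}_i,\mathbf{s}_{-i})-u_i(s_i,\mathbf{s}_{-i})\bigr).
\]
This identity uses only that \(u_i\) is a deterministic function of the profile. The realized payoffs and the draws \(V_t\) do not appear in \(R_{T,i}\), which is already defined through expected payoffs. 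Independence of \(V_t\) from \(\mathbf{S}_t\) is what makes \(u_i(\mathbf{S}_t)\) the correct conditional mean of the feedback, but we do not need it here because we simply take the regret definition as given.

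Next, set
\[
    \delta_T \coloneq \max\Bigl\{0, \max_{i, s_i, \widetilde{s}_i} R_{T,i}(s_i,\widetilde{s}_i)\Bigr\},
\]
a maximum over a finite index set since \(N=2\) and \(\mathcal{S}_i^\Delta\) is finite. By construction \(\pi_T\) is a \(\delta_T\)-CE. The hypothesis \(\limsup_T R_{T,i}\le 0\) for each of the finitely many pairs gives \(\delta_T\to 0\). Applying \Cref{lem:approx_ce} to the sequence \((\pi_T)_T\) yields
\[
    \limsup_T \mathbb{E}_{\pi_T}[D_{\mathrm{env}}] \le \varepsilon_\Delta,
\]
which is the claimed bound. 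The final clause, that the left-hand side vanishes as \(\Delta\to 0\), then follows immediately from \Cref{thm:grid_ce}.

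I expect no real obstacle. The only point needing care is the identification of the regret with the CE deviation sum: \(\pi_T\) is the average of marginal laws, not a realized empirical frequency, so the argument stays at the level of expectations throughout. Finiteness of \(\mathcal{S}^\Delta\) is what makes the uniform passage from pairwise limsups to \(\delta_T\to0\) legitimate.
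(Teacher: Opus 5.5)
Your proposal is correct and follows the paper's proof essentially step for step: average the laws of \(\mathbf{S}_1, \dots, \mathbf{S}_T\), identify each CE deviation sum with \(R_{T,i}(s_i, \widetilde{s}_i)\), take \(\delta_T\) as the largest positive part of these finitely many regrets, and conclude with \Cref{lem:approx_ce} and \Cref{thm:grid_ce}. You are also right that independence of \(V_t\) is not needed for the identity itself. The paper invokes it only to remark that \(R_{T,i}\), which is defined through expected payoffs, coincides with the internal regret on realized payoffs.
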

\begin{proof}
    Since \(V_t\) is independent of \(\mathbf{S}_t\), \(R_{T,i}\) coincides with the internal regret on realized payoffs. Let \(\bar{\pi}_T\) be the average of the laws of \(\mathbf{S}_1, \dots, \mathbf{S}_T\). Each sum in \Cref{def:ce} for \(\bar{\pi}_T\) equals the corresponding \(R_{T,i}(s_i, \widetilde{s}_i)\), so \(\bar{\pi}_T\) is a \(\delta_T\)-CE, with \(\delta_T \to 0\) the largest positive part of these regrets. By linearity, \(\mathbb{E}_{\mathbf{s} \sim \bar{\pi}_T}[D_{\mathrm{env}}(\mathbf{s})]\) is the time average in the statement, and \Cref{lem:approx_ce} concludes.
\end{proof}

To summarize, on fine grids every CE, and hence the time-averaged play of no-internal-regret learners, keeps market prices close to the envelope \(\mathcal{R}^{\mathrm{NE}}\). The envelope is only an outer bound: our results do not exclude that learners occasionally play price pairs away from the \emph{Nash curve}, such as \((b^-, a^+)\), but they cannot settle on one. Indeed, if their time-averaged play converges to a single profile, the point mass on it is a CE of \(\mathcal{G}_\Delta\), whose extremal prices coincide, so \Cref{lem:ce_price_bounds} gives \(a^*(\mathbf{a}) \approx \alpha(b^*(\mathbf{b}))\) and \(b^*(\mathbf{b}) \approx \beta(a^*(\mathbf{a}))\), and \Cref{lem:no_cycle} forces its market prices close to the Nash curve.


\section{Simulations} \label{sec:simulations}

\begin{figure}[t]
    \centering
    \begin{subfigure}{0.4\columnwidth}
        \includegraphics[width=\linewidth]{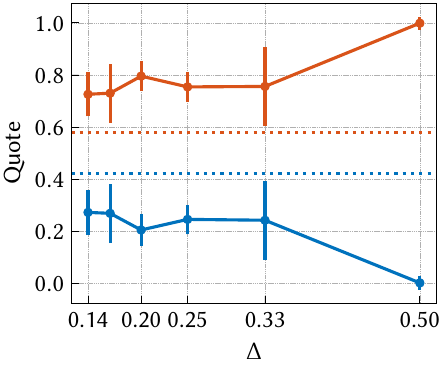}
    \end{subfigure}
    \hspace{.5cm}
    \begin{subfigure}{0.4\columnwidth}
        \includegraphics[width=\linewidth]{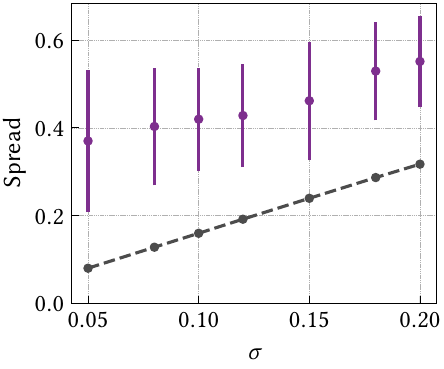}
    \end{subfigure}
    \Description{
        Two panels showing simulations of two learning market makers. In the left panel, as the tick size shrinks, the learned market ask and bid move toward the continuous Nash price envelope shown as dotted reference lines, without fully reaching it at the finest grid tested. In the right panel, at a fixed tick size, the learned spread increases monotonically with the standard deviation of the asset value, tracking the shape of the theoretical spread but sitting consistently above it.
    }
    \caption{
        Simulations of two learning market makers running swap-regret minimizers.
        \emph{Left:} As \(\Delta\) shrinks, the learned market ask (orange) and bid (blue) move toward the continuous Nash envelope (dotted lines), see \Cref{cor:no_internal_regret_learning}.
        \emph{Right:} At a fixed tick size, the learned spread (purple) grows monotonically with the standard deviation \(\sigma\) of the asset value, tracking the theoretical spread (black) while sitting consistently above it, see \Cref{prop:spread_volatility}.
        Error bars are the standard deviation of the learned quotes across independent replications.
    }
    \label{fig:simulation}
\end{figure}

We illustrate the theoretical results with two market makers running Exp3 inside the Blum--Mansour reduction to internal regret~\citep{cesa2006prediction, blum2007external}, against a Gaussian asset value with mean \(0.5\) clamped\footnote{The clamped mass is negligible for the values of \(\sigma\) we test.} to \([0,1]\). Every expert uses a learning rate \(\eta = 0.05\).

\Cref{fig:simulation} (left) fixes \(\sigma = 0.1\) and sweeps the tick size \(\Delta\), learning each grid for \(T = 100{,}000\) rounds and averaging over \(100\) runs. As \(\Delta\) shrinks, the learned market bid and ask move toward the Nash price envelope; the non-monotonicity is not surprising, since even the exact equilibria of the grid game need not vary smoothly in \(\Delta\) at this resolution.
\Cref{fig:simulation} (right) fixes \(\Delta = 0.1\) and sweeps \(\sigma\), learning each value for \(T = 1{,}500{,}000\) rounds and averaging over \(15\) runs. The learned spread grows monotonically and close to linearly with \(\sigma\), consistent with \Cref{prop:spread_volatility}.


\section{Conclusion and Future Work} \label{sec:conclusion}

We modeled market making as a game in which makers post bid--ask quotes and an inelastic trader, who knows the asset value, always trades at the best quotes, buying whenever the value exceeds the market mid-price and selling otherwise. In this game, pure-strategy Nash equilibria always exist, and all equilibrium mid-prices lie between two extremal ones. These equilibria also pin down how competitive spreads depend on risk: within location-scale families, they grow linearly with the volatility of the asset. Finally, we proved that, under an additional regularity condition, no-internal-regret learners recover this competitive benchmark, with time-averaged market prices that lie close to the Nash price envelope on a fine price grid.

This benchmark clarifies what competitive pricing looks like in adaptive markets. In particular, when the asset becomes more volatile, the spreads of all makers widen together, with no coordination among them. Since algorithmic-collusion tests are designed to flag exactly this kind of simultaneous widening across makers, they should control for volatility, which is nontrivial in markets with incomplete information, where volatility itself must be estimated. 

Future work includes a finer analysis of learning: we only showed that time-averaged market prices approach the Nash price envelope, leaving open whether learners concentrate on the Nash curve and at what rate. Other directions are to weaken the regularity required in \Cref{sec:learning}, to characterize mixed-strategy and correlated equilibria in the general model, and to extend the model to inventory, latency, heterogeneous order flow and time-varying fundamentals.

\clearpage



\begin{acks}
    LF acknowledges support by the MUR PRIN grant 2022EKNE5K (Learning in Markets and Society), by the FAIR (Future Artificial Intelligence Research) project, funded by the NextGenerationEU program within the PNRR-PE-AI scheme (M4C2, investment 1.3, line on Artificial Intelligence), and by the EU Horizon CL4-2022-HUMAN-02 research and innovation action under grant agreement 101120237, project ELIAS (European Lighthouse of AI for Sustainability).
    
    MO gratefully acknowledges Intesa Sanpaolo Innovation Center for partially funding his doctoral fellowship, as well as the financial support from the EU Horizon CL4-2022-HUMAN-02 research and innovation action under grant agreement 101120237, project ELIAS (European Lighthouse of AI for Sustainability).
\end{acks}



\bibliography{bibliography}

\clearpage


\appendix
\section{Appendix: Omitted Proofs} \label{app:proofs}

In this appendix, we provide the complete proofs of all technical results presented in the main text.


\subsection{Proofs of Section~\ref{sec:equilibrium}} \label{app:proofs_equilibrium}

\singleSidedDeviations*
\begin{proof}
    Let \(\widetilde{s}_i \coloneq (\widetilde{b}_i, \widetilde{a}_i)\) be a profitable deviation against \(\mathbf{s}_{-i}(\widehat{m})\), inducing mid-price \(z\). Since total profit is the sum of ask-side and bid-side profits, at least one side must yield strictly positive expected profit. Suppose first that the ask side is strictly profitable. Then \(\widetilde{a}_i < a(\widehat{m})\): if \(\widetilde{a}_i > a(\widehat{m})\), maker \(i\) captures no buy orders, while if \(\widetilde{a}_i = a(\widehat{m})\), the market ask is unchanged and the market bid can only rise, so \(z \ge \widehat{m}\) and its ask-side payoff \(\frac{1}{N} \mathbb{P}(V > z)(a(\widehat{m}) - a(z))\) is non-positive. Hence, maker \(i\) alone quotes the market ask, the midpoint is
    \[
        z = \dfrac{\widetilde{a}_i + \max(\widetilde{b}_i, b(\widehat{m}))}{2},
    \]
    and
    \[
        u_i^{\mathrm{ask}}(\widetilde{s}_i, \mathbf{s}_{-i}(\widehat{m})) = \mathbb{P}(V > z)\left(\widetilde{a}_i - a(z)\right) > 0,
    \]
    so that, in particular, the margin satisfies \(\widetilde{a}_i - a(z) > 0\).

    Now consider the ask undercut \((0, \widetilde{a}_i)\), which keeps the same ask quote and withdraws the bid. Under this single-sided deviation, the market bid reverts to \(b(\widehat{m})\), shifting the induced midpoint to
    \[
        y \coloneq \dfrac{\widetilde{a}_i + b(\widehat{m})}{2} \le z,
    \]
    since \(b(\widehat{m}) \le \max(\widetilde{b}_i, b(\widehat{m}))\). Given that \(a(\cdot)\) is nondecreasing, \(y \le z\) expands the margin,
    \[
        \widetilde{a}_i - a(y) \ge \widetilde{a}_i - a(z) > 0,
    \]
    as well as the execution probability,
    \[
        \mathbb{P}(V > y) \ge \mathbb{P}(V > z).
    \]
    Since maker \(i\) has withdrawn from the bid side, its bid profit under \((0, \widetilde{a}_i)\) is zero. Total profit under the new deviation therefore satisfies
    \begin{align*}
        u_i((0, \widetilde{a}_i), \mathbf{s}_{-i}(\widehat{m}))
        &= \mathbb{P}(V > y)\left(\widetilde{a}_i - a(y)\right) \\
        &\ge \mathbb{P}(V > z)\left(\widetilde{a}_i - a(z)\right) = u_i^{\mathrm{ask}}(\widetilde{s}_i, \mathbf{s}_{-i}(\widehat{m})) > 0.
    \end{align*}
    Hence, the ask undercut paired with bid withdrawal remains strictly profitable.

    Symmetrically, if the bid side is strictly profitable, then \(\widetilde{b}_i > b(\widehat{m})\), and withdrawing the ask, i.e., playing \((\widetilde{b}_i, 1)\), shifts the midpoint from \(z\) to
    \[
        x \coloneq \dfrac{a(\widehat{m}) + \widetilde{b}_i}{2} \ge z,
    \]
    which again enhances both margin and execution probability.
\end{proof}

\implicationsDeviations*
\begin{proof}
    We first prove (i). Fix \(x > \widehat{m}\). Under a bid raise by maker \(i\) (paired with ask withdrawal), the prevailing market ask remains \(a(\widehat{m})\). The induced midpoint \(x = (a(\widehat{m}) + \widetilde{b}_i)/2\) uniquely yields
    \[
        \widetilde{b}_i = 2x - a(\widehat{m}),
    \]
    which is strictly profitable if and only if
    \[
        b(x) - \widetilde{b}_i > 0 \iff 2x - b(x) < a(\widehat{m}).
    \]
    Substituting \(b(x) = 2H(x) - a(x)\) yields
    \[
        2\left(x - H(x)\right) + a(x) - a(\widehat{m}) < 0.
    \]
    Since \(x > \widehat{m}\) and \(a(\cdot)\) is nondecreasing, we have \(a(x) - a(\widehat{m}) \ge 0\). For the inequality to hold, it must be that
    \[
        2\left(x - H(x)\right) < a(\widehat{m}) - a(x) \le 0,
    \]
    which strictly requires \(H(x) > x\).

    Symmetrically, for (ii), fix \(y < \widehat{m}\). Under an ask undercut by maker \(i\) (paired with bid withdrawal), the prevailing market bid remains \(b(\widehat{m})\). The induced midpoint \(y = (\widetilde{a}_i + b(\widehat{m}))/2\) uniquely determines the required ask quote:
    \[
        \widetilde{a}_i = 2y - b(\widehat{m}).
    \]
    This undercut is strictly profitable if and only if
    \[
        \widetilde{a}_i - a(y) > 0 \iff 2y - a(y) > b(\widehat{m}).
    \]
    Substituting \(a(y) = 2H(y) - b(y)\) rewrites this profitability condition as
    \[
        2\left(y - H(y)\right) + b(y) - b(\widehat{m}) > 0.
    \]
    Since \(y < \widehat{m}\) and \(b(\cdot)\) is nondecreasing, we have \(b(y) - b(\widehat{m}) \le 0\). For the inequality to hold, it must be that
    \[
        2\left(y - H(y)\right) > b(\widehat{m}) - b(y) \ge 0,
    \]
    which strictly requires \(H(y) < y\).
\end{proof}

\extremalImmunity*
\begin{proof}
    By Knaster--Tarski, \(\mathcal{F}\) is itself a complete lattice~\citep{tarski1955lattice}, so \(m_{\min}\) and \(m_{\max}\) are well defined. It remains to rule out the corresponding one-sided deviation at each extremal point, which we do by contradiction.

    We first prove (i). Suppose, by contradiction, that a profitable ask undercut exists at \(m_{\min}\). By \Cref{lem:implications_deviations}, the induced midpoint \(y < m_{\min}\) satisfies \(H(y) < y\). Since \(H\) is nondecreasing and \(H(\underline{v}) > \underline{v}\), for every \(m \in [\underline{v}, y]\) we have
    \[
        \underline{v} < H(\underline{v}) \le H(m) \le H(y) < y.
    \]
    Hence, \(H\) maps the closed interval \([\underline{v}, y]\) into itself. Then, by the Knaster--Tarski fixed-point theorem, \(H\) admits a fixed point \(\check{m} \in [\underline{v}, y]\). Consequently, \(\check{m} \in \mathcal{F}\) satisfies \(\check{m} \le y < m_{\min} = \min \mathcal{F}\), a contradiction.

    Symmetrically, for (ii), suppose a profitable bid raise exists at \(m_{\max}\). By \Cref{lem:implications_deviations}, the induced midpoint \(x > m_{\max}\) satisfies \(H(x) > x\). Since \(H(\overline{v}) < \overline{v}\), for every \(m \in [x, \overline{v}]\) we have
    \[
        x < H(x) \le H(m) \le H(\overline{v}) < \overline{v}.
    \]
    Thus, \(H\) maps the closed interval \([x, \overline{v}]\) into itself, yielding a fixed point \(\check{m} \in [x, \overline{v}]\) by Knaster--Tarski. Consequently, \(\check{m} \in \mathcal{F}\) satisfies \(\check{m} \ge x > m_{\max} = \max \mathcal{F}\), a contradiction.
\end{proof}

\deviationsContagion*
\begin{proof}
    We first prove (i). Consider a profitable bid raise at \(m_1\) inducing a midpoint \(x > m_2\). By \Cref{lem:implications_deviations}, profitability implies
    \[
        2x - b(x) < a(m_1).
    \]
    Since \(m_1 < m_2\) and the function \(a(\cdot)\) is nondecreasing, we have \(a(m_1) \le a(m_2)\), which directly yields
    \[
        2x - b(x) < a(m_2).
    \]
    Since \(x > m_2\), \Cref{lem:implications_deviations} guarantees that the bid raise inducing midpoint \(x\) at \(m_2\) is likewise strictly profitable.

    Symmetrically, for (ii), a profitable ask undercut at \(m_2\) inducing midpoint \(y < m_1\) satisfies
    \[
        2y - a(y) > b(m_2).
    \]
    Since \(b(\cdot)\) is nondecreasing and \(m_1 < m_2\), we have \(b(m_2) \ge b(m_1)\), implying
    \[
        2y - a(y) > b(m_1).
    \]
    Since \(y < m_1\), \Cref{lem:implications_deviations} confirms that the ask undercut inducing midpoint \(y\) at \(m_1\) is also strictly profitable.
\end{proof}

\nashExistence*
\begin{proof}
    Since \(V\) is continuous, \(a(\cdot)\), \(b(\cdot)\), and hence \(H\) are continuous, so \(\mathcal{F}\) is closed and thus compact: it has minimum and maximum elements \(m_{\min}\) and \(m_{\max}\), and contains the infimum and supremum of any of its nonempty subsets. \Cref{lem:extremal_immunity} gives \(m_{\max}\in \mathcal{F}_B\), so \(\mathcal{F}_B\neq\varnothing\); since \(\mathcal{F}_B\subseteq\mathcal{F}\) is nonempty, \(m^-= \inf \mathcal{F}_B\) exists and lies in \(\mathcal{F}\). We show \(m^-\in \mathcal{F}_B\).
    
    Suppose not: a profitable bid raise at \(m^-\) induces some mid-price \(x>m^-\). Since \(m^-\) is a lower bound of \(\mathcal{F}_B\) but \(x\) is not (else \(x\) would be a lower bound exceeding the infimum), there is \(m_B\in \mathcal{F}_B\) with \(m^-\le m_B<x\). If \(m_B=m^-\), this contradicts \(m^-\notin \mathcal{F}_B\) outright; if \(m^-<m_B<x\), \Cref{lem:deviations_contagion} makes the same bid raise profitable at \(m_B\), contradicting \(m_B\in \mathcal{F}_B\). Hence \(m^-\in \mathcal{F}_B\), so \(m^-=\min \mathcal{F}_B\) and, in particular, admits no profitable bid raise.

    To show that \(m^-\) also admits no profitable undercut, suppose, for contradiction, that a profitable undercut at \(m^-\) induces \(y<m^-\); by \Cref{lem:implications_deviations},
    \[
        2y-a(y)>b(m^-) \qquad\text{and}\qquad H(y)<y .
    \]
    Since \(H(y)<y\), \(y\notin\mathcal{F}\), so \(y\neq m_{\min}\). Moreover, \(m^-\neq m_{\min}\): otherwise the assumed undercut at \(m^-\) would itself be a profitable undercut at \(m_{\min}\), contradicting \Cref{lem:extremal_immunity}. Since \(\mathcal{F}_B\subseteq\mathcal{F}\), \(m^-=\inf \mathcal{F}_B\ge\inf\mathcal{F}=m_{\min}\), so together with \(m^-\neq m_{\min}\) we get \(m_{\min}<m^-\). Neither can \(y<m_{\min}\): \Cref{lem:deviations_contagion} (with \(m_1=m_{\min}<m^-=m_2\)) would then make the same undercut profitable at \(m_{\min}\), contradicting \Cref{lem:extremal_immunity}. Hence
    \[
        m_{\min}<y<m^- .
    \]

    Let \(\mathcal{F}_y\coloneq\{m\in\mathcal{F} \mid m<y\}\); it is nonempty (it contains \(m_{\min}\)), and its supremum lies in \(\mathcal{F}\). As \(y\notin\mathcal{F}\), this supremum cannot equal \(y\), so it is attained: \(\mathcal{F}_y\) has a maximum \(m_k\coloneq\max\mathcal{F}_y<y<m^-\). Since \(m_k<m^-=\min \mathcal{F}_B\), \(m_k\notin \mathcal{F}_B\): a profitable bid raise at \(m_k\) induces some \(x>m_k\), and by \Cref{lem:implications_deviations},
    \[
        2x-b(x)<a(m_k)\qquad\text{and}\qquad H(x)>x .
    \]

    Since \(H(x) > x\), \(x\) is neither \(y\), as \(H(y) < y\), nor \(m^-\), as \(m^- \in \mathcal{F}\). Thus, the remaining viable positions for \(x\) are \(x < y\), \(y < x < m^-\), or \(x > m^-\):
    \begin{enumerate}
        \item Case \(x < y\): Since \(H(x)>x\), \(H(y)<y\), and \(H\) is nondecreasing, \(H\) maps \([x,y]\) into itself; by Knaster--Tarski, \(H\) admits a fixed point \(\check{m}\in[x,y]\). Since \(\check{m}\ge x>m_k\) and \(\check{m}\le y\) with \(\check{m}\ne y\) (as \(y\notin\mathcal{F}\)), we have \(m_k<\check{m}<y\), so \(\check{m}\in\mathcal{F}_y\), contradicting the maximality of \(m_k\) there.
        
        \item Case \(y < x < m^-\): As \(m_k < y\) and \(x < m^-\), monotonicity of \(a(\cdot)\) and \(b(\cdot)\) implies \(2x - b(x) < a(m_k) \le a(y)\) and \(b(x) \le b(m^-) < 2y - a(y)\). Combining these gives \(2x - b(x) < a(y) < 2y - b(x)\), which reduces to \(x < y\), contradicting \(y < x\).
        
        \item Case \(x > m^-\): Since \(m_k < m^- < x\), \Cref{lem:deviations_contagion} implies that the bid raise inducing midpoint \(x\) is also profitable at \(m^-\), contradicting \(m^- \in \mathcal{F}_B\).
    \end{enumerate}
    Every case is contradictory, so \(m^-\) admits no profitable undercut. By \Cref{lem:single_sided_deviations}, \(m^-\) is therefore immune to every profitable unilateral deviation, so \(\mathbf{s}(m^-)\) is a pure-strategy Nash equilibrium.
\end{proof}

\nashExistenceDuality*
\begin{proof}
    By \Cref{lem:extremal_immunity}, \(m_{\min}\in \mathcal{F}_A\), so \(\mathcal{F}_A\neq\varnothing\); since \(\mathcal{F}\) is compact (as noted in the proof of \Cref{thm:nash_existence}) and \(\mathcal{F}_A\subseteq\mathcal{F}\) is nonempty, \(m^+\coloneq\sup \mathcal{F}_A\) exists and lies in \(\mathcal{F}\). We show \(m^+\in \mathcal{F}_A\).

    Suppose not: a profitable undercut at \(m^+\) induces some \(y<m^+\). Since \(m^+\) is an upper bound of \(\mathcal{F}_A\) but \(y\) is not (else \(y\) would be an upper bound below the supremum), there is \(m_A\in \mathcal{F}_A\) with \(y<m_A\le m^+\). If \(m_A=m^+\), this contradicts \(m^+\notin \mathcal{F}_A\) outright; if \(y<m_A<m^+\), \Cref{lem:deviations_contagion} makes the same undercut profitable at \(m_A\), contradicting \(m_A\in \mathcal{F}_A\). Hence \(m^+\in \mathcal{F}_A\), so \(m^+=\max \mathcal{F}_A\) and, in particular, admits no profitable undercut. It remains to exclude a profitable bid raise.

    Suppose one exists at \(m^+\), inducing \(x>m^+\). If \(m^+=m_{\max}\), this bid raise is itself profitable at \(m_{\max}\), contradicting \Cref{lem:extremal_immunity}; hence \(m^+<m_{\max}\). Moreover \(x\le m_{\max}\): otherwise, since \(m^+<m_{\max}<x\), \Cref{lem:deviations_contagion} would carry the bid raise to \(m_{\max}\), again contradicting \Cref{lem:extremal_immunity}. Thus \(\mathcal{F}_x \coloneq \{m \in \mathcal{F} \mid m \ge x\}\) contains \(m_{\max}\) and, being closed and bounded, is compact, so
    \[
        m_k \coloneq \min \mathcal{F}_x
    \]
    is well defined. Since \Cref{lem:implications_deviations} gives \(H(x)>x\), \(x\) is not a fixed point, so \(m_k\neq x\), so \(m_k>x>m^+=\max \mathcal{F}_A\); in particular, \(m_k\notin \mathcal{F}_A\), so it admits a profitable undercut inducing some \(y<m_k\).

    This is the mirror image of the construction in \Cref{thm:nash_existence}: there, an undercut at \(m^-\) produced \(y\) and a bid raise at an auxiliary \(m_k\) produced \(x\); here, a bid raise at \(m^+\) produces \(x\) and an undercut at the auxiliary \(m_k\) produces \(y\). Since the undercut at \(m_k\) gives \(H(y) < y\), \(y\) is neither \(x\), as \(H(x) > x\), nor \(m^+\), as \(m^+ \in \mathcal{F}\), and the same three-way split on the position of \(y\) closes off identically:
    \begin{enumerate}
        \item Case \(y<m^+\): \Cref{lem:deviations_contagion} carries the undercut at \(m_k\) to \(m^+\), contradicting \(m^+\in \mathcal{F}_A\).
        
        \item Case \(m^+<y<x\): Monotonicity of \(a(\cdot), b(\cdot)\) collapses the two deviation inequalities to \(x<y\), contradicting \(y<x\).
        
        \item Case \(x<y\): \(H\) maps \([x,y]\) into itself, so Knaster--Tarski gives a fixed point strictly between \(x\) and \(y\), contradicting the minimality of \(m_k\).
    \end{enumerate}
    Hence no profitable bid raise exists at \(m^+\), and \Cref{lem:single_sided_deviations} makes \(\mathbf{s}(m^+)\) a pure-strategy Nash equilibrium.
\end{proof}


\subsection{Proofs of Section~\ref{sec:learning}} \label{app:proofs_learning}

\nashPriceInterval*
\begin{proof}
    We first show (i). By \Cref{thm:existence_zero_profit_midpoint}, \(\mathcal{F}\) is nonempty; equivalently, \(G(m) = m - H(m)\) has a zero. The function \(G\) is continuous and, by the second condition of \Cref{ass:learning}, nondecreasing, so its zero set is a closed interval: if \(G(m_1) = G(m_2) = 0\) for some \(m_1 < m_2\), monotonicity forces \(0 = G(m_1) \le G(m) \le G(m_2) = 0\) for every \(m \in [m_1, m_2]\), so the zero set is convex, and continuity makes it closed. Hence \(\mathcal{F}\) is a nonempty closed interval \([m^-, m^+]\).

    We now show (ii). Fix \(m^* \in \mathcal{F}\), and consider a unilateral deviation \((\widetilde{b}, \widetilde{a})\) from the symmetric zero-profit profile \(\mathbf{s}(m^*)\), where both makers play \((b(m^*), a(m^*))\):
    \begin{enumerate}
        \item Case \(\widetilde{a} \ge a(m^*)\) and \(\widetilde{b} \le b(m^*)\): The market prices, and hence the mid-price \(m^*\), are unchanged, so the deviator earns zero profit on each side, and the deviation is not profitable.

        \item Case \(\widetilde{a} < a(m^*)\) and \(\widetilde{b} \le b(m^*)\): On the ask side, the deviator is now uniquely best, so the market mid-price becomes
        \[
            y \coloneq \dfrac{\widetilde{a} + b(m^*)}{2} < m^*,
        \]
        and its ask-side profit is \(\mathbb{P}(V > y) \left( \widetilde{a} - a(y) \right)\); we show this margin is negative. Since \(G\) is nondecreasing by the second condition of \Cref{ass:learning} and \(y < m^*\), \(G(y) \le G(m^*) = 0\), i.e.,
        \[
            y \le H(y) = \dfrac{a(y) + b(y)}{2}
            \iff
            a(y) \ge 2y - b(y).
        \]
        Since \(b\) is strictly increasing by the first condition of \Cref{ass:learning} and \(y < m^*\), \(b(y) < b(m^*)\). Hence
        \[
            \widetilde{a} = 2y - b(m^*) < 2y - b(y) \le a(y),
        \]
        so \(\widetilde{a} - a(y) < 0\), and the ask side is loss-making. On the bid side, the deviator is either inactive, or ties at \(b(m^*)\) and earns \(\frac{1}{2} \, \mathbb{P}(V < y) \left( b(y) - b(m^*) \right) < 0\). Hence the deviation is not profitable.

        \item Case \(\widetilde{a} \ge a(m^*)\) and \(\widetilde{b} > b(m^*)\): Symmetric to the previous case, with the roles of \(a\) and \(b\) swapped.

        \item Case \(\widetilde{a} < a(m^*)\) and \(\widetilde{b} > b(m^*)\): The deviator is now uniquely best on both sides, so the market mid-price becomes
        \[
            z \coloneq \dfrac{\widetilde{a} + \widetilde{b}}{2},
        \]
        and its total profit is
        \[
            \mathbb{P}(V > z) \left( \widetilde{a} - a(z) \right) +
            \mathbb{P}(V < z) \left( b(z) - \widetilde{b} \right).
        \]
        We show that both margins are negative when \(z \ge m^*\); the case \(z < m^*\) is symmetric. Since \(a(\cdot)\) is strictly increasing by the first condition of \Cref{ass:learning} and \(z \ge m^*\), \(a(z) \ge a(m^*) > \widetilde{a}\), so the ask side is loss-making. For the bid side, since \(\widetilde{b} = 2z - \widetilde{a}\) and \(b(m^*) = 2m^* - a(m^*)\),
        \[
            \widetilde{b} - b(m^*)
            = 2(z - m^*) +
            \underbrace{\left( a(m^*) - \widetilde{a} \right)}_{> 0}
            > 2(z - m^*).
        \]
        By \Cref{ass:learning}, \(b'(m) \le 2 - a'(m) < 2\) on \((0, 1)\), so the mean value theorem gives \(b(z) - b(m^*) \le 2(z - m^*)\). Hence
        \[
            \widetilde{b} - b(m^*) > 2(z - m^*) \ge b(z) - b(m^*),
        \]
        so \(\widetilde{b} > b(z)\): the bid side is also loss-making, and the deviation is not profitable.
    \end{enumerate}
    Hence \(\mathbf{s}(m^*)\) is a PSNE, and \(m^*\) is a PSNE mid-price.
\end{proof}

\noCycle*
\begin{proof}
    Fix \(A \in (0, 1]\), and let \(m(A) \coloneq (A + \beta(A))/2\) be the mid-price induced by \(A\) and \(\beta(A)\), so that \(\beta(A) = b(m(A))\) by \Cref{def:zero_profit_responses}.

    We first reduce the claim to the sign of \(G\). As shown after \Cref{def:zero_profit_responses}, the map \(x \mapsto x - a((x + \beta(A))/2)\) is strictly increasing and vanishes exactly at \(x = \alpha(\beta(A))\). At \(x = A\), it equals
    \[
        A - a(m(A)) = 2 m(A) - b(m(A)) - a(m(A)) = 2 G(m(A)).
    \]
    Hence \(\alpha(\beta(A)) > A\), \(= A\), or \(< A\) according to whether \(G(m(A))\) is negative, zero, or positive.

    We now locate \(m(A)\). Since \(\beta\) is strictly increasing, so is \(m(\cdot)\). Moreover, \(\beta(a^\pm) = b^\pm\), so \(m(a^\pm) = H(m^\pm) = m^\pm\). Hence \(m(A) < m^-\) if \(A < a^-\), \(m(A) \in [m^-, m^+]\) if \(A \in [a^-, a^+]\), and \(m(A) > m^+\) if \(A > a^+\).

    Finally, \(G\) is nondecreasing by the second condition of \Cref{ass:learning}, and its zero set is exactly \(\mathcal{F} = [m^-, m^+]\) by \Cref{prop:nash_structure_regularity}. Hence \(G < 0\) on \([0, m^-)\), \(G = 0\) on \([m^-, m^+]\), and \(G > 0\) on \((m^+, 1]\), which gives the three cases of the claim.
\end{proof}

\aggressiveTick*
\begin{proof}
    We prove the ask statement; the bid statement is symmetric. Only \(a_i\) changes, so the market bid \(B\) and maker \(i\)'s share of the bid side, \(\lambda \coloneq \mathbb{I}_{\{b_i = B\}} / n_B(\mathbf{b}) \in \{0, \frac{1}{2}, 1\}\), remain fixed throughout. For \(\widetilde{a} \in [a_i, a_i + \Delta]\), let \(y \coloneq (\widetilde{a} + B)/2\), and let
    \[
        u_i(\widetilde{a})
        \coloneq
        u_i^{\mathrm{ask}}(\widetilde{a}) + u_i^{\mathrm{bid}}(\widetilde{a})
        =
        (1 - F(y))(\widetilde{a} - a(y)) + \lambda \, F(y)(b(y) - B)
    \]
    be maker \(i\)'s payoff if it alone quoted the market ask \(\widetilde{a}\), all other quotes unchanged, split into its ask-side and bid-side components.

    We first show that the ask side is loss-making, i.e., \(u_i^{\mathrm{ask}}(\widetilde{a}) < 0\), or equivalently \(\widetilde{a} < a(y)\). If \(B < 1\), the map \(\varphi(x) \coloneq x - a((x + B)/2)\) is strictly increasing on \([0, 1]\) and vanishes at \(x = \alpha(B)\), as noted after \Cref{def:zero_profit_responses}. Since \(\widetilde{a} \le a_i + \Delta < \alpha(B)\),
    \[
        \widetilde{a} - a(y) = \varphi(\widetilde{a}) < \varphi(\alpha(B)) = 0.
    \]
    If \(B = 1\), then \(\widetilde{a} < \alpha(1) = 1\), so \(\widetilde{a} < y < 1\) and hence \(\widetilde{a} < y < a(y)\).

    We now show that \(u_i\) is strictly increasing on \([a_i, a_i + \Delta]\). For \(\widetilde{a} \in (a_i, a_i + \Delta)\), we have \(0 < \widetilde{a} < \alpha(B) \le 1\) and \(0 \le B \le 1\), so \(y \in (0, 1)\). Writing \((1 - F(y))\, a(y) = \int_y^1 v f(v)\,dv\) and \(F(y)\, b(y) = \int_0^y v f(v)\,dv\), and using \(dy/d\widetilde{a} = 1/2\) and \(\widetilde{a} - y = y - B = (\widetilde{a} - B)/2\), we obtain
    \begin{align*}
        u_i'(\widetilde{a})
        &= 1 - F(y) - \dfrac{1}{2} f(y)(\widetilde{a} - y) + \dfrac{\lambda}{2} f(y)(y - B)\\
        &= 1 - F(y) - \dfrac{1 - \lambda}{4} \, f(y)(\widetilde{a} - B).
    \end{align*}
    The first term is the margin gained on every executed buy order, while the second is the net loss from the marginal trader at \(y\), who switches from buying at \(\widetilde{a}\) to selling at \(B\), of which maker \(i\) captures only a share \(\lambda\). If \(\widetilde{a} \le B\), then \(u_i'(\widetilde{a}) \ge 1 - F(y) > 0\). Otherwise, the ask-side loss shown above gives \(\widetilde{a} - y < a(y) - y\), so, using \(a'(y) = f(y)(a(y) - y)/(1 - F(y))\),
    \begin{align*}
        \dfrac{1 - \lambda}{4} \, f(y)(\widetilde{a} - B)
        &\le \dfrac{1}{2} f(y)(\widetilde{a} - y)\\
        &< \dfrac{1}{2} f(y)(a(y) - y)
        = \dfrac{1}{2} (1 - F(y)) \, a'(y).
    \end{align*}
    Hence \(u_i'(\widetilde{a}) > (1 - F(y))(1 - a'(y)/2) > 0\), since \(a'(y) \le 2 - b'(y) < 2\) by \Cref{ass:learning}. As \(u_i\) is continuous on \([a_i, a_i + \Delta]\), the mean value theorem shows that it is strictly increasing there.

    Finally, since \(a_i = A\) and prices lie on the grid, exactly one of the following holds:
    \begin{enumerate}
        \item Case \(a_{-i} \ge a_i + 2\Delta\): Maker \(i\) alone quotes the market ask before and after the deviation, so its payoff rises from \(u_i(a_i)\) to \(u_i(a_i + \Delta)\).

        \item Case \(a_{-i} = a_i + \Delta\): After the deviation, maker \(i\) ties at the market ask \(a_i + \Delta\). The mid-price is the same as if it were alone there, so its bid-side payoff is \(u_i^{\mathrm{bid}}(a_i + \Delta)\), while its ask-side payoff is halved. Its payoff is therefore \(u_i(a_i + \Delta) - \frac{1}{2} u_i^{\mathrm{ask}}(a_i + \Delta) > u_i(a_i + \Delta) > u_i(a_i)\).

        \item Case \(a_{-i} = a_i\): Before the deviation, maker \(i\) ties at \(a_i\) and earns \(\frac{1}{2} u_i^{\mathrm{ask}}(a_i) + u_i^{\mathrm{bid}}(a_i)\). After it, the opponent alone quotes the market ask \(a_i\), so the mid-price, and hence maker \(i\)'s bid-side payoff, is unchanged, while its ask-side payoff becomes zero, for a gain of \(-\frac{1}{2} u_i^{\mathrm{ask}}(a_i) > 0\).
    \end{enumerate}
    In every case, the deviation strictly increases \(u_i\), and it is feasible, since \(b_i \le a_i < a_i + \Delta < \alpha(B) \le 1\).
\end{proof}

\passiveTick*
\begin{proof}
    We prove the ask statement; the bid statement is symmetric. The deviation changes only \(a_i\), so the market bid \(B\) and maker \(i\)'s share \(\lambda\) of the bid side, as in the proof of \Cref{lem:aggressive_tick}, are fixed throughout. As in that proof, if maker \(i\) alone quoted the market ask \(\widetilde{a}\), its ask-side and bid-side payoffs would be
    \[
        u_i^{\mathrm{ask}}(\widetilde{a}) = (1 - F(y))(\widetilde{a} - a(y)),
    \quad \text{and} \quad
        u_i^{\mathrm{bid}}(\widetilde{a}) = \lambda \, F(y)(b(y) - B),
    \]
    where \(y \coloneq (\widetilde{a} + B)/2\).

    We first fix the constants. For \(\widetilde{a} \in [\alpha(B), 1]\), since \(\alpha(0) \le \alpha(B)\) and \(B \le \overline{B}\), the mid-price \(y\) lies in the compact interval
    \[
        K \coloneq \left[ \alpha(0)/2, \, (1 + \overline{B})/2 \right] \subset (0, 1).
    \]
    On \(K\), the functions \(1 - F\) and \(1 - a'/2\) are continuous and positive, the latter since \(a'(m) \le 2 - b'(m) < 2\) by \Cref{ass:learning}, so they are bounded below by some \(c_1, c_2 > 0\), respectively; moreover, \(f\) is bounded. Since the margin \(\widetilde{a} - a(y)\) vanishes at \(\widetilde{a} = \alpha(B)\) and has derivative \(1 - a'(y)/2 \ge c_2\), every \(\widetilde{a} \in [\alpha(B), 1]\) satisfies
    \[
        u_i^{\mathrm{ask}}(\widetilde{a}) \ge c \, (\widetilde{a} - \alpha(B)),
        \qquad c \coloneq c_1 c_2.
    \]
    Moreover, differentiating as in the proof of \Cref{lem:aggressive_tick},
    \[
        \frac{d}{d\widetilde{a}} u_i^{\mathrm{ask}}(\widetilde{a})
        = 1 - F(y) - \frac{1}{4} f(y)(\widetilde{a} - B),
    \quad \text{and} \quad
        \frac{d}{d\widetilde{a}} u_i^{\mathrm{bid}}(\widetilde{a})
        = \frac{\lambda}{4} f(y)(\widetilde{a} - B),
    \]
    which are both bounded in absolute value by some \(L > 0\) on \([\alpha(B), 1]\). The constants \(c\) and \(L\) depend only on \(F\) and \(\overline{B}\), and we choose \(C_0 > 1\) such that \(c C_0 > 4L\); for the bid statement, the symmetric construction yields constants depending only on \(F\) and \(\underline{A}\), and we take \(C_0\) large enough for both.

    Consider now the deviation from \(a_i\) to \(\widetilde{a} - \Delta\). Since \(C_0 > 1\),
    \[
        \widetilde{a} - \Delta \ge \alpha(B) + (C_0 - 1) \Delta > \alpha(B) > B \ge b_i,
    \]
    so the deviation is feasible, and \(u_i^{\mathrm{ask}}(\widetilde{a} - \Delta) > 0\) by the first bound above. Since prices lie on the grid and \(a_{-i} \le \widetilde{a} \le a_i\), one of the following cases holds:
    \begin{enumerate}
        \item Case \(a_{-i} \le \widetilde{a} - 2\Delta\): Maker \(i\) quotes above the market ask before and after the deviation, and market prices are unchanged, so its payoff is unchanged.

        \item Case \(a_{-i} = \widetilde{a} - \Delta\): Before the deviation, maker \(i\) quotes above the market ask, so its ask-side payoff is zero. After the deviation, it ties at the market ask, which is unchanged, so its bid-side payoff is unchanged and its ask-side payoff becomes \(\frac{1}{2} u_i^{\mathrm{ask}}(\widetilde{a} - \Delta) > 0\).

        \item Case \(a_{-i} = \widetilde{a}\): Before the deviation, maker \(i\)'s ask-side payoff is \(\frac{1}{2} u_i^{\mathrm{ask}}(\widetilde{a})\) if \(a_i = \widetilde{a}\) and zero otherwise, hence at most \(\frac{1}{2} u_i^{\mathrm{ask}}(\widetilde{a})\), since \(u_i^{\mathrm{ask}}(\widetilde{a}) \ge c C_0 \Delta > 0\). After the deviation, it alone quotes the market ask \(\widetilde{a} - \Delta\); by the mean value theorem, its ask-side payoff is \(u_i^{\mathrm{ask}}(\widetilde{a} - \Delta) \ge u_i^{\mathrm{ask}}(\widetilde{a}) - L\Delta\), while its bid-side payoff decreases by at most \(L \Delta\). Hence its total gain is at least
        \[
            \frac{1}{2} u_i^{\mathrm{ask}}(\widetilde{a}) - 2 L \Delta
            \ge \left( \frac{c C_0}{2} - 2L \right) \Delta > 0.
        \]
    \end{enumerate}
    Hence, the deviation weakly increases \(u_i\), strictly if \(a_{-i} \ge \widetilde{a} - \Delta\).
\end{proof}

\cePriceBounds*
\begin{proof}
    We prove the ask bounds; the bid bounds are symmetric. We drop the argument \(\pi\) of the extremal prices, and call a profile \emph{supported} if it lies in \(\supp(\pi)\). We repeatedly use the following consequence of \Cref{def:ce}: if, for a maker \(i\), a supported profile \(\mathbf{s}\), and some \(\widetilde{s}_i \in \mathcal{S}_i^\Delta\), switching from \(s_i\) to \(\widetilde{s}_i\) weakly increases \(u_i\) at every supported profile in which maker \(i\) plays \(s_i\), and strictly at \(\mathbf{s}\), then the CE constraint for \(s_i\) and \(\widetilde{s}_i\) is violated. Let \(\Delta_0 \coloneq \min\{\alpha(0), 1 - \beta(1)\}/2\), and let \(C_0\) be the constant of \Cref{lem:passive_tick} for \(\overline{B} \coloneq (1 + \beta(1))/2\) and \(\underline{A} \coloneq \alpha(0)/2\); both depend only on \(F\).

    We first show that \(a^{\min} \ge \alpha(b^{\min}) - \Delta\). Suppose not, and let \(\mathbf{s}\) be a supported profile with market ask \(a^{\min}\), quoted by maker \(i\). Every supported profile in which maker \(i\) plays \(s_i\) has market ask at least \(a^{\min} = a_i\), so maker \(i\) quotes the market ask there, and its market bid \(B \ge b^{\min}\) satisfies
    \[
        a_i + \Delta < \alpha(b^{\min}) \le \alpha(B).
    \]
    By \Cref{lem:aggressive_tick}, raising \(a_i\) to \(a_i + \Delta\) strictly increases \(u_i\) at every such profile, a contradiction. Symmetrically, \(b^{\max} \le \beta(a^{\max}) + \Delta\).

    By these two bounds and \(\Delta \le \Delta_0\), every supported profile satisfies
    \[
        B \le b^{\max} \le \beta(1) + \Delta \le \overline{B}
        \qquad \text{and} \qquad
        A \ge a^{\min} \ge \alpha(0) - \Delta \ge \underline{A},
    \]
    so \Cref{lem:passive_tick} applies at every supported profile.

    We now show that \(a^{\max} \le \alpha(b^{\max}) + C_0 \Delta\). Suppose not; since \(\alpha\) is increasing, every supported profile then satisfies \(a^{\max} > \alpha(B) + C_0 \Delta\). It suffices to find a supported profile \(\mathbf{s}\) with market ask \(a^{\max}\) and a maker \(i\) such that, at every supported profile in which maker \(i\) plays \(s_i\), the opponent's ask is at most \(a^{\max}\). Indeed, since \(a_i \ge a^{\max}\), \Cref{lem:passive_tick} with \(\widetilde{a} \coloneq a^{\max}\) then shows that lowering \(a_i\) to \(a^{\max} - \Delta\) weakly increases \(u_i\) at every such profile, and strictly at \(\mathbf{s}\), where \(a_{-i} \ge a^{\max}\), a contradiction. We find such \(\mathbf{s}\) and \(i\) as follows:
    \begin{enumerate}
        \item Case some supported profile \(\mathbf{s}\) with market ask \(a^{\max}\) has a maker \(i\) with \(a_i > a^{\max}\): Every supported profile in which maker \(i\) plays \(s_i\) has market ask at most \(a^{\max} < a_i\), by definition of \(a^{\max}\); hence the market ask there is quoted by the opponent, whose ask is therefore at most \(a^{\max}\).

        \item Case both makers quote \(a^{\max}\) at every supported profile with market ask \(a^{\max}\): Take any such \(\mathbf{s}\) and any maker \(i\). If, at some supported profile in which maker \(i\) plays \(s_i\), the opponent's ask exceeded \(a^{\max}\), that profile would have market ask \(a_i = a^{\max}\), quoted by maker \(i\) alone, contradicting the case assumption.
    \end{enumerate}
    Hence \(a^{\max} \le \alpha(b^{\max}) + C_0 \Delta\).
\end{proof}

\gridCE*
\begin{proof}
    It suffices to show that, for any tick sizes \(\Delta_n \to 0\) with \(1/\Delta_n \in \mathbb{N}\) and any CEs \(\pi_n\) of \(\mathcal{G}_{\Delta_n}\),
    \[
        \max_{\mathbf{s} \in \supp(\pi_n)} D_{\mathrm{env}}(\mathbf{s}) \to 0;
    \]
    choosing each \(\pi_n\) within \(\Delta_n\) of the supremum in \(\varepsilon_{\Delta_n}\) then gives \(\varepsilon_{\Delta_n} \to 0\). Write \(a_n^{\min} \coloneq a^{\min}(\pi_n)\), and similarly \(a_n^{\max}\), \(b_n^{\min}\) and \(b_n^{\max}\). Let \(C_0\) and \(\Delta_0\) be the constants of \Cref{lem:ce_price_bounds}; discarding finitely many terms, we may assume that \(\Delta_n \le \Delta_0\) and
    \[
        (1 + C_0) \Delta_n \le \frac{1 - \beta(1)}{2}.
    \]

    We first combine the four bounds of \Cref{lem:ce_price_bounds}, namely
    \begin{align*}
        \beta(a_n^{\min}) &\le b_n^{\min} + C_0 \Delta_n, &
        \alpha(b_n^{\min}) &\le a_n^{\min} + \Delta_n, \\
        b_n^{\max} &\le \beta(a_n^{\max}) + \Delta_n, &
        a_n^{\max} &\le \alpha(b_n^{\max}) + C_0 \Delta_n.
    \end{align*}
    In particular, \(b_n^{\min} \le b_n^{\max} \le \beta(1) + \Delta_n\), so, by the choice of \(\Delta_n\), the points \(b_n^{\min} + C_0 \Delta_n\) and \(\beta(a_n^{\max}) + \Delta_n\), and hence also \(b_n^{\min}\) and \(\beta(a_n^{\max})\), lie in \(J \coloneq [0, (1 + \beta(1))/2] \subset [0, 1)\), on which \(\alpha\) is Lipschitz with some constant \(L_\alpha\). Since \(\alpha\) is also increasing, combining the bounds in each row gives, for the first row,
    \begin{align*}
        \alpha(\beta(a_n^{\min}))
        &\le \alpha(b_n^{\min} + C_0 \Delta_n)\\
        &\le \alpha(b_n^{\min}) + L_\alpha C_0 \Delta_n
        \le a_n^{\min} + (1 + L_\alpha C_0) \Delta_n,
    \end{align*}
    and similarly, for the second row,
    \[
        a_n^{\max} \le \alpha(\beta(a_n^{\max})) + (L_\alpha + C_0) \Delta_n.
    \]
    In words, the extremal asks are approximate fixed points of \(\alpha \circ \beta\), up to an error of order \(\Delta_n\).

    We now show that \(\liminf_n a_n^{\min} \ge a^-\) and \(\limsup_n a_n^{\max} \le a^+\). If \(\liminf_n a_n^{\min} < a^-\), then \(a_n^{\min} \to \bar{a} < a^-\) along a subsequence, with \(\bar{a} \ge \alpha(0) > 0\), since \(a_n^{\min} \ge \alpha(b_n^{\min}) - \Delta_n \ge \alpha(0) - \Delta_n\) by \Cref{lem:ce_price_bounds} and the monotonicity of \(\alpha\). Since \(\alpha \circ \beta\) is continuous on \((0, 1]\), the first inequality gives \(\alpha(\beta(\bar{a})) \le \bar{a}\) in the limit, contradicting \Cref{lem:no_cycle}. Symmetrically, if \(a_n^{\max} \to \bar{a} > a^+\) along a subsequence, with \(\bar{a} \le 1\), the second inequality gives \(\bar{a} \le \alpha(\beta(\bar{a}))\), again contradicting \Cref{lem:no_cycle}.

    The bid bounds follow from the ask bounds. By \Cref{lem:ce_price_bounds}, \(b_n^{\min} \ge \beta(a_n^{\min}) - C_0 \Delta_n\) and \(b_n^{\max} \le \beta(a_n^{\max}) + \Delta_n\). Since \(\beta\) is increasing and continuous, and \(\beta(a^\pm) = b^\pm\),
    \[
        \liminf_n b_n^{\min} \ge b^-,
        \qquad
        \limsup_n b_n^{\max} \le b^+.
    \]

    Finally, every \(\mathbf{s} \in \supp(\pi_n)\) has \(a_n^{\min} \le a^*(\mathbf{a}) \le a_n^{\max}\) and \(b_n^{\min} \le b^*(\mathbf{b}) \le b_n^{\max}\). Since \(\mathcal{R}^{\mathrm{NE}}\) is a rectangle, \(D_{\mathrm{env}}(\mathbf{s})\) is the sum of the distances of \(b^*(\mathbf{b})\) from \([b^-, b^+]\) and of \(a^*(\mathbf{a})\) from \([a^-, a^+]\), so
    \begin{align*}
        \max_{\mathbf{s} \in \supp(\pi_n)} D_{\mathrm{env}}(\mathbf{s})
        &\le \max\{a^- - a_n^{\min}, 0\} + \max\{a_n^{\max} - a^+, 0\} \\
        &\quad + \max\{b^- - b_n^{\min}, 0\} + \max\{b_n^{\max} - b^+, 0\},
    \end{align*}
    and the right-hand side tends to zero by the four limits above.
\end{proof}

\approxCE*
\begin{proof}
    Pass to a subsequence along which \(\mathbb{E}_{\mathbf{s} \sim \pi_n}[D_{\mathrm{env}}(\mathbf{s})]\) converges to its limit superior. Since \(\mathcal{S}^\Delta\) is finite, the set of distributions over it is compact, so, passing to a further subsequence, \(\pi_n \to \pi\). Both the inequalities of \Cref{def:ce} and the expectation of \(D_{\mathrm{env}}\) are linear in the distribution, so, since \(\delta_n \to 0\), \(\pi\) is a CE, and
    \[
        \limsup_n \mathbb{E}_{\mathbf{s} \sim \pi_n}[D_{\mathrm{env}}(\mathbf{s})]
        = \mathbb{E}_{\mathbf{s} \sim \pi}[D_{\mathrm{env}}(\mathbf{s})] 
        \le \max_{\mathbf{s} \in \supp(\pi)} D_{\mathrm{env}}(\mathbf{s}) 
        \le \varepsilon_\Delta,
    \]
    where the last inequality holds by definition of \(\varepsilon_\Delta\).
\end{proof}


\end{document}